\newcommand{\hi}{\mathcal{H}}
\newcommand{\hio}{\mathcal{H}_{1}}
\newcommand{\hit}{\mathcal{H}_{2}}
\newcommand{\hith}{\mathcal{H}_{3}}
\newcommand{\his}{\mathcal{H}_{\mathcal{S}}}
\newcommand{\hir}{\mathcal{H}_{\mathcal{R}}}
\newcommand{\Y}{\yen}
\newcommand{\Eff}{\mathcal{E}}
\newcommand{\ip}[2]{\left\langle\,#1\,|\,#2\,\right\rangle} 
\newcommand{\ket}[1]{|#1\rangle} 
\newcommand{\bra}[1]{\langle#1|} 
\newcommand{\tr}[1]{\textrm{tr}\left[#1\right]} 
\newcommand{\R}{\mathcal{R}}
\newcommand*\colvec[3][]{
    \begin{pmatrix}\ifx\relax#1\relax\else#1\\\fi#2\\#3\end{pmatrix}
}
\renewcommand{\S}{\mathcal{S}}
\newcommand{\B}{\mathcal{B}}
\newcommand{\E}{\mathsf{E}}
\newcommand{\F}{\mathsf{F}}
\newcommand{\G}{\mathsf{G}}
\renewcommand{\P}{\mathsf{P}}
\renewcommand{\R}{\mathcal{R}}
\edef\restoreparindent{\parindent=\the\parindent\relax}
\newtheorem{theorem}{Theorem}[section] 
\newtheorem{proposition}[theorem]{Proposition}
\newtheorem{definition}[theorem]{Definition}
\newtheorem{example}[theorem]{Example}
\newcommand{\T}{\mathcal{T}}
\newcommand{\hirs}{\hir \otimes \his}
\newcommand\myemptypage{
    \null
    \thispagestyle{empty}
    \addtocounter{page}{-1}
    \newpage
    } 
\begin{document}



\begin{titlepage}
\newgeometry{left=1.6in, right=2in}
\vspace*{1.5cm}
\begin{center}
\noindent  \textcolor{gray}{\large Jan G{\l}owacki} \\
\vspace{1.5cm}

\noindent \textbf{\Large Operational Quantum Frames}
\vspace{0.5cm}

\noindent {\large An operational approach to quantum reference frames} \\

\vspace{2.5cm}
\noindent PhD Thesis \\
Supervisor: Marek Ku{\' s} \\
April 2023 \\

\vspace{0.7cm}
\noindent Polish Academy of Sciences \\
Center for Theoretical Physics \\

\vspace{4.5cm}
\begin{figure}[h]\centering
    \includegraphics[width=0.3\textwidth]{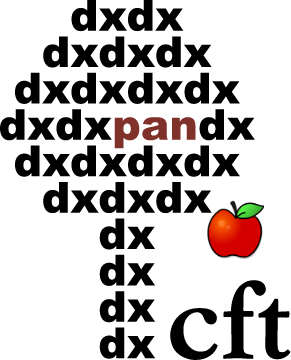}
\end{figure}
\end{center}

\end{titlepage}
\restoregeometry
\myemptypage 

\chapter*{Abstract} 
\pagenumbering{roman} 
\setcounter{page}{1}
The goal of this thesis is to provide a conceptual basis and general formal machinery for a \emph{relational} and \emph{operational} approach to the foundations of quantum theory. The framework 
builds on the ideas of so-called \emph{quantum reference frames} (QRF) and previous work on
\emph{quantum measurement theory} in the presence of symmetries. In a nutshell, the QRF program is based on the idea that reference frames should be treated as physical systems,
combined with an assumption that as such they should be modeled within quantum
mechanics. This perspective is aligned with insisting on relationality in physics, which
is understood as justifying the fact that observations are always made with respect to
some other, reference, system. Broadly speaking, physics should then be primarily concerned with relations between physical systems. In this work, we combine these insights
with the emphasis on operationality, understood as refraining from introducing into the
framework objects not directly related to in principle verifiable probabilities of measurement outcomes, and identifying the setups indistinguishable as such. Combining these insights
with intuitions from special relativity and gauge theory, we introduce an operational
notion of a quantum reference frame---which is defined as a \emph{quantum system equipped
with a covariant positive operator-valued measure} (POVM)---and build a framework
based on the concept of \emph{operational equivalence} that allows us to enforce operationality by quotienting the quantum state spaces with equivalence relation of indistinguishability by the available effects, assumed to be \emph{invariant under gauge
transformations}, and \emph{framed} in the sense of respecting the choice of the frame’s
POVM. Such effects are accessed via the \emph{yen} ($\Y^\R$) construction introduced in previous work, which maps effects on the system to those on the composite system, satisfying
gauge invariance and framing. Such effects are called \emph{relative}, and the classes of states
indistinguishable by them are referred to as relative states. These can be identified with
states on the system since they correspond to the image of the predual map $\Y^\R_*$.
We show that when the frame is \emph{localizable}, meaning that it allows for states that give
rise to a highly localized probability distribution of the frame’s observable, by restricting
the relative description upon such localized frame preparation we recover the usual,
non-relational formalism of quantum mechanics. We also provide a consistent way of
translating between different relative descriptions by means of frame-change maps,
and compare these with the corresponding notions in other approaches to QRFs, establishing
\emph{operational agreement} in the domain of common applicability.

\chapter*{Streszczenie}
 
Niniejsza rozprawa omawia podstawy koncepcyjne i wprowadza aparat matematyczny dla \emph{relacyjnego} i \emph{operacyjnego} podejścia do teorii kwantowej. Formalizm jest oparty na pojęciu \emph{kwantowego układu odniesienia} (Quantum Reference Frames -- QRF) i wcześniejszych rezultatach \emph{teorii pomiaru kwantowego} w obecności symetrii. W skrócie, program QRF party jest na założeniu, że układy odniesienia powinny być traktowane jako systemy fizyczne, w połączeniu z przekonaniem, że jako takie powinny być one modelowane w ramach mechaniki kwantowej. Perspektywa ta jest zgodna z postulatem relacyjności w fizyce, który mówi, że obserwacje zawsze są dokonywane w odniesieniu do jakiegoś ukłądu odniesienia. Szerzej rzecz ujmując, fizyka powinna zatem przede wszystkim zajmować się \emph{relacjami} między systemami fizycznymi. W tej pracy łączymy te spostrzeżenia z naciskiem na operacyjność, rozumianą jako powstrzymywanie się od wprowadzania do formalizmu obiektów niezwiązanych bezpośrednio z prawdopodobieństwami wyników co do zasady weryfikowalnych pomiarów, i co za tym idzie identyfikację opisów nierozróżnialnych jako takie. Dodając intuicje z Teorii Względności i Teorii Cechowania proponujemy operacyjną definicję kwantowego układu odniesienia - jako \emph{systemu kwantowego wyposażonego w kowariantną pozytywną miarę operatorową} (Positive Operator-Valued Measure -- POVM). Formalizm jest oparty na koncepcji \emph{ekwiwalencji operacyjnej}, która pozwala na wprowadzenie operacyjności w naszym rozumieniu poprzez zastosowanie relacji równoważności na przestrzeni stanów kwantowych ze względu na nierozróżnialność przez dostępne obserwable. Zakładamy, że są one \emph{niezmiennicze wobec transformacji cechowania}, rozumianych jako działanie elementu odpowiedniej grupy na układzie złożonym z systemu i układu referancyjnego, i \emph{referencyjne} w sensie poszanowania wyboru POVM układu odniesienia. Takie obserwable są dostępne za pośrednictwem konstrukcji \emph{yen} ($\Y^\R$), wprowadzonej w poprzednich pracach, która przekształca obserwable systemu na niezmiennicze względem transformacji cechowania i referencyjne obserwable układu złożonego. Takie efekty nazywane są \emph{względnymi}, a klasy stanów nierozróżnialnych przez nie \emph{stanami względnymi}. Mogą być one rozumiane jako stany systemu, ponieważ odpowiadają obrazowi mapy sprzężonej $\Y^\R_*$. Udowadniamy, że gdy układ odniesienia jest \emph{lokalizowalny}, co oznacza, że pozwala na stany prowadzące do silnie zlokalizowanych rozkładów prawdopodobieństwa obserwabli układu odniesienia, opis względem zlokalizowanego układu odniesienia odpowiada zwykłemu, nierelacyjnemu opisowi kwantowemu. Zapewniamy również spójny sposób tłumaczenia między różnymi opisami względnymi za pomocą \emph{map zmiany układów} i porównujemy je z odpowiadającymi pojęciami w innych podejściach do QRF - okazuje się, że procedury są \emph{operacyjnie nierozróżnialne} w obszarze wspólnej stosowalności.

\chapter*{Preface}
This thesis would not have been written had I never met Dr. Leon Loveridge. Our intense collaboration spanned over the last year, and was made possible by the ultimate freedom to follow my intuitions and engage in the research that I find worthy, granted by my Ph.D. supervisor Prof. Marek Ku{\'s}. It is not a summary of my research as a Ph.D candidate, that concerned areas as remote as categorical foundations for smooth geometry, the origin of the constraint's bracket of General Relativity, or reconstructions of quantum mechanics. This is because I feel the framework Leon was developing for the last decade, which I only discovered after meeting him at the Sejny Summer Institute workshop in 2021, approaches a complete form that deserves a fresh and perhaps more accessible presentation. I have to say that the progress we have made in understanding these ideas and developing the foundations and tools of the framework felt like a rapid development. Due to this intensity, part of me that is really glad that the process of writing the paper on operational quantum reference frame transformations, and this thesis, is almost behind me, while the other part can not wait to resolve the questions that remain open and complete the framework, which now seems within reach. Discovering Leon's universe of functional analysis employed in capturing relational paradigm for quantum physics was a revelation to me. It felt like I finally found myself in the realm of concepts and tools very natural to my taste and also powerful enough to allow turning my long-held intuitions into theorems that can be proved. This research influenced my thinking not only about operationality, relationality, and gauge principles but also about spatiotemporality, agency, space-time, and the emergence of gravity.

The work presented in this thesis stems from a project that seemed to be just a little note on frame changes based on the relativization construction when it was started, together with Titouan Carette, but then requested more and more care to fill in the gaps and forced a deeper understanding of the formalism as a whole. On the way, we were led to develop the concept of operational equivalence which is now of crucial importance and discover the class of framed effects that shed light on the relativization procedure on which the whole formalism was based and allowed for an operational understanding of the frame-change maps. As it stands now, besides filling the gaps along the lines already sketched, the algebraic and convex-theoretic generalizations of this formalism seem within reach, a whole world of possible applications waits to be analyzed, and novel perspectives on space-time emergence and understanding agency in quantum theory call to be explored, all within the broad operationally-relational paradigm that begins to stand on a firm formal foundation.

I need to confess, that entering the academic world as a job market made me treat the writing process a bit like leaving a testament. Hopefully, this is not the case -- I would miss $\Y$ a lot! It's been a demanding but beautiful journey.

\newpage
\chapter*{Acknowledgements}

Besides {\bf Dr. Leon Loveridge}, {\bf Prof. Marek Ku{\'s}} and {\bf Titouan Carette} that have already been mentioned, I wish to express here my gratitude towards the following human beings and collectives thereof:
\begin{itemize}
\vspace{5pt}
    \item[\ding{163}] my parents, {\bf Agnieszka} and {\bf Pawe\l}, who made me believe that I can follow my dreams and encouraged independence thought,\vspace{3pt}
    \item[\ding{163}] my grandparents, {\bf Augustyna} and {\bf Andrzej}, who supported me in various ways along my non-trivial educational path,\vspace{3pt}
    \item[\ding{163}] my first true supervisor, {\bf Prof. Klaas Landsman}, who encouraged me to study mathematics and stay bold in choosing my research problems,\vspace{3pt}
    \item[\ding{163}] my employer, {\bf Center for Theoretical Physics} that proved to be a place where an independent young researcher can develop,\vspace{3pt}
    \item[\ding{163}] my scientific community, {\bf Basic Research Community for Physics}, in which I found not only like-minded people and most important research collaborators but also true friends,\vspace{3pt}
    \item[\ding{163}] my friends in {\bf Centro Groupo Jobel}, artistic residence, and community, who provided me with ideal working conditions making the writing of this thesis as efficient and pain-free as possible,\vspace{3pt}
    \item[\ding{163}] entirety of my {\bf Friends}, which I can not enumerate, and who continue supporting me in my intellectual and other expeditions.
\end{itemize}

\tableofcontents




\pagestyle{fancy} 
\fancyhf{} 
\renewcommand{\headrulewidth}{0pt} 
\fancyhead[LE, RO]{\thepage} 
\fancyhead[RE, LO]{\leftmark} 
\setlength{\headheight}{14.49998pt} 


\mainmatter
\pagenumbering{arabic}
\chapter{Introduction}
The purpose of theoretical physics, as we understand it, is, or maybe should be, to provide coherent narratives for `how the physical world works' or, more modestly, for the ways we interact with it, accompanied by mathematically sound formalisms powerful enough to phrase and analyze measurement scenarios and derive predictions non-conflicting the experimental results. One such narrative, lacking to date both a firm mathematical foundation and verifiable experimental predictions not to be achieved otherwise, is that of the quantum reference frames (QRF) program. In short, it amounts to intersecting Einstein's view of reference frames being actual physical systems with the universality of quantum mechanics, thus forcing the reference frames to be modeled by quantum mechanics. Perhaps the main hopes driving the development of this research idea are that it may let us achieve the following:
\begin{itemize}
    \item resolve the problem of the arbitrariness of the Heisenberg's cut,
    \item make justice to the fact that observable quantities are relational,
    \item set the stage for the resolution of the problem of quantum gravity.
    \end{itemize}
Besides that, we believe that a properly developed QRF formalism will also point~to:
\begin{itemize}
    \item alternative to the troubled quantum field theoretic paradigm for foundations of relativistic quantum physics and beyond,
    \item deeper understanding of the interplay of quantum mechanics and relativity,
    \item novel reconstructions of quantum formalism from plausible physical principles,
    \item feasible approaches to space-time emergence.
\end{itemize}

We briefly comment on these hopes now, addressing them more formally either in the course of the presentation or in the discussion chapter \ref{ch:discuss}.

One may argue that the textbook quantum mechanics is a theory of quantum systems interacting with classical measuring instruments, thus making the relation between the classical and quantum theories somewhat obscure -- we would like to think about classical physics as being approximated by the more fundamental quantum formalism, while we still require the classical world for phrasing quantum mechanics. This unsatisfactory state of affairs was famously argued to be unavoidable by Bohr on the grounds of operationality understood as the need for classical communication of measurement results to assure the intersubjectivity of the theory. The QRF programs emphasize the need for treating all the systems taking part in deriving experimental results as quantum, contesting Heisenberg's cut and thus clarifying the desired ordering in the space of theories with quantum formalism being strictly more fundamental than the classical one. The strong form of operationality requested by Bohr is then lost, but we argue that its weaker, in our view more reasonable, form can be reconciled with the QRF ideas -- this is the subject of this thesis.

In QRF frameworks, the observable quantities are always understood as relational -- relative to a (quantum) reference frame, thus dealing with the relationality of observation explicitly, at least on the declarative/interpretational level.

The hope that such a way of thinking may help us understand gravity in the quantum realm stems from the realization that, when viewed relationally, space-time should rather be thought of as a collection of relative distances and time differences, and not as a background manifold with varying geometry. Thus taking the frames of reference that give rise to relational spatiotemporal quantities into the quantum realm may allow for the fully quantum picture of these notions, which we use to capture gravitational interactions in our best classical theories. Space-time as a, possibly dynamic, \emph{background} is neither operational nor relational.

Due to the lack of its rigorous non-perturbative implementation, the framework of quantum field theory (QFT), besides its tremendous success in providing correct predictions for the outcomes of experiments carried out in the quantum and relativistic regimes, is not fully satisfactory as a fundamental formalism for physics. When pushed further to the realm of theories of gravitation the situation gets only worse, and since this state of affairs persists over many decades now, it is perfectly reasonable to search for alternative paradigms and mathematical tools for relativistic quantum physics.

Quantum theory of systems embedded in space-time, as we normally view them, seems not to be independent of the features of this space-time. This manifests itself e.g. in the interplay of the allowed communication tasks and correlations with the group of symmetries of space (see e.g. \cite{Hhn2016,2022arXiv221014811J}).

Understanding such interplay may be useful for reconstructing quantum mechanics -- perhaps we should try reconstructing a relational framework, requiring compatibility with the spatiotemporal relations as we know them. Such a restatement of the problem may provide the missing ingredients of a satisfactory reconstruction.

There seems to be growing agreement in the foundations community that space-time should perhaps not be treated as a fundamental notion, and thus should somehow emerge from more fundamental ingredients of a framework, possibly alongside its non-trivial geometry capturing the presence of gravity. A framework concerned solely with quantum systems and their relations is a natural starting point for such considerations, aligned with the so-called `quantum first' approach to the matter.

The ideas underlying the QRF program are not new; they can be traced back to Arthur Eddington \cite{eddington1946fundamental}, with the contributions funding the subject due to Yakir Aharonov et.al \cite{aharonov1,Aharonov_2}. Besides the one presented here, the modern approaches, each aimed to some extent at addressing the high hopes summarised above, can be organized, with respect to their main sources of inspiration, into the following three non-independent sub-programs. The currently most popular approach is motivated by gauge theory and Dirac quantization of constrained systems, known as the \emph{perspective-neutral} approach \cite{vanrietvelde2018switching,krumm2021quantum,de2021perspective}. Another one is founded on a direct description of frame-change maps \cite{giacomini2019relativistic,de2020quantum}, and in some sense can be seen as semi-embedded in the perspective-neutral approach. The older developments, establishing the field, are based on information-theoretic ideas \cite{brs,castro2021relative}. The framework presented here, however very different in the motivations and concepts, is formally closest to this last theoretical development. It is worth emphasizing, that the different mentioned formalisms, besides sharing the core intuitions, differ significantly in terms of other physical principles that they are based on and the mathematical apparatus that is being used to implement them. Thus the perspective-neutral approach seems best suited to deal with gauge constrained systems, the approaches build around the frame changes seem appropriate for building informal\footnote{Considerable difficulties arise when e.g. notions like pure quantum states corresponding to a space-time metric are to be made rigorous.} models for quantum coordinate transformations, the information-theoretic considerations being best approached from the latter perspective. What distinguished the presented approach is its emphasis on operationality (defined below), which makes it useful for considerations not grounded in the concept of quantization of classical models. We propose a radically quantum approach decoupled from classical ways of thinking by beginning with general considerations of operationality and relationality in the presence of symmetries, with the framework taking a definite shape upon requiring the universality of quantum mechanics. It is hoped that this approach can be extended to the realm of Generalized Probabilistic Theories.

Before presenting the approach of this thesis, which we would like to call \emph{operational}, we would like to briefly point out the shortcomings and difficulties encountered by the existing approaches. We only mention our personal points of dissatisfaction, a full analysis of the landscape of QRF frameworks being far from the scope of the present work which is primarily aimed to communicate the operational framework in its fullest form.

Besides compromising operationality, all the formalisms developed so far lack solid mathematical foundations in the context of non-compact topological groups. Depending on the framework and acceptable level of rigor, without highly non-trivial extra assumptions, they are well-formulated for countable or compact groups. The perspective-neutral approach is believed to admit rigorous implementation in the realm of general Lie groups at the price of leaving the usual Hilbert space based framework for quantum mechanics and entering the realm of rigged Hilbert spaces and refined algebraic quantization. However, the rigging map that is being employed in the construction of the physical Hilbert space is only known to be defined under extra conditions, and even then it may lead to trivial physical inner-product giving rise to an empty physical Hilbert space \cite{giuliniUniquenessTheoremConstraint1999}. The only situation when the rigging map is known to be defined without extra assumptions is in the context of a compact group, when the construction is not needed. On top of these issues, the quantum reference frames considered in \cite{de2021perspective} are given by possibly distributional coherent state systems, contrary to our usage of the term where only coherent state systems understood as collections of vectors in the Hilbert space are allowed \ref{ex:css}. How this generality affects the prospects of the rigorous formulation of the rigging map and its implementation as the coherent averaging projector in the calculations presented in \cite{de2021perspective}, is never discussed. We believe these issues should be addressed before the perspective-neutral framework can be rigorously formulated in the setup of non-compact groups, while we agree with the authors that the passage from unimodular to non-unimodular non-compact groups does not seem problematic.

Let us now present the bird-eye view on the operational approach to quantum reference frames that is the focus of this thesis. It is a continuation of the theoretical efforts of \cite{loveridge2012quantum,lov6,lov1,lov2,lov4,loveridge2019relative,loveridge2020relational} that, as we will argue throughout this work, are beginning to culminate into a closed, full-fledged formalism for operational relational quantum kinematics. The resulting framework as we see it now can be understood as an implementation of the following principles.

\begin{enumerate}[label=\Roman*.]
    \item {\bf Operationality.} Since a physical theory is meant to model the aspects of the world that can (in principle) be falsified by means of experiment, it should be concerned with assigning probabilities to propositions, verifiable in measurement scenarios. We call a framework \emph{operational} if its subject is \emph{limited} to such probability assignments. In other words, an operational framework should be primarily concerned with probability distributions of (in principle) observable quantities and refrain from introducing notions non-aligned with this purpose.
    \item {\bf Relativity of measurement.}
    The only meaningful observables should be \emph{relative}, i.e. contingent upon the choice of the \emph{measuring instrument}. This is a generalization of the underlying idea of relativity that observable quantities are only meaningful after the \emph{frame of reference} has been \emph{specified}.
    \item {\bf Gauge-invariance and frame-covariance.} In the presence of an underlying symmetry structure of the theory, the relative observables should be invariant, while the measuring instruments should transform covariantly.
    \item {\bf Universality of quantum mechanics.} We understand it to state that Hilbert space based quantum mechanical formalism of normal states and positive operator-valued measures is rich enough to describe the accesible physical world, which should then be modeled as composed of quantum systems~alone.
\end{enumerate}
The operationality principle (I) is very well implemented by restricting to the realm of Generalized Probabilistic Theories, while the universality of quantum mechanics principle (IV) constrains the framework considerably, also allowing for concrete realizations of the previous two principles. To satisfy operationality as stated above, the formalism based on quantum mechanics should be primarily concerned with positive operator-valued measures (POVMs).\footnote{A reader not familiar with the operator-algebraic setup for quantum mechanics is advised to first consult the concise Appendix \ref{basics} for the necessary background.} Indeed, if a physical system is modeled on a Hilbert space $\hi$, with the states $\omega \in \S(\hi)$ given by the density operators\footnote{In particular, no non-normal or distributional states are permitted in the approach, contrary to \cite{de2021perspective}}, and probability distributions are understood as non-negative countably additive normalized measures\footnote{In the Discussion \ref{ch:discuss} we present conceptual arguments in favor of slightly \emph{restricting} the class of considered probability measures and allowed POCMs.} $p: \mathcal{F}(\Sigma) \to [0,1]$ on a measurable \emph{sample space} $(\Sigma,\mathcal{F})$, the most general continuous (in $\omega$) procedure of assigning the latter to the former is an assignment
\begin{equation}
     \S(\hi) \times \mathcal{F}(\Sigma) \ni (\omega, X) \mapsto p^{\E}_\omega(X) := \tr[\omega{\E}(X)]
\end{equation}
satisfying $\mathbb{0} \leq {\E}(X)$ for every measurable $X \in \mathcal{F}(\Sigma)$, ${\E}(\Sigma)=\mathbb{1}_\S$ and ${\E}(X\cup Y) = {\E}(X)+{\E}(Y)$ for disjoint $X,Y \in \mathcal{F}(\Sigma)$. It then follows that each ${\E}(X)$ is an effect,~i.e.
\begin{equation}
{\E}(X) \in \Eff(\hi) = \{F \in B(\hi) | \mathbb{0} \leq F \leq \mathbb{1}_\S\}
\end{equation}
for all $X \in \mathcal{F}(\Sigma)$, and thus such assignment of probability distributions is uniquely described by the \emph{positive operator-valued measure} (POVM)
\begin{equation}
{\E}: \mathcal{F}(\Sigma) \to \Eff(\hi)
\end{equation}
on $\Sigma$ with the collection of effects $\E(X)$ satisfying the mentioned properties. The measurable subsets $X \in \mathcal{F}(\Sigma)$ represent \emph{propositions} about the system $\S$, and the numbers $p^{\E}_\omega(X) \in [0,1]$ probabilities of these propositions being true given that the system $\S$ has been prepared in the state $\omega \in \S(\hi)$.\footnote{Such propositions indeed form a Boolean algebra, a subalgebra of the powerset~$\mathcal{P}(\Sigma)$. We avoid the usage of phrases such as ``\emph{the} outcome of the measurement of an observable ${\E}$ will be contained in the subset $X$'' as they only apply to the case when the sample space is discrete and hence we can speak of \emph{the outcome} -- otherwise, due to the always constrained precision of our observations, the verifiable claims will always concern subsets of the sample space. We like to think about the Boolean algebras of actually verifiable claims as more fundamental than the underlying sample spaces. See chapter \ref{ch:discuss} for a further discussion of this issue and the corresponding research direction.}

The gauge-invariance principle (III) is understood here similarly to the perspective-neutral approach in the sense of referring to some sort of global picture that should be invariant with respect to the action of the underlying group. However, as we will see, its implementation is very different -- in the presented framework it concerns only \emph{observables} and not the \emph{vectors} in $\hi$. However, the corresponding operational state spaces (see below fro the definition) turn out to consist precisely of invariant normal states in the realm of compact groups \ref{prop:invstspcomp} -- in this sense, we get close to the perspective-neutral philosophy of invariant states and Dirac observables, restricting however to normal, non-distributional states to cherish the fundamental in our view and aligned with the GPT spirit duality between states and effects. We will then have a (diagonal) action of the underlying symmetry group $G$ on the composite systems $\R\otimes\S$ (where $\R$ denotes the quantum reference system) and assume `physical' observables to be \emph{invariant}, i.e. they will be POVMs with the image in the algebra $B(\hirs)^G$ of invariant operators.

The frame-covariance principle (III) can be understood as a generalization of the situation encountered in the Special Theory of Relativity. The frames of reference considered there are mutually inertial coordinate systems, which are all \emph{equivalent} -- this is the content of the relativity principle. The term `equivalent' here can be understood as the fact that the descriptions of physical laws can be freely translated between different such frames, without altering their symbolic form. If a reference frame is modeled as an orthogonal coordinate system on the affine Minkowski space, such translations are given by elements of the Poincar{\'e} group acting transitively on the underlying space. To provide an operational, i.e. satisfying principle (I), analog of this situation, while also respecting the principle (IV) of quantum universality, which amounts to describing frames as quantum systems, we equip them with \emph{observables of orientation} on which the symmetry group $G$ underlying the formalism, e.g. the Poincar{\'e} group if the framework is to respect Special Relativity, \emph{acts}. Since in this context changing the reference frame is the same as acting upon the one we have with a group element, the action should now translate \emph{propositions about the frame orientation}, so it should be \emph{transitive}, and given on the sample spaces $\Sigma_\R$ of the frame-orientation observables. The group is then understood as composed of `operations' that can be performed on the frames, the covariance from principle (III) meaning that such operations can be modeled on the level of the frame's algebra. More precisely, given a group $G$ and its transitive action on $\Sigma_\R$, that extended to subsets of $\Sigma$ will be written as $X \mapsto g.X$ for $g \in G$, we have an action on the frames' algebra such that\footnote{The action of $G$ on $B(\hir)$ will always be assumed to be given via strongly continuous unitary representation of $G$ on $\hir$.}
\begin{equation}
    \E_\R(g.X) = U_\R(g)\E_\R(X) U_\R(g)^* \text{  for all } X \in \mathcal{F}(\Sigma_\R),  g \in G.
\end{equation}
A quantum reference frame $\R$ is then defined as a quantum system equipped with a POVM that is \emph{covariant}, i.e. satisfies the above property, referred to as the \emph{frame-orientation observable}.\footnote{Transitivity of the $G$-action on the sample spaces is usually a part of the definition of a covariant POVM.} The coherent state systems can be understood as equipped with covariant POVMs of the form
\begin{equation}
    \E^\eta(X) = \frac{1}{\lambda} \int_X \dyad{\eta_g}d\mu(g),
\end{equation}
where $\{\eta_g\}_{g \in G}$ is the coherent state system\footnote{Note that we do \emph{not} consider distributional coherent state systems, see \ref{ex:css} for the definition.} and $\mu$ denotes the Haar measure~on~$G$.

Fixing the reference $\R$ is understood as a choice made by an observer about \emph{which} quantum system will be used as a reference for observations, and also \emph{how} is it going to be used, i.e. which observable on $\R$ will be used to describe frames' orientation, with respect to which other quantities become meaningful as relational ones, as described below. After such a choice has been made, we only consider the observables respecting this choice whenever the frame is considered as a part of a composite system -- these are called \emph{framed} and are understood as those that can be accessed via the specified reference frame. In the light of the gauge-invariance principle (III), we then propose to model the relative observables as \emph{invariant} POVMs on the composite system $\R \otimes \S$ that respect the choice of the frame-orientation observable. This is achieved by the $\Y$ construction \cite{lov1,loveridge2017relativity} that, given a reference frame $\R$, provides an observable satisfying both requirements given \emph{any} observable on $\S$. It is understood as a map, contingent upon the choice of the frame-orientation observable, that \emph{relativizes} the standard, non-relative description of the system $\S$, making the choice of reference explicit. For a fixed frame $\R$ the $\Y^\R$ map is then given by the following integral\footnote{Originally, the $\Y^\R$ map, referred to as the \emph{$\R$-relativization map}, was only defined for frame observables being POVMs on the group $G$ itself. Recently it has been extended to homogeneous spaces on \emph{finite} groups \cite{homogyen2023}. This is the main deficiency of the presented framework and will be addressed shortly, see chapter \ref{ch:discuss} for some perspectives on this research problem.}~\cite{loveridge2020relational}
\begin{equation}
\Y^\R: B(\his) \ni A_\S \mapsto \int_G d\E_\R(g) \otimes U_\S(g)A_\S U_\S(g)^* \in B(\hirs)^G
\end{equation}

Indeed, a simple change of variables gives $h.\Y^\R(A_\S)=\Y^\R(A_\S)$ for all $h \in G$. Since this map, besides being linear and bounded, is also normal and (completely) positive, it can also be (equivalently) stated as a map between the effect spaces or, perhaps most naturally under the current interpretation, as a prescription of invariant POVMs to arbitrary ones, given by\footnote{The notation has been chosen to reflect the observation that when both Hilbert spaces are taken to be complex numbers, i.e. $\his \cong \hir \cong \mathbb{C}$, the notion of convolution of measures is recovered.}
\begin{equation}
\E_\S \mapsto \E_S * \E_\R := \Y^{\E_\R} \circ \E_\S: \mathcal{F}(\Sigma) \to \Eff(\hirs)^G.
\end{equation}
The relative observables are thus defined as the relativized ones. In this work, the $\Y$ construction is for the first time justified on operational grounds, in the context of a finite group and localizable (see below) frame, as generating \emph{all} and \emph{only} the invariant operators that respect the choice of the frame observable. Extending this result to a more general setting is a work in progress, which we reflect on in chapter \ref{ch:discuss}.

The operationality principle (I) entails that, after the frame has been chosen, the states of the composite systems $\R \otimes \S$ should be distinguished only as far as they give rise to different probability distributions upon an evaluation of the \emph{relativized effects}. This is achieved by the \emph{operational equivalence} procedure that quotients out the state spaces with respect to the equivalence relation defined by the allowed/accessible observables. This can be done since for \emph{any} subset $\mathcal{O} \subseteq \Eff(\hi)$ the relation
\begin{equation}
\Omega \sim_{\mathcal{O}} \Omega' \Leftrightarrow \tr[\Omega F]=\tr[\Omega' F] \hspace{3pt} \forall F \in \mathcal{O}.
\end{equation}
is an equivalence relation on $\T(\hi)$, and on $\S(\hi)$, thus allowing to define the \emph{operational state spaces} as the quotient spaces $\S(\hi)/\hspace{-3pt}\sim_{\mathcal{O}}$. They turn out to be total convex subsets of the real Banach spaces $\T(\hi)^{\rm{sa}}/\hspace{-3pt}\sim_{\mathcal{O}}$. Thus using the operational equivalence methodology, we provide an operational definition of \emph{relative states}: given a reference frame $\R$, the relative states are defined as the \emph{operational equivalence classes} of states on the composite system $\R\otimes\S$ that can be distinguished by the \emph{relative effects}, which are given by the (ultra-weak closure of) the image of the relativization map $\Y^\R$. We then show that the relative states can be equivalently characterized as states in $\S(\his)$ that lie in the image of the \emph{predual map} $\Y^\R_*$, and we write $\Omega^\R$ for $\Y^\R_*(\Omega) \in \S(\his)^\R$ and $\Omega \in \S(\hirs)$. Their definition is thus dual to that of the relative operators, providing a generalization of the standard trace-class/bounded operators duality. Indeed, writing $B(\his)^\R = \Y^\R\left(B(\his)\right)^{cl}$ (ultraweak closure), $\T(\his)^\R = \Y^\R_*(\T(\hirs))$ and $[\_]^\bigstar$ for the Banach space duality, we have
\begin{equation}
    [\T(\his)^\R]^\bigstar \cong B(\his)^\R.
\end{equation}

In fact, this is a special case of a Banach duality present in any operational setup defined by the set of available effects $\mathcal{O} \subseteq \Eff(\hi)$. Indeed, for arbitrary $\mathcal{O}$ we have
\begin{equation}
    [\mathcal{T}(\hi)/\hspace{-3pt}\sim_{\mathcal{O}}]^\bigstar \cong \rm{span}(\mathcal{O})^{cl}.
\end{equation}
Thus the algebraic structure of the dual of the Banach space in which the operational states live is in general lost, perhaps pointing to the more general, convex-theoretic setup to be appropriate for operational quantum physics. We reflect on potentially extending the presented framework into the realm of the Generalized Probability Theories in chapter~\ref{ch:discuss}.

It is very convenient to distinguish a class of quantum reference frames that exhibit some classical-like features. A frame $\R$ is called \emph{localizable} if for any $\epsilon > 0$ and $X \in \mathcal{F}(\Sigma)$ such that $\E_\R(X)\neq 0$ there exists a pure state $\ket{\xi_\epsilon} \in \hir$ for which $\ip{\xi_\epsilon}{\E_\R(X)\xi_\epsilon}>1-\epsilon$. A sufficient\footnote{In the context of POVMs on metrizable spaces, so, in particular, the covariant ones, this is also a necessary condition -- see \eqref{prop:locseqgen}.} condition a POVM may satisfy to admit such states is the norm-1 property \cite{heinonen2003norm} which states that the norm of any effect $||\E(X)||$ is either one or zero. Thus projection-valued measures are always localizable, but POVMs coming from coherent state systems will generally not be localizable. The operational interpretation of the states $\ket{\eta_g}$ as corresponding to the frame `being oriented' at $g \in G$ can, strictly speaking, retain its meaning only when they are `perfectly distinguishable', i.e. we have $\langle \eta_g,\eta_g' \rangle = \delta(g,g')$. Since without invoking distributional states this can only be achieved in the case of $G$ being at most countable, we avoid this kind of conditioning on frame orientations (and indeed the use of coherent state systems in general) but instead use the \emph{$\omega$-restriction maps}, which are given~by extending by linearity and continuity the following simple mapping
\begin{equation}
    \Gamma_\omega: B(\hirs) \ni A_\R \otimes A_\S \mapsto \tr[\omega A_\R]A_\S \in B(\his),
\end{equation}
where $\omega \in \S(\hir)$ is any state of the reference. Whenever the sample space is metrizable (which will always be the case for the frame-orientation observables), the localizability condition can equivalently be stated as the existence, for any point $x \in \Sigma$, of a sequence of pure states $\omega_n(x)$ such that the corresponding sequence of probability measures $\mu^{\E_\S}_{\omega_n(x)}$ converges (weakly) to the Dirac measure $\delta_x$. Taking $\omega = \omega_n(g)$ allows for sharp conditioning of the frame on a specified group element as a limiting procedure, while in general, the restriction map provides what can be thought of as \emph{probabilistic gauge fixing}, since generically $\omega \in \S(\hir)$ provides only a probability distribution of the frame orientation, namely the one $X \mapsto p_\omega^{\E_\R}(X)$, that approximates $\delta_x$ only in the case of a localizable frame $\R$ and $\omega = \omega_n(g)$. When the $\omega$-restriction map is applied to the relative operators, it gives rise to descriptions contingent not only upon the choice of the frame and its frame-orientation observable but also upon the frames' state. The corresponding operational states lie in the image of the \emph{$\omega$-conditioned $\R$-relativization map} defined as
\begin{equation}
    \Y^\R_\omega := \Gamma_\omega \circ \Y^\R: B(\his) \to B(\his) 
\end{equation}
The corresponding operational states, i.e. elements of $\S(\his)/\hspace{-3pt}\sim_\mathcal{O}$ with
\begin{equation}
\mathcal{O}=\Eff(\his)^\R_\omega := \Y^\R_\omega(\Eff(\his)),
\end{equation}
are of the form
\begin{equation}
    \rho ^{(\omega)} :=
    \Y^\R_*(\omega \otimes \rho ) = \int_G d\mu_{\omega}^{\E_\R}(g) U_\S(g)^*\rho U_\S(g),
\end{equation}
where $\rho \in \S(\his)$, and thus can be seen as states of $\S$ that have been `smeared' with respect to the probabilistic gauge fixing of $\R$ given by $\omega \in \S(\his)$. Since they arise by applying $\Y^\R_*$ to product states of the form $\omega \otimes \rho$, they are called \emph{$\omega$-product relative states}. In some sense, the procedure of attaching an external frame by $\Y^\R$ and then conditioning the description upon a chosen state with $\Gamma_\omega$ provides a generalization of the $G$-twirling procedure which is recovered in the compact $G$ case when $\omega$ is taken to be invariant. This way an operational justification is added to the claims made in the context of the information-theoretic approach that $G$-twirl should be understood as a lack of knowledge on the frame orientation, as explained in \cite{lov1, loveridge2017relativity}.

Via localizable frames, we can make direct contact with the standard formulation of quantum mechanics, in which the measuring apparatuses are thought of as classical systems. Indeed, in the case of localizable frames, $B(\his)^\R$ is a von Neumann algebra  with $\Y$ providing an isometric $*$-isomorphism $B(\his) \cong B(\his)^\R$ \eqref{prop:locyen}. In such a case the relative states are dense in the states of $\S$. More precisely, given a sequence of states $\omega_n$ localized at $e \in G$, we can approximate any state $\rho \in \S(\his)$ to arbitrary precision by the conditioned relative states due to the following \cite{lov1}
\begin{equation}
\lim_{n \to \infty}\Y^\R_*(\omega_n \otimes \rho) = \rho,
\end{equation}
where $\omega_n = \omega_n(e)$ is a localizing sequence centered at $e \in G$ and the limit is understood in terms of point-wise convergence of the expectation values.\footnote{The only topology that we use on $\T(\hi)$ and $\S(\hi)$ is the dual one to the ultraweak one on $B(\his)$, referred to as \emph{operational}.}

After presenting the operational setup for quantum reference frames as briefly described above, we also provide a \emph{frame-change map} aligned with our principles, where the localizable frames also play a prominent role. To this end, following other approaches, we adopt the \emph{internal} perspective on the problem. We then need a description of the system modeled on a \emph{total} Hilbert space $\hi_\T$ from which the frames will be extracted. We then notice that whatever the choice of an internal frame, i.e. regardless of the details of the decomposition $\hi_\T \cong \hir \otimes \his$ and the choice of the frame-orientation observable $\E_\R$, the $\R$-relative description will be invariant in the sense that $B(\his)^\R \subseteq B(\hirs)^G$. Thus the algebra $B(\hi_\T)^G$ and the corresponding invariant state space $\S(\hi_\T)_G$ provide the arena for the \emph{global} considerations that we need. A pair of subsystems, that will serve as frames, is then specified by decomposing the total Hilbert space into $\hi_\T \cong \hio \otimes \hit \otimes \his$. We assume this is done in a way compatible with the global $G$-action so that we have $U_\T = U_1 \otimes U_2 \otimes U_\S$. To complete the setup, we fix a pair of covariant POVMs $\E_i: \mathcal{B}(G) \to B(\hi_{\R_i})$, so that $\R_1$ and $\R_2$ become quantum reference frames.\footnote{We consider generalizing this notion of an internal frame in chapter \ref{ch:discuss}.} We then define the \emph{$\omega$-lifting map} $\mathcal{L}^\R_{\omega}: \S(\his)^\R \to \S(\hirs)_G$, with $\omega \in \S(\hir)$ an arbitrary state of the reference. It allows to \emph{lift} states relative to $\R$, i.e. those states in $\S(\his)$ that lie in the image of $\Y^\R_*$, to the global states in $\S(\hirs)_G$. It is given as a predual of the map
\begin{equation}
    \Gamma^\R_\omega := \Y^\R \circ \Gamma_\omega: B(\hirs)^G \to B(\his)^\R,
\end{equation}

called the \emph{relativized $\omega$-restriction map}, which takes invariant operators in the global description, restricts them upon the chosen references' state, and then relativizes.
The lifting map acts simply as
\begin{equation}
\mathcal{L}^\R_{\omega} = (\Gamma^\R_\omega)_*: \S(\his)^\R \ni \Omega^\R \mapsto [\omega \otimes \Omega^\R]_G \in \S(\hirs)_G,
\end{equation}
where $[\_]_G$ denotes the operational equivalence with respect to the invariant effects $\Eff(\hirs)^G$, and can thus be seen as an operational analogous to the \emph{inverse reduction map} introduced in \cite{de2021perspective}.

Upon the assumption of localizability of the frame $\R_1$, the lifting map formalizes the idea of `attaching' $\ket{e}_1$ state to a relative state present in the approaches based on QRF transformations (see e.g. \cite{de2020quantum}), where it is only rigorously defined in the countable group setting. The frame-change map that we propose is then given as a localized limit of the composition maps
\begin{equation}
\Y^{\R_2}_* \circ \mathcal{L}^{\R_1}_{\omega_n}: \mathcal{S}(\hit \otimes \his)^{R_1}  \to \mathcal{S}(\hio \otimes \his)^{R_2} ,
\end{equation}
where $\omega_n$ is the localizing sequence of states of the first frame centered at $e \in G$ as before. We thus pass through our \emph{global} description of the total system as can be seen on the following commuting diagram \vspace{5pt}
\begin{center}
\begin{tikzcd}
& \mathcal{S}(\hio \otimes \hit \otimes \his)_G  \arrow[rd, "\Y^{\R_2}_*"] &   \\
\mathcal{S}(\hit \otimes \his)^{R_1} \arrow[ru, "\mathcal{L}^{\R_1}_{\omega_n}"] \arrow[rr,] 
&          &  \mathcal{S}(\hio \otimes \his)^{R_2}.
\end{tikzcd}
\end{center}

When the relevant operational equivalences are taken into account, i.e. the choices of $\E_i$ are respected throughout, such frame-change map translates\footnote{The word `translation' here, and in what follows, is to be understood as a metaphor for providing an analogous description, much like restating a fact in a different language. The word `description' should be treated similarly so that our `translating descriptions' means providing a corresponding account of the same physical situation from a different perspective. This is very different than the usage of `translation' e.g. in \cite{giacomini2019quantum}, where it refers to shift operators of the translation group acting on the Euclidean space.} consistently between the states relative to different frames. This means that given a global state $\Omega~\in~\S(\hi_\T)_G$, the frame-change map takes the relative states with respect to $\R_1$ to those relative to $\R_2$, which can be stated as commutativity of the diagram \vspace{5pt}
\begin{center}
\begin{tikzcd}
            & \mathcal{S}(\hio \otimes \hit \otimes \his)_G \arrow[ld, "\pi_{\E_2} \circ \Y^{\R_1}_*"'] \arrow[rd, "\pi_{\E_1} \circ \Y^{\R_2}_*"] &   \\
\mathcal{S}(\hit \otimes \his)^{R_1}_{\E_2}  \arrow[rr,"\Phi_{1 \to 2}^{loc}"] &                                                                             &  \mathcal{S}(\hio \otimes \his)^{R_2}_{\E_1},
\end{tikzcd}
\end{center}
where $\pi_{\E_i}$ denote projections onto equivalence classes of relative states that take into account the choice of the other frames' orientation observable, for instance  
\begin{equation}
\pi_{\E_1}: \mathcal{S}(\hit \otimes \his)^{R_1} \to \mathcal{S}(\hit \otimes \his)^{R_1}_{\E_2} = \mathcal{S}(\hit \otimes \his)^{R_1}/\hspace{-3pt}\sim_{\E_2},
\end{equation}
where $\sim_{\E_2}$ denotes the operational equivalence with respect to the effects of the form $\E_2(X) \otimes F_S$ with $X \in \mathcal{B}(G)$ and $F_\S$ arbitrary. This way we make sure that the choice of both frame-orientation observables is respected throughout. In fact, without taking this into account, the frame-change maps would not provide translations coherent in the sense specified above. If we further assume localizability of $\R_2$ the reverse frame-change map provides an inverse, i.e. we have 
\begin{equation}
\Phi^{loc}_{2 \to 1} \circ \Phi^{loc}_{1 \to 2} = \text{Id}_{\mathcal{S}(\hit \otimes \his)^{R_1}_{\E_2}}
\end{equation}
This setup is easily extended to three (or more) frames, in which case we show that assuming localizability of e.g. $\R_1$ and $\R_2$, the frame-change maps are also composable, i.e. we have
\begin{equation}
\pi_{\E_2} \circ \Phi^{loc}_{1 \to 3} = \Phi^{loc}_{2 \to 3} \circ \Phi^{loc}_{1 \to 2},
\end{equation}
where the $\pi_{\E_2}$ projection takes into account the choice of the second frame's orientation-observables, which is absent from the $\Phi^{loc}_{1 \to 3}$ construction. 

We compare our frame-change maps with those defined in \cite{de2020quantum} in the setup of ideal frames and finite group $G$. We find that they agree precisely on separable states composed from basis states $\ket{g}$, thus recovering the classical intuition on which \cite{de2020quantum} is based. For general states, the agreement holds \emph{up to operational equivalence}. The result extends to the case when the second frame is a non-ideal quantum reference frame based on a coherent state system when the operational map is compared with the ``relational Schr{\"o}dinger picture'' frame-change map of the perspective-neutral approach~\cite{de2021perspective}.

Lastly, we propose a procedure of translating the relative descriptions given with respect to different frames treated \emph{externally}. Given a relative state $\rho^{\R_1} \in \S(\his)^{\R_1}$ and on in $\Omega \in \S(\hi_1 \otimes \hi_2)_G$, the latter allows to compute the probability distribution of relative orientation which allows constructing the corresponding relative state with respect to the second frame via
\begin{equation}
    \rho^{\R_2} := \int_G d\mu_{\Omega}^{\E_2 * \E_1}(g) g.\rho^{\R_1}.
\end{equation}
We live further analysis of these ideas for the future.

As a final general comment let us emphasize the fully relational approach to spatiotemporal notions being forced on the framework by the operationality and relativity of measurement principles (I and II). Indeed, as space or time are never measured directly, there is no place for the notion of space-time as primitive or fundamental in an operational framework. In the context of the presented formalism, the spatiotemporal \emph{relations} between quantum systems are described by the relevant quantum observables, e.g. relative-orientation observables. We thus view space-time as a convenient fiction, useful in the context of localized systems. The continuity of the relative orientations of frames upon localizability, and thus of the fictitious space-time they may be thought of as inhabiting, stems from the continuity of the underlying group. From this perspective, the group symmetry structure is considered more fundamental than the space on which it may be realized as the group of local isometries. We reflect further on these issues in discussion \ref{ch:discuss}.

The operational approach to quantum reference frames, as presented in this work and briefly summarized above, builds on the invariance of observables approach to imposing symmetry in quantum theory, the definition of the $\Y$ construction, the restriction maps $\Gamma_\omega$, and their interplay in the context of localizable reference systems, that were developed in the previous works \cite{loveridge2012quantum,lov6,lov1,lov2,lov4,loveridge2019relative,loveridge2020relational}. The new input presented here includes the following advances, many of which can also to be found in a preprint~\cite{opqrftrans}:
\begin{enumerate}
    \item[1)] Principle-based formulation of the framework.
    \item[2)] Definition and properties of \emph{operational equivalence} \eqref{def:opeq} and \emph{operational state spaces} (\ref{prop:statespace},\ref{def:opstsp},\ref{generalst}), \emph{operational Banach dualities} \eqref{prop:iml}.
    \item[3)] Characterization of localizable (norm-1) POVMs on metrizable spaces in terms of (weak) approximations of the Dirac measures \eqref{prop:locseqgen}.
    \item[4)] Definitions of different types of quantum reference frames \eqref{def:qrfs}.
    \item[5)] The idea of \emph{framing} \eqref{sec:frmdes} and the definition of \emph{relational descriptions} \eqref{sec:reldes}.
    \item[6)] An \emph{operational justification of the $\Y$ construction} \eqref{sec:yenmap}.
    \item[7)] Definition of relative-orientation observables for pairs of frames \eqref{def:relorobs}.
    \item[8)] Distinction between the \emph{relative} and \emph{invariant} descriptions, deeper understanding of the latter leading to incorporating the \emph{internal} perspective on frames \eqref{sec:dblinv}.
    \item[9)] Strengthening the results concerning localized relative descriptions \eqref{prop:locyen}, \eqref{th:con1}.
    \item[10)] Introduction of the Relational Reproducibility Property \eqref{RRP} and showing that it is satisfied by the relative-orientation observables \eqref{measuringY}.
    \item[11)] The concept of \emph{restricted relativized description} \eqref{sec:relresdes} and the \emph{lifting maps}~\eqref{def:liftmap}.
    \item[12)] The operational approach to changing internal frames of reference \eqref{ch:FrChM}, culminating in the definition of the \emph{frame-change map} \eqref{def:frchm} and the proof of invertibility, composability, and consistency of the provided construction \eqref{thm:frmchm}.
    \item[13)] A procedure of changing operational reference frames treated \emph{externally} \eqref{sec:extfcm}.
\end{enumerate}
A comprehensive list of newly defined spaces and maps is provided in Appendix~\ref{app:notation}.
\myemptypage

\chapter{Preliminaries}
We begin with some mathematical preliminaries. After some general remarks concerning state spaces, group actions, and topologies that we will use, we define the notion of \emph{operational equivalence} which constitutes our main tool for imposing the operationality principle (I) in the operator-algebraic setup. We then recall the definition of covariance of POVMs realizing our covariance principle (III) and provide a series of physically motivated examples illustrating its validity, including those POVMs that arise as coherent state systems and are extensively used in the perspective-neutral approach \cite{de2021perspective}. Finally, we discuss the localizability properties of POVMs, which will later be used to distinguish the localizable frames. They will be crucial for recovering the standard, i.e. non-relational setup in the sense of localizing the reference systems, and also for the definition of our frame-change maps. Equipped with these notions we proceed to discuss the basics of our operational setup for quantum reference frames in the next chapter.

Honoring the \emph{universality of quantum mechanics} principle (IV) the framework is based on the standard operator algebraic setting. A reader not used to this language is welcome to consult our Appendix \ref{basics} for a friendly exposition of the necessary background. When it comes to mathematics, our starting point can be summarized as follows.
\begin{enumerate}
    \item Quantum systems are modeled on separable Hilbert spaces, denoted by $\hi$.
    \item State spaces $\S(\hi) \subseteq \mathcal{T}(\hi)^{\rm{sa}}$ are total convex (see below) subsets of the real Banach space of trace-class operators $\mathcal{T}(\hi)$ given by the \emph{density operators}.
    \item Observables are modeled as positive operator-valued measures.
\end{enumerate}

However, while imposing operational equivalence we will encounter more general state spaces. The definition of a state space suitable for our purpose comes from the setting of Generalized Probability Theories (GPT) \cite{Beneduci2022-yo}.

\begin{definition}[affine functional, effect]\label{def:gpteff}
    An \emph{affine functional} on a convex subset $\S \subset V$ of a real vector space $V$ is a map $f: \S \to \mathbb{R}$ such that for any set of positive numbers $\{\lambda_i\}$ with $\sum_i \lambda_i = 1$ and any set of elements $\omega_i \in \S$ we have
    \[
        f \left(\sum_i \lambda_i \omega_i\right) = \sum_i \lambda_i f(\omega_i).
    \]
    Such $f$ is called \emph{bounded} if $f(\omega) < \infty$, and an \emph{effect} if $0 \leq f(\omega) \leq 1$, for all $\omega \in \S$. The set of bounded affine functionals on $\S$ will be denoted $B(\S)$, the set of effects by $\Eff(\S)$.
\end{definition}

\begin{definition}[state space]\label{def:stsp}
    A convex subset $\S \subset V$ of a real vector space $V$ is \emph{total convex} if the effects separate the elements of $\S$, i.e. when from $f(\omega_1)=f(\omega_2)$ for all $f \in \Eff(\S)$ we can conclude that $\omega_1 = \omega_2$. Such subsets will be referred to as \emph{state spaces}.
\end{definition}

A natural notion of a map between state spaces is the following.\footnote{Surprisingly, it does not seem to be present in the literature. We expect the resulting category to be well-behaved and plan to investigate its properties in future work.}

\begin{definition}[state space map]
    A \emph{state space map} is an affine map $\phi: \S \to \S'$, i.e. such that for any set of positive numbers $\{\lambda_i\}$ with $\sum_i \lambda_i = 1$ and any set of elements $\omega_i \in \S$ we have
    \[
        \phi \left(\sum_i \lambda_i \omega_i\right) = \sum_i \lambda_i \phi(\omega_i).
    \]
    Two state spaces are \emph{isomorphic} if there is an invertible state space map between~them.
\end{definition}

It can be shown \cite{Beltrametti1997-gu}, that whenever $\S \subseteq V$ is total convex, $B(\S)$ is an order-unit Banach space, dual to the base-norm space $\rm{span}(\S)$ in which $\S$ is a base for the positive cone $\rm{span}(\S)^+$. Much of the structure of $\S(\hi) \subset \T(\hi)^{\rm{sa}}$ is then recovered. The state space maps $\S \to \S'$ then have dual maps $B(\S') \to B(\S)$ that are unital and order-preserving. In the Hilbert space setting the state space maps are precisely the preduals of normal, unital, positive maps $B(\hi') \to B(\hi)$. \emph{Complete} positivity, however, is not granted, which is why we refrain from calling the state space maps \emph{channels}. We will not be using this language, except for making sure all our operational state spaces (see below) are total convex, and the maps between them affine. In particular, the frame-change map we propose is an affine functional, so a state space map. Generalizing the presented framework to the GPT setting, and possibly finding its reconstruction as embedded in such a general landscape, is one of our long-term goals. We briefly reflect on this research direction in chapter \ref{ch:discuss}. 

When it comes to group representations, we will only consider strongly continuous unitary representations of locally compact second countable topological groups $G$, written $U:G \to B(H)$. We will often write $g.A$ to stand for $ U(g) A U(g)^* $ with $A\in B(\hi)$ and $g.\rho$ for $U(g)^* \rho U(g)$ with~$\rho~\in~\mathcal{S}(\hi)$. The sample spaces of POVMs, if equipped with a $G$-action, will be topological spaces, and the action will be assumed continuous. Extended to the $\sigma$-algebra $\mathcal{B}(\Sigma)$ of Borel subsets it will be written as $X \mapsto h.X$ for $h \in G$.

Due to the emphasis on operationality, the preferred topology on $B(\hi)$ will be the \emph{topology of pointwise convergence of expectation values}, i.e., $A_n \to A \in B(\hi)$ exactly when $\tr[\Omega A_n] \to \tr[\Omega A]$ for all $\Omega \in \T(\hi)$. On the predual $\T(\hi)$, and by the restriction on the state space $\S(\hi)$, we use the corresponding, or dual, topology, i.e. $\Omega_n \to \Omega$ exactly when $\tr[\Omega_n A] \to \tr[\Omega A]$ for all $A \in B(\hi)$. We will refer to this topology as the \emph{operational topology}. The superscript $[\_]^{cl}$ will always refer to the \emph{ultraweak} closure of the subsets in operator algebras, and the \emph{operational} closure of the subsets in trace-class operators or state spaces.

In the remainder of this chapter, we first introduce the notion of operational equivalence that allows us to capture the operationality principle (I), then recall the definition of covariance of POVMs that realize the covariance principle (III), and finally, we discuss localizability of POVMs that allows for recovering standard quantum mechanics and for the usage of classical intuitions to define frame-change maps.

\newpage
\section{Operational equivalence}\label{sec:opeq}

Crucial for our implementation of the operationality principle (I) is the notion of operational equivalence which we now introduce. It is meant to be applied in situations when the set of available observables is restricted, for reasons either practical -- when the available measuring devices allow for accessing only a subset of effects, e.g. when the usage of a particular observable on a given system has been fixed -- or conceptual -- realizing some principles that disregard certain effects that would be available as mathematical objects, e.g. when the setup is assumed to respect certain symmetry conditions. We will encounter both of these instances below. In such situations, even though the system in question is modeled on a Hilbert space $\hi$, the state space $\S(\hi)$ is not operationally justified since no longer can all density operators be \emph{distinguished} by the considered effects. Indeed, according to the operationality principle (I), the states that cannot be distinguished by means of experiment \emph{should be identified}. Thus the following.

\begin{definition}[operational equivalence]\label{def:opeq}
    For any subset $\mathcal{O}\subseteq \Eff(\hi)$ we define the operational $\mathcal{O}$-equivalence relation on $\T(\mathcal{H})$ as follows
\[\Omega \sim_{\mathcal{O}} \Omega' \Leftrightarrow \tr[\Omega F]=\tr[\Omega' F] \hspace{3pt} \forall F \in \mathcal{O}.\]
\end{definition}

One easily verifies that this is always an equivalence relation and hence we can quotient the space of trace-class operators by the operational $\mathcal{O}$-equivalence. It turns out that the Banach space structure of $\T(\hi)$ is preserved under the quotient, with the ultraweak closure of $\rm{span}(\mathcal{O})$ realized as the dual space.

\begin{proposition}\label{prop:iml}
The space $\T(\mathcal{H})/\hspace{-3pt}\sim_{\mathcal{O}}$ is a Banach space and there is an isometric isomorphism between its dual and the ultraweak closure of the span of $\mathcal{O}$, i.e. we have
\[
\left[\T(\mathcal{H})/\hspace{-3pt}\sim_{\mathcal{O}} \right]^\bigstar \cong \rm{span}(\mathcal{O})^{cl}.
\]

\end{proposition}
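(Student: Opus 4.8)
The plan is to recognize $\sim_{\mathcal{O}}$ as the equivalence relation induced by a norm-closed linear subspace, so that the quotient is automatically a Banach space, and then to read off the dual from the standard identification of the dual of a quotient with an annihilator, combined with the bipolar theorem for the canonical dual pair $\langle \T(\hi), B(\hi)\rangle$ (recall the predual duality $\T(\hi)^\bigstar \cong B(\hi)$).

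First I would observe that $\Omega \sim_{\mathcal{O}} \Omega'$ holds if and only if $\Omega - \Omega' \in N$, where
\[
    N := {}^\perp\rm{span}(\mathcal{O}) = \{\, T \in \T(\hi) : \tr[TF] = 0 \ \text{ for all } F \in \mathcal{O}\,\}.
\]
Since for each fixed $F \in B(\hi)$ the map $T \mapsto \tr[TF]$ is a bounded linear functional on $\T(\hi)$, the set $N$ is an intersection of kernels of continuous functionals, hence a norm-closed subspace. Consequently the set-theoretic quotient $\T(\hi)/\hspace{-3pt}\sim_{\mathcal{O}}$ coincides with the quotient vector space $\T(\hi)/N$, which carries the quotient norm and is therefore a Banach space; this settles the first assertion. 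It is also convenient to note at this point that replacing $\mathcal{O}$ by $\rm{span}(\mathcal{O})$ in the definition of $\sim_{\mathcal{O}}$ changes nothing.

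For the dual, I would invoke the elementary isometric identification $\left(X/M\right)^\bigstar \cong M^\perp$, valid for any closed subspace $M$ of a Banach space $X$, applied with $X = \T(\hi)$ and $M = N$, the annihilator being taken inside $\T(\hi)^\bigstar \cong B(\hi)$. This reduces the claim to the identity $N^\perp = \rm{span}(\mathcal{O})^{cl}$. Here the key input is the bipolar theorem in the dual pair $\langle \T(\hi), B(\hi)\rangle$: for a linear subspace $S \subseteq B(\hi)$ one has $\left({}^\perp S\right)^\perp = \overline{S}$, where the closure is with respect to the weak-$\ast$ topology $\sigma(B(\hi), \T(\hi))$, which is precisely the ultraweak topology on $B(\hi)$. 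Taking $S = \rm{span}(\mathcal{O})$ gives $N^\perp = \rm{span}(\mathcal{O})^{cl}$, and composing the two isometric isomorphisms yields the stated result.

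I do not expect a genuine obstacle: the argument is essentially bookkeeping, combining the predual duality $\T(\hi)^\bigstar \cong B(\hi)$, the quotient/annihilator duality, and the bipolar theorem. The only points deserving care are (i) checking that $\sim_{\mathcal{O}}$ really is the equivalence relation induced by the subspace $N$, so that the quotient inherits a linear — hence Banach — structure and the quotient norm is well defined, and (ii) correctly matching the weak-$\ast$ topology $\sigma(B(\hi),\T(\hi))$ with the ultraweak topology, so that the closure produced by the bipolar theorem is indeed the one denoted $[\_]^{cl}$ in the text.
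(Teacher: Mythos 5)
Your proposal is correct and takes essentially the same route as the paper: identify $\sim_{\mathcal{O}}$ with translation by the closed pre-annihilator ${}^\perp\mathrm{span}(\mathcal{O})$, so the quotient is Banach, then apply the quotient-dual/annihilator identification and the fact that $({}^\perp S)^\perp$ is the weak-$\ast$ (ultraweak) closure of $S$ — the paper cites these two steps as Theorems 4.9 and 4.7 of Rudin, which is exactly your bipolar-theorem argument.
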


\begin{proof}
For $F \in \mathcal{O}$, we write $\phi_F$ for the continuous linear functional $\rho \mapsto \tr[\rho F]$ and identify $\rm{span}(\mathcal{O})$ with the corresponding subspace in the dual space. It amounts to remark that
\[
\rho \sim_{\mathcal{O}} \rho' \Leftrightarrow \forall \phi_F \in \mathcal{O} \hspace{3pt} \phi_F(\rho)=\phi_F(\rho') \Leftrightarrow \forall \phi_F \in \mathcal{O} \hspace{3pt} \phi_F(\rho-\rho')=0 \Leftrightarrow \rho-\rho' \in {}^{\perp}\mathcal{O},
\]
where ${}^{\perp}\mathcal{O}$ is the pre-annihilator of $\mathcal{O}$ defined as ${}^{\perp}\mathcal{O}:=\bigcap_{\phi_F\in \mathcal{O}}\ker(\phi_F)$, which is always closed in $\T(\hi)$ as an intersection of closed sets. Moreover, the pre-annihilator is always a subspace since ${}^{\perp}\mathcal{O} = {}^{\perp}\rm{span}(\mathcal{O})$. The quotient space $\T(\hi) /\hspace{-3pt}\sim_{\mathcal{O}} = \T(\hi) /{}^{\perp}\mathcal{O}$ is then a Banach space with the quotient norm defined as
\[||\rho + {}^{\perp}\mathcal{O} ||=\inf_{\mu \in {}^{\perp}\mathcal{O}} ||\rho+\mu||.\]
Finally, Theorems 4.9 and 4.7 in \cite{rudin1974functional} give
\[
\left(\T(\hi)/\hspace{-3pt}\sim_{\mathcal{O}} \right)^* = \left(\T(\hi)/{}^{\perp}\mathcal{O}\right)^* \simeq {}^{\perp}\mathcal{O}^{\perp} = \left({}^{\perp}\rm{span}(\mathcal{O})\right)^{\perp} = \rm{span}(\mathcal{O})^{cl}.
\]
\end{proof}

The proposition above can be seen as restricting the $\mathcal{O}=\Eff(\hi)$ setting, in which we have $\rm{span}(\Eff(\hi))^{cl} = B(\hi)$ and the usual $\T(\hi)^* \cong B(\hi)$, to the situation where the set of available effects is smaller. Thus the operators in $\rm{span}(\mathcal{O})^{cl}$ can distinguish $\mathcal{O}$-equivalence classes. In fact, we have the following.

\begin{proposition}\label{prop:statespace}
    The set $\S(\mathcal{H})/\hspace{-3pt}\sim_{\mathcal{O}}$ is a total convex subset of $\T(\hi)^{\rm{sa}}/\hspace{-3pt}\sim_{\mathcal{O}}$, and is thus a state space and, moreover,  is closed in the quotient operational topology.
\end{proposition}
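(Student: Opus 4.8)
The plan is to verify the three assertions in turn, reusing the algebraic description of the quotient from the proof of Proposition \ref{prop:iml}. Recall that there $\T(\hi)/\hspace{-3pt}\sim_{\mathcal{O}}$ was identified with $\T(\hi)/{}^{\perp}\mathcal{O}$ via the quotient by the closed subspace ${}^{\perp}\mathcal{O} = {}^{\perp}\rm{span}(\mathcal{O})$, and the dual was identified with $\rm{span}(\mathcal{O})^{cl}$. The first thing I would note is that, since each $F \in \mathcal{O}$ is self-adjoint, the functional $\phi_F$ is real on $\T(\hi)^{\rm{sa}}$, so the operational equivalence restricts to an equivalence relation on $\T(\hi)^{\rm{sa}}$ and the quotient $\T(\hi)^{\rm{sa}}/\hspace{-3pt}\sim_{\mathcal{O}}$ is a well-defined real Banach space. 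The image of $\S(\hi)$ under the quotient map $q$ lands in this real quotient.

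First, convexity: $\S(\hi)$ is convex and $q$ is linear (it is the quotient map of a vector space by a subspace), so $q(\S(\hi))$ is convex. Second, totality (the effects separate points): suppose $q(\rho_1)$ and $q(\rho_2)$ are not separated by any effect in $\Eff\bigl(q(\S(\hi))\bigr)$; I want to conclude $q(\rho_1) = q(\rho_2)$. The key observation is that every $F \in \mathcal{O}$ induces, by $q(\rho) \mapsto \tr[\rho F]$, a well-defined affine functional on $q(\S(\hi))$ — well-definedness is exactly the statement $\rho \sim_{\mathcal{O}} \rho' \Rightarrow \tr[\rho F] = \tr[\rho' F]$, which holds by definition of $\sim_{\mathcal{O}}$ — and this functional takes values in $[0,1]$ on states, hence is an effect in the GPT sense of Definition \ref{def:gpteff}. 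So if $q(\rho_1)$ and $q(\rho_2)$ agree on all GPT-effects, they in particular agree on all these $\tr[\,\cdot\,F]$ for $F \in \mathcal{O}$, i.e. $\tr[\rho_1 F] = \tr[\rho_2 F]$ for all $F \in \mathcal{O}$, which is precisely $\rho_1 \sim_{\mathcal{O}} \rho_2$, giving $q(\rho_1) = q(\rho_2)$. This proves $q(\S(\hi))$ is total convex, hence a state space by Definition \ref{def:stsp}.

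Third, closedness in the quotient operational topology: here the operational topology on the quotient is the one making its dual pairing with $\rm{span}(\mathcal{O})^{cl}$ continuous, i.e. the weak topology $\sigma\bigl(\T(\hi)/{}^{\perp}\mathcal{O},\,\rm{span}(\mathcal{O})^{cl}\bigr)$. I would argue as follows: $\S(\hi)$ is closed in $\T(\hi)$ in the operational (ultraweak-predual) topology, being the intersection of the positive cone with the affine hyperplane $\{\tr[\rho] = 1\}$, both of which are operationally closed. The quotient map $q$ is continuous and open for the respective weak topologies. Closedness of the image does not follow from continuity alone, so the cleaner route is: a net $q(\rho_\alpha) \to \psi$ in the quotient operational topology means $\tr[\rho_\alpha F] \to \langle \psi, F\rangle$ for all $F \in \rm{span}(\mathcal{O})^{cl}$; since the $\rho_\alpha$ can be taken to be density operators, boundedness in trace norm plus the Banach–Alaoglu-type compactness of the state space in the full operational topology lets me extract a subnet $\rho_{\alpha'} \to \rho \in \S(\hi)$, and then $q(\rho) = \psi$ by continuity of $q$. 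I expect the main obstacle to be precisely this last point — making the compactness argument rigorous, since $\S(\hi)$ is weak-$*$ compact only when one is careful about which topology "operational" refers to on an infinite-dimensional $\T(\hi)$; the honest statement is that the unit ball of $\T(\hi)$ is not weak-$*$ compact (it is the predual, not the dual). The fix is to use that $q(\S(\hi))$ is a base for the positive cone of $\rm{span}(q(\S(\hi)))$ and invoke the Beltrametti–Bugajski / base-norm-space machinery already cited after Definition \ref{def:stsp}, which guarantees exactly this closedness; alternatively one restricts attention to the fact that $\S(\hi)$ is operationally closed and that the preimage $q^{-1}(q(\S(\hi))) = \S(\hi) + {}^{\perp}\mathcal{O} \cap \T(\hi)^{\rm{sa}}$ is closed, from which closedness of the image in the quotient topology follows by definition of the quotient topology. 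I would present the second of these as the main line and remark on the GPT-structural interpretation.
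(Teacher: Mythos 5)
Your treatment of convexity and totality is correct and essentially coincides with the paper's own argument: linearity of the quotient map gives convexity, and the functionals $q(\rho)\mapsto\tr[\rho F]$ for $F\in\mathcal O$ are well defined on classes and separate them by construction (the paper phrases this by identifying the effects on the quotient with $\Eff(\hi)\cap\rm{span}(\mathcal O)^{cl}$). The problem is the closedness claim. Your declared main line is circular: in a quotient topology a subset is closed exactly when its $q$-preimage is closed, so the assertion that $q^{-1}(q(\S(\hi)))=\S(\hi)+\left({}^{\perp}\mathcal O\cap\T(\hi)^{\rm{sa}}\right)$ is operationally closed is not a lemma from which the claim ``follows by definition of the quotient topology''---it \emph{is} the claim, and you give no argument for it. The sum of a weakly closed, bounded, convex set and a weakly closed subspace need not be closed; the weak compactness that would force this is exactly what you correctly observe is unavailable for $\S(\hi)$ in infinite dimensions, and the gesture towards the Beltrametti--Bugajski base-norm machinery does not supply it either.

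That this step cannot be had for free is seen already for a single effect: take an orthonormal basis $\{e_n\}$ and $\mathcal O=\{F\}$ with $F=\sum_n\left(1-1/n\right)\dyad{e_n}$. Then ${}^{\perp}\mathcal O$ has codimension one, the quotient carries the Euclidean topology of $\mathbb{R}$ (on either reading of the quotient operational topology), and the image of the state space is $\{\tr[\rho F]:\rho\in\S(\hi)\}=[0,1)$, since $\tr[\rho(\mathbb{1}-F)]>0$ for every normal state; so the image, and hence the saturation you assert to be closed, is not closed. Any repair therefore needs extra hypotheses on $\mathcal O$ or a weaker reading of ``closed''. For comparison, the paper's own proof takes a different and much shorter route: it verifies closedness \emph{upstairs}---if states $\rho_n$ satisfy $\tr[\rho_n A]\to\tr[TA]$ for all $A\in B(\hi)$, then $T\in\S(\hi)$---and then passes to classes by continuity of $q$; it never confronts nets that converge only in the quotient topology, which is precisely the situation your compactness attempt was aimed at and which neither your saturation shortcut nor that argument resolves.
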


\begin{proof}
Since the real linear structure of $\T(\hi)^{\rm{sa}}/\hspace{-3pt}\sim_{\mathcal{O}}$ comes from $\T(\hi)^{\rm{sa}}$, convexity is preserved under the quotient. In particular, writing $[\_]_\mathcal{O}$ for the $\mathcal{O}$-equivalence classes, for any $\rho,\rho' \in \S(\hi)$ and $0 \leq \lambda \leq 1$ we have
\[
\lambda[\rho]_\mathcal{O} + (1-\lambda)[\rho']_\mathcal{O} = [\lambda \rho + (1-\lambda)\rho']_\mathcal{O} \in \S(\hi)/\hspace{-3pt}\sim_{\mathcal{O}}.
\]

The bounded affine functionals on $\S(\mathcal{H})$ are given by $\rho \mapsto \tr[\rho A]$ with $A \in B(\hi)$, with the effects given by the subset $\mathcal{E}(\hi) = \{F \in B(\hi) | \mathbb{0} \leq F \leq \mathbb{1}\}$. The effects on $\S(\mathcal{H})/\hspace{-3pt}\sim_{\mathcal{O}}$ are then those that are well-defined on classes $\S(\mathcal{H})/\hspace{-3pt}\sim_{\mathcal{O}}$, and hence are given by the operators in $\mathcal{E}(\hi) \cap \rm{span}(\mathcal{O})^{\rm cl}$. Indeed, $F \in \mathcal{E}(\hi)$ is well-defined on the $\mathcal{O}$-equivalence classes of states if whenever $\rho \sim_\mathcal{O} \rho'$ we have $\tr[\rho F] = \tr[\rho'F]$, which means that $F \in \rm{span}(\mathcal{O})^{\rm cl}$. The effects then separate the elements of $\S(\mathcal{H})/\hspace{-3pt}\sim_{\mathcal{O}}$ by construction, providing total convexity.

The state space $\S(\hi)$ is operationally closed in $\T(\hi)$ since for any sequence of states $(\rho_n) \subset \S(\hi)$ such that $\lim_{n \to \infty}\tr[\rho_n A] = \tr[T A]$ for all $A \in B(\hi)$ and some $T\in \T(\hi)$, we can conclude that $T \in \S(\hi)$. Indeed, the continuity of the trace gives positivity and normalization of $T$. The operational topology on $\S(\hi)/\hspace{-3pt}\sim_{\mathcal{O}}$ is the quotient topology of the one on $\T(\hi)$ so we have
\[
\lim_{n \to \infty} [\rho_n]_\mathcal{O} = [T]_\mathcal{O} \in \S(\hi)/\hspace{-3pt}\sim_{\mathcal{O}}.
\]
\end{proof}

\begin{definition}[Operational state space]\label{def:opstsp}
    The total convex subset $\S(\mathcal{H})/\hspace{-3pt}\sim_{\mathcal{O}}$ of $\T(\hi)^{\rm{sa}}/\hspace{-3pt}\sim_{\mathcal{O}}$ will be referred to as an \emph{$\mathcal{O}$-operational state space}.
\end{definition}

The set $\mathcal{O}$ will often be the image of a normal, positive, unital map. In such a case, the corresponding operational state space admits an alternative useful characterization. 

\begin{proposition}\label{generalst}
    Given an normal, positive, unital map $F:B(\mathcal{K})\to B(\mathcal{H})$ there is a state space isomorphism\footnote{Notice that in general this correspondence doesn't hold at the level of the ambient Banach spaces. In fact, $\Im F_* $ might not be closed and hence not a Banach space. Considering its norm-closure gives a bijective bounded linear map $\T(\mathcal{H})/\hspace{-3pt}\sim_{\Im F} \to F_* (\T(\mathcal{K}))$ but generally not an isometry.}
    \[
    \S(\mathcal{H})/\hspace{-3pt}\sim_{\Im F} \cong F_* (\S(\mathcal{K}))
    \]
\end{proposition}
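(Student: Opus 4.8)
The plan is to define the candidate isomorphism directly from the predual map and then verify it is a well-defined, bijective, affine map using the preannihilator description of operational equivalence from Proposition \ref{prop:iml}. First I would set $\mathcal{O} = \Im F \subseteq \Eff(\mathcal{H})$ (noting $F$ sends effects to effects by positivity and unitality) and recall from the proof of Proposition \ref{prop:iml} that $\rho \sim_{\Im F} \rho'$ if and only if $\rho - \rho' \in {}^{\perp}(\Im F)$, i.e. $\tr[(\rho-\rho')F(A)] = 0$ for all $A \in B(\mathcal{K})$. By definition of the predual $F_*: \mathcal{T}(\mathcal{H}) \to \mathcal{T}(\mathcal{K})$, this says exactly $\tr[F_*(\rho-\rho')A] = 0$ for all $A$, i.e. $F_*(\rho) = F_*(\rho')$. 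Hence $\ker F_*$ (restricted to the self-adjoint part) coincides with ${}^{\perp}(\Im F)$, and $F_*$ descends to a well-defined injective linear map $\overline{F_*}: \mathcal{T}(\mathcal{H})^{\rm sa}/\hspace{-3pt}\sim_{\Im F}\ \to \mathcal{T}(\mathcal{K})^{\rm sa}$.

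Next I would restrict attention to states. Since $F$ is unital and positive, $F_*$ is trace-preserving and positive, so $F_*(\S(\mathcal{H})) \subseteq \S(\mathcal{K})$; conversely the claim is that the image is exactly $F_*(\S(\mathcal{K}))$, which is immediate because we are literally pushing forward states. So $\overline{F_*}$ restricts to a bijection $\S(\mathcal{H})/\hspace{-3pt}\sim_{\Im F}\ \to F_*(\S(\mathcal{K}))$. Affinity is inherited: $F_*$ is linear, the quotient linear structure on $\mathcal{T}(\mathcal{H})^{\rm sa}/\hspace{-3pt}\sim_{\Im F}$ comes from $\mathcal{T}(\mathcal{H})^{\rm sa}$ (as used in Proposition \ref{prop:statespace}), and convex combinations of states map to convex combinations of states. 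Thus $\overline{F_*}$ is an affine bijection between the two convex sets, and its inverse is affine as well (the inverse of an affine bijection is affine), so it is a state space isomorphism in the sense of the definition of state space map. One should also check that both sides are genuinely state spaces: the left side is total convex by Proposition \ref{prop:statespace}, and $F_*(\S(\mathcal{K}))$ is total convex because it is affinely isomorphic to one — or directly, because the effects $F(A)^{\rm pushed}$... more cleanly, total convexity transfers along affine bijections.

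The main subtlety — and the place I would be most careful — is the interplay between the two possible meanings of "$F_*(\S(\mathcal{K}))$": as a subset of $\mathcal{T}(\mathcal{K})$ with its ambient structure versus as an abstract state space. The footnote to the statement already flags that $\Im F_*$ need not be norm-closed, so $\S(\mathcal{H})/\hspace{-3pt}\sim_{\Im F}$ and $F_*(\S(\mathcal{K}))$ need not be isometric as subsets of Banach spaces; the isomorphism claimed is purely one of state spaces (affine bijection), not of normed spaces. So I would phrase the proof to produce an affine bijection and invoke the state space map definition, explicitly \emph{not} asserting any metric or topological compatibility beyond that. A secondary point to handle with a line of justification: that $\overline{F_*}$ really is well-defined on \emph{classes of states} and not merely on classes of trace-class operators, but this is automatic since the equivalence relation on $\S(\mathcal{H})$ is the restriction of the one on $\mathcal{T}(\mathcal{H})$, and the computation $\ker F_* = {}^{\perp}(\Im F)$ above was done at the level of all of $\mathcal{T}(\mathcal{H})$. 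With these observations the proof is essentially the unwinding of definitions plus the $\ker F_* = {}^{\perp}(\Im F)$ identity, which is where normality of $F$ (ensuring $F_*$ exists as a map on trace-class operators) is doing the real work.
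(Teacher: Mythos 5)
Your proposal is correct and follows essentially the same route as the paper's own proof: use normality to identify ${}^{\perp}\Im F = \ker F_*$, let $F_*$ descend to an injective map on the quotient, and use positivity and unitality to restrict it to an affine bijection onto the image of the state space, with the footnote's caveat that this is only a state space (not Banach space) isomorphism. The only cosmetic point is the mismatch you inherited from the statement itself: since $F_*:\T(\mathcal{H})\to\T(\mathcal{K})$, the image of states should be written $F_*(\S(\mathcal{H}))$, as your own computation in fact produces.
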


\begin{proof}
    Since $F$ is normal, we can write ${}^\perp \Im F = \ker F_* $, and thus $\T (\mathcal{H})/\hspace{-3pt}\sim_{\Im F} = \T (\mathcal{H})/\ker F_*$. Then $F_* $ restricts to an invertible bounded linear map $\T (\mathcal{H})/\ker F_* \to \Im F_* $. As $F$ is linear, unital and positive, the $F_*$ map restricts further to an affine bijection $\S (\mathcal{H})/\ker F_* \to F_* (\S(\mathcal{K}))$, providing the expected state space isomorphism.
\end{proof}

As a simple instance of the proposition above, we note that for a von Neumann algebra $F: \mathcal{N} \hookrightarrow B(\hi)$ the predual is the quotient map $ F_*: \T(\hi) \to \T(\hi)/\hspace{-3pt}\sim_\mathcal{N}$, and the normal state space is given by $F_*(\S(\hi)) \cong \S(\hi)/\hspace{-3pt}\sim_\mathcal{N}$.

\newpage
\section{Covariance}

Due to the covariance principle (III), frame-orientation observables are covariant POVMs. Here is a precise definition.

\begin{definition}[Covariant POVMs]\label{def:imp}
Consider a strongly continuous projective unitary representation $U:G \to B(\hi)$ of a \emph{locally compact second countable Hausdorff topological} group~$G$, and a \emph{topological space} $\Sigma$, considered as a \emph{measurable space} with the $\sigma$-algebra $\mathcal{B}(\Sigma)$ of Borel subsets, and equipped with a continuous transitive $G$-action. Then a POVM $\E:\mathcal{B}(\Sigma) \to \Eff(\hi)$ is called \emph{covariant} if for any $g \in G$ and $X \in \mathcal{B}(\Sigma)$ we have
\begin{equation}\label{eq:covp}
    \E (g.X)= U(g) \E (X) U(g)^*,
 \end{equation}
 where $g.\_$ denotes the extension of the action of $G$ on $\Sigma$ to the Borel subsets.
\end{definition}

Covariant POVMs provide a general notion of quantum observables in the presence of symmetry. They generalize coherent state systems that are considered frame-orientation observables in the perspective-neutral approach \cite{de2021perspective}, as the following example shows.

\begin{example}[Systems of coherent states]\label{ex:css}
\normalfont
Consider a vector $\ket{\phi} \in \hi$ that is \emph{cyclic} for a given representation $U$, i.e. the such that $\rm{span}(\{U(g) \ket{\phi}, g \in G\})$ is dense in $\hi$. The orbit $\{\ket{\phi(g)} := U(g) \ket{\phi}\}$ is then called a system of (Perelomov-Gilmore) coherent states \cite{Perelomov,ali2000coherent}. Under an extra square integrability condition, they resolve identity, meaning that
\begin{equation}\label{eq:pgco}
    \int_G \dyad{\phi(g)} d \mu (g) = \lambda \mathbb{1},
\end{equation}
where $\lambda$ is positive and $\mu$ is the Haar measure as before. Then
\begin{equation}
\E^\phi(X):= \frac{1}{\lambda}\int_X \dyad{\phi(g)}{\phi(g)} d \mu (g)
\end{equation}
is a covariant POVM on $G$. We will refer to such POVM a \emph{coherent state POVM}.\footnote{One should be careful with applying the results of the theory of coherent states as presented e.g. in \cite{Perelomov} as it was developed under the assumption the representations are \emph{irreducible}, which we certainly do not want to assume here. We will avoid the use of coherent state systems as frame-orientating observables, except when it will be useful to make contact with the perspective-neutral approach.}
\end{example}

Notice here, that since the action of $G$ on $\Sigma$ is assumed to be \emph{transitive}, $\Sigma$ has to be homeomorphic to the quotient topological space $G/H$ for some closed subgroup $H \subseteq G$. The identification is given by fixing a point $x \in \Sigma$, taking $H=H_x$ to be the stabilizer subgroup of $x$, which gives a bijection $\Sigma \cong G/H_x$, and noticing that since all such subgroups are conjugate to one another, the resulting quotient spaces $G/H_x$ are all homeomorphic. The subtle difference between $\Sigma$ and $G/H$ is that the latter admits a distinguished coset containing the identity $eH$, while the former does not. The following is a useful definition of \emph{equivalent} covariant POVMs.

\begin{definition}[Equivalence of covariant POVMs]
    Given a pair of covariant POVMs $\E_i: \mathcal{B}(\Sigma_i) \to \Eff(\hi_i)$ with $i=1,2$, we call them \emph{unitarily equivalent} if there is a unitary map $U:\hi_1 \to \hi_2$ and a homeomorphism $f:\Sigma_1 \to \Sigma_2$ such that
    \[
        \E_2(X) = U\E_1(f^{-1}(X))U^*.
    \]
\end{definition}

Up to the unitary equivalence, all the covariant POVMs are then given on the quotient spaces $\Sigma = G/H$. Covariant PVMs, so projection-valued measures, also referred to as sharp POVMs \eqref{def:povms}, often called \emph{systems of imprimitivity} in mathematics literature, have been fully characterized by Mackey's Imprimitivity Theorem which provides a bijective correspondence between systems of imprimitivity based on $G/H$ and unitary representations of the \emph{subgroup} $H \subseteq G$ (see e.g. \cite{landsman2006between}). It can be used to provide generic examples of covariant PVMs. For instance, taking $H=\{e\}$ (with the trivial representation), the theorem provides unitary equivalence between all the covariant PVMs on $\Sigma=G$.

\begin{example}\label{ex:covsharpfin}
\normalfont
Consider (left) regular representation of a countable group $G$, given on $\hi = L^2(G)$ by $U(g)\ket{g'} = \ket{gg'}$, where $\ket{g}$ denotes elements of the orthonormal basis given of indicator functions. The corresponding unique (up to unitary equivalence) covariant PVM is given by
\begin{equation}
    P: g \mapsto \dyad{g} \in B(L^2(G)).
\end{equation}
\end{example}

Upon fixing the representation of $H$ to be trivial in the Imprimitivity Theorem correspondence, the simple example above naturally extends to the following.

\begin{example}\label{ex:covsharp}
\normalfont
Consider a $\sigma$-finite measure space $(G/H, \mu)$ equipped with a transitive (left) $G$-action such that $\mu$ is \emph{invariant}. Then
\begin{equation}
(U(g)f)(x)=f(g^{-1}.x)
\end{equation}
is a representation of $G$, with the unique (up to equivalence) covariant PVM given by
\begin{equation}
  P(Y): f \mapsto \chi_Y f,
\end{equation}
where $Y \subseteq G/H$. Taking $H=\{e\}$ yields the regular representation as before, equipped with the canonical PVM based on $G$ (with the Haar measure). For instance, take $G=\mathbb{R}$, understood as a position sample space of a particle. The action of $\mathbb{R}$ on the wave functions in $L^2(\mathbb{R})$ is given by the shift operator
\begin{equation}
(U(y)\psi)(x)=\psi(x-y),
\end{equation}
with the sharp position observable recovered as the unique shift-covariant PVM (see~\ref{pozobs})
\begin{equation}
P(Y)\psi(x) = \chi_Y \psi(x).
\end{equation}
\end{example}

with $Y \subseteq \mathbb{R}$. Many more, also unsharp, examples of covariant POVMs used to model quantum observables under the presence of symmetries can be given, see e.g. \cite{busch1997operational,loveridge2017relativity}.

Notice here that since $G$ is assumed second countable and Hausdorff and $H$ need to be closed, the quotient spaces $G/H$ are also Hausdorff, and since $H$ is a subgroup the quotient map is open and thus preserves second-countability. Urysohn's metrization theorem then assures metrizability of all the homogeneous spaces that we consider.

\newpage
\section{Localizability}

The notion of localizability of a POVM that we are now going to introduce is needed for recovering the standard kinematics of quantum physics from the relational one summarized here (Theorem \ref{th:con1}), and defining the frame change maps in chapter~\ref{ch:FrChM}.

Consider a quantum observable modeled by a POVM $\E: \mathcal{F}(\Sigma) \to \Eff(\hi)$. It maps the states of the systems $\omega \in \S(\hi)$ to the probability distributions over $\Sigma$ via the \emph{Born formula}\footnote{When the probability distribution $p_{\omega}^{\E}$ will be considered as a (positive normalized) \emph{measure} for integration, it will often be denoted by $\mu_{\omega}^{\E}$.}
\begin{equation}\label{eq:born}
    \mu_{\omega}^{\E}(X) = \tr[\omega \E (X)].
\end{equation}

POVMs may have different characteristics in terms of the probability distributions they give rise to upon evaluation on quantum states. For instance, we will be interested in whether a given observable can provide highly localized probability distributions or not. If arbitrary localization of $\E$ is possible, there are states on $\S$ that give \emph{definite} truth values of \emph{almost} any experimentally verifiable statement regarding the quantity corresponding to $\E$. This can be stated in terms of the $\epsilon$-decidability property~\cite{heinonen2003norm}.

\begin{definition}[Localizability]\label{def:loc}
A POVM $\E : \mathcal{B}(\Sigma) \to \Eff(\hi)$ POVMs is called \emph{localizable} iff it satisfies the the \emph{$\epsilon$-decidability property}: for every $\E(X)\neq 0$ and for any $\epsilon > 0$ there exists a pure state $\ket{\xi_\epsilon} \in \mathcal{S}(\hi)$ for which $\ip{\xi_\epsilon}{\E(X)\xi_\epsilon}>1-\epsilon$.
\end{definition}

 Thus if $\E$ is localizable, for \emph{any} measurable subset $X \in \mathcal{F}(\Sigma)$ we can find a pure state that gives the probability of the $X$-proposition being true arbitrarily close to the identity. The states corresponding to the subsets $X \subseteq \Sigma$ much smaller than the available experimental resolution will then assign almost definite truth values to almost all experimentally verifiable statement concerning the observable modeled by $\E$. In other words, the \emph{localizable} POVMs, when evaluated on specific pure states, lose their inherently probabilistic nature encoded in the operational setup as endorsed in this work, and instead exhibit \emph{classical-like} behavior. As shown in \cite{heinonen2003norm}, $\E$ is localizable iff it satisfies the \emph{norm-$1$ property}, i.e. 
\begin{equation}\label{norm1}
||\E(X)|| = 1 \vee 0 \text{ for all } X \in \mathcal{F}(\Sigma).
\end{equation}
Sharp POVMs (i.e. those for which $\E(X)$ are all projections) are then always localizable. Indeed, given $X\in \mathcal{F}(\Sigma)$ any vector in the image of the corresponding projection $\xi \in \Im \E(X)$ gives $\ip{\xi}{\E(X)\xi}=1$, so in this case no limiting procedure is needed. If the sample space $\Sigma$ is \emph{metrizable}, which is true for covariant POVMs, we have a very useful characterization of the localizable ones in terms of the probability measures: ${\E}$ is localizable iff the Dirac delta measure centered at any $x \in \Sigma$ can be approximated to arbitrary precision with measures of the form $\mu^{\E}_{\omega_n(x)}(X) = tr[{\E}(X)\omega_n(x)]$, where $\omega_n(x)$ are pure states. We will call $\omega_n(x)$ a \emph{localizing sequence centered~at~$x$}.

\begin{proposition}\label{prop:locseqgen}
    Consider a POVM $\E : \mathcal{B}(\Sigma) \to B(\hi)$ and assume $\Sigma$ metrizable. Then the following are equivalent:
    \begin{enumerate}
        \item $\E$ is localizable.
        \item For any $x \in \Sigma$ there exists a sequence of pure states $\omega_n(x) \in \S(\hi)$ such that
        \begin{equation}
        \lim_{n \to \infty}\mu^{\E}_{\omega_n(x)} = \delta_x
        \end{equation}
       in the sense of weak convergence of measures.
    \end{enumerate}
\end{proposition}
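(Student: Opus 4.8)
The plan is to prove the two implications separately, with the non-trivial direction being $(1)\Rightarrow(2)$. For $(2)\Rightarrow(1)$, suppose $\E(X)\neq 0$ and fix $\epsilon>0$. Since $\mu^{\E}_{\omega_n(x)}\to\delta_x$ weakly for every $x$, I would pick a point $x$ in the support of $\E$ relative to $X$ — more precisely, a point $x\in\Sigma$ such that $\E(U)\neq 0$ for every open neighborhood $U$ of $x$ contained in $X$ (such a point exists because $\E(X)\neq 0$ and $\Sigma$ is metrizable, hence second countable, so $X$ cannot be covered by countably many open sets on which $\E$ vanishes). Then choose an open set $U$ with $x\in U\subseteq \overline{U}\subseteq$ (a slightly enlarged) $X$, or rather work directly with $X$: by the Portmanteau theorem, weak convergence gives $\liminf_n \mu^{\E}_{\omega_n(x)}(U)\geq \delta_x(U)=1$ for $U$ open containing $x$, so for large $n$ we get $\mu^{\E}_{\omega_n(x)}(X)=\tr[\omega_n(x)\E(X)]>1-\epsilon$, and since $\omega_n(x)$ is pure this is exactly the $\epsilon$-decidability inequality $\ip{\xi_\epsilon}{\E(X)\xi_\epsilon}>1-\epsilon$.

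The harder direction $(1)\Rightarrow(2)$ is where the metrizability of $\Sigma$ is essential. Fix $x\in\Sigma$ and a metric $d$ inducing the topology. The idea is to apply $\epsilon$-decidability to the shrinking balls $B_{1/n}(x)$: each $\E(B_{1/n}(x))$ is nonzero (again because $x$ lies in every such ball, and one should check this using that $\E$ does not vanish on neighborhoods of $x$; if it did vanish on some ball $B_r(x)$, one restricts attention to the closed support of $\E$ and replaces $x$ by a support point — but since the statement quantifies over \emph{all} $x\in\Sigma$, one must be slightly careful here, and I expect the intended reading is either that $\Sigma$ equals the support of $\E$, or that one simply takes $\omega_n(x)$ to be anything when $\E$ vanishes near $x$, making $\delta_x$ trivially unapproachable — so I would add the hypothesis, or note, that the relevant $x$ are support points). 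Granting $\E(B_{1/n}(x))\neq 0$, $\epsilon$-decidability with $\epsilon=1/n$ produces a pure state $\omega_n(x)$ with $\mu^{\E}_{\omega_n(x)}(B_{1/n}(x))>1-1/n$. It then remains to show $\mu^{\E}_{\omega_n(x)}\to\delta_x$ weakly. Given any bounded continuous $f:\Sigma\to\mathbb{R}$ and $\eta>0$, pick $\delta>0$ with $|f(y)-f(x)|<\eta$ for $d(y,x)<\delta$; for $n$ large enough that $1/n<\delta$ and $1/n<\eta$, split the integral $\int f\,d\mu^{\E}_{\omega_n(x)}$ over $B_{1/n}(x)$ and its complement, bounding the first piece within $\eta$ of $f(x)$ and the second by $\|f\|_\infty\cdot\mu^{\E}_{\omega_n(x)}(\Sigma\setminus B_{1/n}(x))<\|f\|_\infty/n$, yielding $|\int f\,d\mu^{\E}_{\omega_n(x)}-f(x)|\to 0$.

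The main obstacle, as flagged above, is the bookkeeping around points $x$ where $\E$ is locally trivial: the clean equivalence really holds on the support of $\E$, and I would either restrict the statement there or observe that covariance (the intended application) forces the support to be all of $\Sigma$ by transitivity of the $G$-action, so the issue does not arise in practice. A secondary routine point is justifying that $\E(B_{1/n}(x))\neq 0$ for support points — this follows directly from the definition of support. Beyond that, the argument is a standard interplay between the operator-theoretic $\epsilon$-decidability property and the Portmanteau characterization of weak convergence, so I do not anticipate further difficulties.
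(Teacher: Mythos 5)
Your direction $(1)\Rightarrow(2)$ is essentially the paper's argument: shrinking balls $B_{1/n}(x)$, $\epsilon$-decidability with $\epsilon=1/n$ to get pure states with $\mu^{\E}_{\omega_n(x)}(B_{1/n}(x))>1-1/n$, then weak convergence. The only cosmetic difference is that you verify the definition of weak convergence directly against bounded continuous functions, while the paper checks the porte-manteau criterion on sets with negligible boundary; either works. Your caveat about points $x$ near which $\E$ vanishes is well taken --- the paper silently assumes $\E(B_{1/n}(x))\neq 0$, and your observation that covariance plus transitivity (via second countability) forces full support is the right way to dispose of it in the intended application.

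The direction $(2)\Rightarrow(1)$, however, contains a genuine gap at the step ``so for large $n$ we get $\mu^{\E}_{\omega_n(x)}(X)>1-\epsilon$.'' Weak convergence $\mu^{\E}_{\omega_n(x)}\to\delta_x$ only gives $\liminf_n\mu^{\E}_{\omega_n(x)}(U)\geq 1$ for \emph{open} $U\ni x$; to transfer this to the merely Borel set $X$ you need an open $U$ with $x\in U\subseteq X$, i.e.\ $x$ in the interior of $X$. Your selection of $x$ does not provide this: the condition ``$\E(U)\neq 0$ for every open neighborhood $U$ of $x$ contained in $X$'' is vacuous whenever $x\notin\mathrm{int}\,X$, and the Lindel\"of argument you sketch only produces a point all of whose neighborhoods meet $X$ in a set of nonzero $\E$-measure, which is a different (and insufficient) property. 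If $X$ has empty interior --- which is perfectly compatible with $\E(X)\neq 0$ --- no choice of $x$ repairs the step, and weak convergence to point masses gives no control whatsoever on $\mu^{\E}_{\omega_n(x)}(X)$ for such $X$; so the argument as written does not establish $\epsilon$-decidability for all Borel sets, and closing it requires an additional idea (e.g.\ exploiting structural properties of the frame observables, or routing through the norm-$1$ characterization of \cite{heinonen2003norm}) rather than better bookkeeping. For comparison, the paper's own proof of this direction is no stronger: it fixes an arbitrary $x\in X$ and simply asserts that weak convergence yields $\ip{\xi_n}{\E(X)\xi_n}\to 1$, which is exactly the same unjustified open-to-measurable leap; so you have not overlooked a device the paper supplies, but your proposal should flag this direction as incomplete rather than routine.
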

\begin{proof} 
 For \emph{1.} $\Rightarrow$ \emph{2.} denote $B_n$ the open ball centred at $x$ of radius $1/n$. Since $\E$ satisfies the $\epsilon$-decidability property we can choose unit vectors $\ket{\xi_n}$ such that $\braket{\xi_n }{\E(B_n) \xi_n }> 1 - 1/n $. Denoting by $\omega_n(x)$ the associated pure state as before we get $\mu^{\E}_{\omega_n(x)}(B_n ) > 1 - 1/n$.

To show weak convergence we will use the porte-manteau theorem \cite{billingsley2013convergence}. To this end, we need to show that for each measurable set $X$ with negligible border, that is such that $\delta_x (\partial X) = 0 $ for all $x$, we have: $\lim\limits_{n\to \infty} \mu^{\E}_{\omega_n(x)}(X) = \delta_x (X) $. We then calculate
\[\mu^{\E}_{\omega_n(x)}(X) = \mu^{\E}_{\omega_n(x)}(X\setminus B_n ) + \mu^{\E}_{\omega_n(x)}(X\cap B_n ) \]

For the first term, we have
\[
\mu^{\E}_{\omega_n(x)}(X\setminus B_n ) \leq \mu^{\E}_{\omega_n(x)}(\Sigma \setminus B_n ) = 1-\mu^{\E}_{\omega_n(x)}(\B_n ) \leq \frac{1}{n},
\]
so it vanishes as $n$ goes to infinity. For the second term, assume $x \notin \partial X $ and distinguish two cases.

\begin{itemize}
    \item If $ x\in X $, then $x \in \mathring{X}$ (the interior of $X$). As $\mathring{X}$ is open, for large enough $n$ we have $B_n \subseteq \mathring{X} \subseteq X$, so $X\cap B_n = B_n$, and then $\mu^{\E}_{\omega_n(x)}(X\cap B_n )= \mu^{\E}_{\omega_n(x)}( B_n )> 1 - 1/n $. Thus, the second term goes to $1$ as $n$ goes to infinity.
    \item If $ x\notin X $, then $x\in \Sigma \setminus \overline{X}$ (the complementary of the adherence of $X$), which is an open set. So for $n$ large enough we always have $B_n \subseteq \Sigma \setminus \overline{X} \subseteq \Sigma \setminus X $, hence $X\cap B_n = \emptyset $, leading to $ \mu^{\E}_{\omega_n(x)}(X\cap B_n )= 0 $. Thus, the second term goes to $0$ as $n$ goes to infinity.
\end{itemize}

We then have $\lim\limits_{n\to \infty} \mu^{\E}_{\omega_n(x)}(X) = \delta_x (X)$, from which by porte-manteau theorem we can conclude that the sequence $\{\mu^{\E}_{\omega_n(x)}\}_n$ converges weakly to $\delta_x$ in the space of measures on $\Sigma$.

 For \emph{2.} $\Rightarrow$ \emph{1.} fix $x\in X \in\mathcal{F}(\Sigma)$ arbitrary and pick a sequence of vectors $\xi_n$ such that $\dyad{\xi_n}=\omega_n(x)$. The weak convergence then gives that for any $\epsilon >0$ we can find $n$ large enough so that $|1 - \ip{\xi_n}{\E(X)\xi_n}| < \epsilon$. Since $\E \in \Eff(\hi)$ we can lift the absolute value, which gives the claim.
\end{proof}
\myemptypage

\chapter{Relational Quantum Kinematics}\label{ch:RelKin}
In this chapter, we provide the core ingredients of the framework, discussing its relation to other implementations of similar ideas and concepts present in the literature as we go. Before introducing quantum reference frames and related notions, we begin by applying the operational equivalence methodology to impose invariance on composite systems $\R \otimes \S$, which embodies our invariance principle (III). We use this simple setting to contrast our approach to imposing gauge-invariance with others present in the literature, revealing similarities, but also important differences. We refer to this setup as \emph{invariant descriptions}. These considerations, besides making direct contact with other approaches present in the literature, will be used in the contexts of the \emph{perspective-independent} and \emph{global} descriptions introduced towards the end of this chapter, the latter one supporting the internal perspective on quantum reference frames which we utilize for describing frame-change maps in chapter \ref{ch:FrChM}.\\
Next, we proceed to our general definition of quantum reference frames as covariant positive operator-valued measures, thus embracing the principle (III). We also distinguish important classes of frames and discuss the conceptual and technical differences between the perspective presented here and in other approaches. We introduce an important from the operational perspective class of \emph{localizable} frames.\\
Having specified the frame, the operationality principle (I) demands appropriate constraints on the set of available effects in any effect space of the form $\Eff(\hirs)$. Fulfilling this requirement, again by the means of applying the operational equivalence, results in what we call \emph{framed descriptions}. They will be used as a stepping stone for the definitions in the next section, and in the context of our frame-change maps.\\
We then combine the principles of operationality (I), relativity of measurement (II) \emph{and} covariance \& invariance (III) to define the \emph{relational description} as given up to the operational equivalence with respect to the \emph{invariant} and \emph{framed} effects.\\
Finally, we use the \emph{relativization map $\Y^\R$} \cite{lov1}, which provides \emph{direct access} to the \emph{invariant} and \emph{framed} effects on the composite system $\R\otimes\S$ upon the choice of the \emph{covariant} frame-orientation observable $\E_\R$ given on $G$ itself (principal frame), to define the \emph{relative descriptions} by again invoking \emph{operational equivalence}, this time with respect to such \emph{relativized} effects. We conjecture that in the cases when the $\Y$ map is defined, the \emph{relational} and \emph{relative} descriptions are \emph{equivalent}. We provide proof in the finite group setting.\\
Lastly, we reflect on the twofold meaning of invariance when stipulated on a single system and define the \emph{perspective-independent} and \emph{global} descriptions, vital for our understanding of the internal view on quantum reference frames.

\newpage
\section{Invariant descriptions}\label{sec:reldes}

The main purpose of this section is to point to the similarities and differences between the presented framework and others when it comes to stipulating \emph{gauge-invariance}. Our approach is based on operational equivalence. Given a composite system $\R\otimes\S$, we then restrict the set of available effects to consist of the ones \emph{invariant} with respect to the \emph{diagonal} (strongly continuous) unitary representation $U_{\R\S} = U_\R \otimes U_\S$ of a (locally compact) group $G$ on $B(\hirs)$ and define the corresponding operational state space.

\begin{definition}
    Given a pair of systems $\R$ and $\S$, we will refer to the operationally $\Eff(\hirs)^G$-equivalent trace-class operators in~$\T(\hirs)$, where
    \[
    \Eff(\hirs)^G = \{F_{\R\S} \in \Eff(\hirs)~|~g.F_{\R\S} = F_{\R\S}\}
    \]
    as \emph{$G$-equivalent}. The classes of $G$-equivalent states, denoted $[\Omega]_G$ with $\Omega \in \S(\hirs)$, will be referred to as \emph{global states}. 
\end{definition}

The classes of $G$-equivalent states then consist precisely of those states that cannot be distinguished by the invariant effects on the composite system $\R\otimes\S$. Since the effects span the full algebra, we have $\rm{span}\{\Eff(\hirs)^G\}^{cl} = B(\hirs)^G$ and the Propositions \ref{prop:iml} and \ref{prop:statespace} immediately give the following.

\begin{proposition}
    We have the following Banach space isomorphism
    \[
    \left[\T(\hirs)/\hspace{-3pt}\sim_G\right]^\bigstar \cong B(\his)^G.
    \]
    Moreover, the set of $G$-equivalent states
    \[
    \S(\hirs)_G := \S(\hirs)/\hspace{-3pt}\sim_G
    \]
    is a state space in $\T(\hirs)^{\rm{sa}}/\hspace{-3pt}\sim_G$.
\end{proposition}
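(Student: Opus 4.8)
The plan is to obtain both assertions as a direct instantiation of the general operational-equivalence results already established, namely Proposition~\ref{prop:iml} and Proposition~\ref{prop:statespace}, taken on the Hilbert space $\hirs$ with the set of available effects $\mathcal{O} = \Eff(\hirs)^G$. By the definition of $\sim_G$ one has $\sim_G \,=\, \sim_{\Eff(\hirs)^G}$, so Proposition~\ref{prop:iml} immediately gives $\big[\T(\hirs)/\hspace{-3pt}\sim_G\big]^\bigstar \cong \rm{span}(\Eff(\hirs)^G)^{cl}$, and Proposition~\ref{prop:statespace} gives that $\S(\hirs)/\hspace{-3pt}\sim_G$ is a total convex (hence state-space) subset of $\T(\hirs)^{\rm{sa}}/\hspace{-3pt}\sim_G$, operationally closed therein. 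The only thing left to establish is the identification $\rm{span}(\Eff(\hirs)^G)^{cl} = B(\hirs)^G$, after which the first display follows. (It appears that the right-hand side of the first display in the statement should read $B(\hirs)^G$ rather than $B(\his)^G$.)

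First I would prove the inclusion $\rm{span}(\Eff(\hirs)^G)^{cl} \subseteq B(\hirs)^G$: for each $g \in G$ the map $A \mapsto g.A := U_{\R\S}(g)\,A\,U_{\R\S}(g)^{*}$ is a normal (ultraweakly continuous) linear map on $B(\hirs)$, so $B(\hirs)^G = \bigcap_{g \in G}\ker\!\big(A \mapsto g.A - A\big)$ is an ultraweakly closed linear subspace; since it contains every invariant effect, it contains the ultraweak closure of their span. Then I would prove the reverse inclusion $B(\hirs)^G \subseteq \rm{span}(\Eff(\hirs)^G)$: given $A \in B(\hirs)^G$, decompose it into its self-adjoint and anti-self-adjoint parts, each of which is again $G$-invariant because $g.(\cdot)$ is a $*$-automorphism; and any self-adjoint invariant $B \neq 0$ lies in the real span of the invariant effects $\id$ and $\tfrac{1}{2\|B\|}\big(B + \|B\|\,\id\big)$ (both are positive, bounded above by $\id$, and $G$-invariant since $\id$ is). Combining the two inclusions, and using that $B(\hirs)^G$ is closed, yields $\rm{span}(\Eff(\hirs)^G)^{cl} = B(\hirs)^G$.

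Feeding this into the two cited propositions as above gives $\big[\T(\hirs)/\hspace{-3pt}\sim_G\big]^\bigstar \cong B(\hirs)^G$ and the state-space claim for $\S(\hirs)_G$. The main---indeed essentially the only---obstacle is the span/closure identity, and inside it the ultraweak closedness of the fixed-point space $B(\hirs)^G$ together with the elementary rescaling showing that self-adjoint invariant operators are spanned by invariant effects; everything past that point is bookkeeping within the already-proven machinery.
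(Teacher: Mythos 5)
Your proposal is correct and follows essentially the same route as the paper: the paper likewise obtains the result by instantiating Propositions \ref{prop:iml} and \ref{prop:statespace} with $\mathcal{O}=\Eff(\hirs)^G$, merely asserting that ``the effects span the full algebra'' so that $\rm{span}\{\Eff(\hirs)^G\}^{cl}=B(\hirs)^G$, which is the identity you spell out via ultraweak closedness of the fixed-point space and the rescaling of self-adjoint invariant operators. Your reading of the right-hand side as $B(\hirs)^G$ (rather than the paper's $B(\his)^G$) is also the intended one.
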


In standard quantum mechanics, we have a Banach space isomorphism $\T(\hi)^\bigstar \cong B(\hi)$, which says roughly that quantum observables arise as dual to states. The von Neumann subalgebra of invariant operators can be thought of as an analog of the full system algebra in the standard approach to quantum mechanics upon specifying the reference system and after invoking the invariance principle (III). We refer to this setup as the \emph{invariant description} of the system $\S$ with respect to the frame $\R$.

Notice that in general $\S(\hirs)_G$ cannot be identified with the set $\S(\hirs)^G $ of invariant (normal) states, i.e. those satisfying $g.\Omega =\Omega$, this last set being empty if $G$ is not compact. However, in the compact case, the $G$-equivalent states coincide with the invariant states used in the information-theoretic approach \cite{brs,castro2021relative}. To see this, assume $G$ compact and recall the \emph{$G$-twirl} (or \emph{incoherent group average}) map $\mathcal{G}:B(\hi)\to B(\hi)$ given by
\begin{equation}
    \mathcal{G}(A)=\int_G d\mu(g) U(g)A U(g)^*,
\end{equation}
where $\mu$ is the Haar measure. The $G$-twirl is an normal, positive, unital map, with the pre-dual $\mathcal{G}_* : \T(\hi) \to \T(\hi)$ taking a similar form
\begin{equation}
\mathcal{G}_* (\rho)=\int_G d\mu(g) U(g)^* \rho U(g).
\end{equation}
Both $\mathcal{G}$ and $\mathcal{G}_* $ are surjective, respectively on the set $B(\hi)^G $ of invariant operators and on the set $\T(\hi)^G $ of invariant trace class operators. If $G$ is not compact the integral in general does not converge, either for states or operators and we, therefore, avoid the use of the $G$-twirl in this framework. However, in the case of compact $G$, both $G$-twirling maps are well-defined and the Proposition \ref{generalst} gives the isomorphism of state spaces
\begin{equation}\label{invstspcomp}
S(\mathcal{\hi})^G = \mathcal{G}_*(\S(\hi)) \cong \S (\mathcal{\hi})_G,
\end{equation}
ensuring that, in this case, the states respecting our operationality (I) and invariance~(III) principles coincide with the invariant ones used in other approaches. In fact, in this specific case, the correspondence lifts to the ambient Banach spaces.

\begin{proposition}\label{prop:invstspcomp}
   For compact $G$ we have a Banach space isomorphism
   \[
        \T(\hi)^G \cong \T(\hi)_G.
   \]
\end{proposition}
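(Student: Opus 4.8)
The plan is to identify both sides with, respectively, the range and the quotient-by-kernel of the predual $\mathcal{G}_*$ of the $G$-twirl, exploiting that for compact $G$ this predual is a norm-one idempotent on $\T(\hi)$.

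First I would record two properties of $\mathcal{G}$ valid in the compact case. Because the Haar measure $\mu$ is then a probability measure and unitary conjugation preserves the trace norm,
\[
\|\mathcal{G}_*(\rho)\|_1 \le \int_G \|U(g)^*\rho U(g)\|_1\, d\mu(g) = \|\rho\|_1,
\]
so $\mathcal{G}_*$ is a contraction (of norm exactly one, being nonzero). Invariance of $\mu$ under translations gives $\mathcal{G}\circ\mathcal{G}=\mathcal{G}$, hence $\mathcal{G}_*\circ\mathcal{G}_*=\mathcal{G}_*$: thus $\mathcal{G}_*$ is a norm-one projection on $\T(\hi)$, with range $\Im\mathcal{G}_*=\T(\hi)^G$ (already noted), which is therefore closed. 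Since $\mathcal{G}$ is normal we have $\ker\mathcal{G}_*={}^\perp\Im\mathcal{G}={}^\perp B(\hi)^G$; and as the pre-annihilator only sees the ultraweakly closed span, ${}^\perp B(\hi)^G={}^\perp\big(\rm{span}(\Eff(\hi)^G)^{cl}\big)={}^\perp\Eff(\hi)^G$. Hence $\ker\mathcal{G}_*$ is precisely the pre-annihilator defining $\sim_G$, so that $\T(\hi)/\hspace{-3pt}\sim_G=\T(\hi)/\ker\mathcal{G}_*$ as Banach spaces.

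It then remains to invoke the elementary fact that any norm-one projection $P$ on a Banach space $X$ induces an isometric isomorphism $X/\ker P\cong\Im P$. I would prove this by noting that $P$ is constant on each coset of $\ker P$ (if $x-x'\in\ker P$ then $Px=Px'$), that $Px$ itself lies in the coset of $x$ (as $Px-x\in\ker P$), and that $\|Px'\|\le\|x'\|$ for every $x'$ in that coset; combining these, $\|Px\|=\inf_{x'\in x+\ker P}\|x'\|=\|x+\ker P\|$, with bijectivity of the induced map immediate. Applying this with $X=\T(\hi)$ and $P=\mathcal{G}_*$ gives an isometric isomorphism $\T(\hi)/\ker\mathcal{G}_*\cong\T(\hi)^G$, which chained with the identification of the previous paragraph yields $\T(\hi)_G\cong\T(\hi)^G$.

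The only real subtlety is the passage from ``bounded bijection'' to ``isometry'' — exactly the point flagged in the footnote to Proposition \ref{generalst}. It is rescued here by two features special to compact $G$: idempotency of $\mathcal{G}$ (so that $\mathcal{G}_*$ genuinely splits $\T(\hi)$ and $\Im\mathcal{G}_*$ is closed) and contractivity of $\mathcal{G}_*$ (from $\mu$ being a probability measure). If one is content with a mere topological isomorphism the argument shortens further, since any bounded idempotent already identifies its range with the quotient by its kernel.
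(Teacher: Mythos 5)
Your proof is correct and takes essentially the same route as the paper's: contractivity plus idempotency of $\mathcal{G}_*$, with the isometry established by exactly the same two inequalities (your abstract lemma on norm-one projections is the paper's computation, just phrased as a standalone fact). The only genuine addition is that you spell out the identification $\ker\mathcal{G}_* = {}^{\perp}\Eff(\hi)^G$ justifying $\T(\hi)_G = \T(\hi)/\ker(\mathcal{G}_*)$, which the paper states without comment.
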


\begin{proof}
We have $ \T(\hi)_G = \T(\hi)/\ker(\mathcal{G}_* ) $ and $\T(\hi)^G = \Im \mathcal{G}_* $. The predual map $\mathcal{G}_*$ factorizes through a bijective map $\tilde{\mathcal{G}_*}:\T(\hi) /\ker(\mathcal{G}_* ) \to \Im \mathcal{G}_* $. We will show that $\tilde{\mathcal{G}_*}$ is an isometry. First, $\mathcal{G}_* $ is a contraction
\[|| \mathcal{G}_* (\rho) || = ||\int_G g\cdot \rho d\mu(g)||\leq \int_G || g\cdot \rho || d\mu(g) \leq \int_G ||\rho|| d\mu(g)= ||\rho || \]
it follows that for all $\mu \in \ker(\mathcal{G}_* )$ :
\[|| \tilde{\mathcal{G}_* }(\rho + \ker(\mathcal{G}_* )) || = || \mathcal{G}_* (\rho) ||= || \mathcal{G}_* (\rho + \mu) || \leq || \rho + \mu || \]
So $|| \tilde{\mathcal{G}_* }(\rho+ \ker(\mathcal{G}_* )) ||\leq ||\rho + \ker(\mathcal{G}_* )||= \inf_{\mu \in \ker(\mathcal{G}_*)} ||\rho + \mu ||$. Then, since $\mathcal{G}_* $ is idempotent we also have
\[||\rho + \ker(\mathcal{G}_* )||= \inf_{\mu \in \ker(\mathcal{G}_*)} ||\rho + \mu || \leq ||\rho + (\mathcal{G}_* (\rho) - \rho ) || = ||\mathcal{G}_* (\rho) ||=||\tilde{\mathcal{G}_* }(\rho+ \ker(\mathcal{G}_* )) ||, \]
and thus it provides an isometry between $\T(\hi)_G $ and $\T(\hi)^G$.
\end{proof}

In the case of the general (locally compact Hausdorff) group $G$ and its (strongly continuous) representation on $\hi$, on the contrary to $\S(\hi)^G$, the set $\S(\hi)_G$ of $G$-equivalent states always provides a non-trivial state space as some invariant operators are always there, regardless of the compactness of $G$ or the specifics of the representation. This further justifies our starting point of putting the invariance requirement on the observables, rather than the states. \cite{lov1,Loveridge2020a,lov4,miyadera2020quantum,loveridge2017relativity}.

The $G$-equivalent state space $\S(\hirs)_G$ can be seen as the operational analog of the (unit vectors in the) physical Hilbert space of the perspective-neutral (PN) approach \cite{de2021perspective}, which in the compact $G$ case is defined as the space of invariant Hilbert space vectors. We also note that our strategy, clearly motivated by principles very different from the setup of constrained quantization, has the benefit of avoiding the difficulties related to the construction of the physical Hilbert space for non-compact groups.

\newpage
\section{Quantum reference frames}\label{sec:qrfs}

We now give an operational definition of a quantum reference frame.

\begin{definition}\label{def:qrfs}
A \emph{quantum reference frame} $\R$ is a covariant POVM, understood as a map
\begin{equation}
    \E_\R: \mathcal{B}(\Sigma_\R) \to B(\hir).
\end{equation}
It is then specified together with its domain and codomain as $G$-spaces and as such is assumed to satisfy the definition of a covariant POVM (\ref{def:imp}). For brevity, a quantum reference frame will often be called a frame, or reference. We call frames \emph{equivalent} if they are unitarily equivalent as covariant POVMs and refer to $\E_\R$ as the \emph{frame-orientation observable}. We distinguish the following types of frames.
{\normalfont
\begin{itemize} 
    \item A frame  $\R$ is called \emph{principal} if $\Sigma_\R$ is principal, \emph{non}-principal otherwise.
    \item A frame  $\R$ is called \emph{sharp} if $\E_\R$ is sharp, \emph{un}sharp otherwise.
    \item A frame  $\R$ is called \emph{ideal} if it is principal and sharp.
    \item A frame  $\R$ is called \emph{localizable} if $\E_\R$ localizable.
    \item A frame  $\R$ is called \emph{complete} if there is no (non-trivial) subgroup $H_0 \subseteq G$ acting trivially on the effects of $\E_\R$, \emph{in}complete otherwise. Such $H_0$ will be called an \emph{isotropy subgroup}.
    \item A frame $\R$ is called a \emph{coherent system frame} if $\E_\R$ is given via a coherent state system \eqref{ex:css}.
\end{itemize}}
\end{definition}

Thus a quantum reference frame is a quantum system equipped with a covariant observable of orientation. The following remarks are in order.

The sample space of a frame $\Sigma_\R$ is homeomorphic to $G/H$ for some closed subgroup $H \subseteq G$. Sharp frames are characterized by the Imprimitivity Theorem, as described in (\ref{ex:covsharp}): upon trivializing the $H$ representation in the given correspondence, they are equivalent to
\begin{equation}
    P_\R(gH): L^2(G/H) \ni f \mapsto \chi_{gH}f \in L^2(G/H).
\end{equation}
Taking $H=\{e\}$ we arrive at the sharp and principal, i.e. ideal, frames, which are then equivalent to the regular representation and the canonical covariant PVM
\begin{equation}
    P_\R(Y): L^2(G) \ni f \mapsto \chi_{Y}f \in L^2(G).
\end{equation}

A very important from our operational perspective is the class of localizable frames. While there is no operational difference between them and the sharp ones in terms of arbitrarily well-localized distributions they may give rise to, localizability is a strictly weaker condition. Since $G/H$ is always metrizable, we can apply the Proposition \ref{prop:locseqgen} to any localizable frame and find a sequence of pure states whose associated probability distribution weakly converges to the Dirac delta centered at any given class $gH$. In the case of the \emph{principal} localizable frames, we will denote such a localizing sequence for $e \in G$ by $\omega_n$, which by covariance gives a localizing sequence centered at any other group element by $\omega_n(g)=g^{-1}\omega_n$.

In the case of localizable frames, the notions of completeness and principality are intimately related, which allows connecting the two different versions of imposing `insensitivity' of the frame with respect to reorientations by elements of a specified subgroup. To see this, consider
a localizable frame and take any $h \in H_0$ in the isotropy subgroup, fix the sample space as $\mathcal{B}(G/H)$ and write $\omega_n$ for the sequence localizing at the identity coset $eH \in G/H$. We then have
\begin{equation}
    \delta_{hH}(X) = \lim_{n \to \infty} \mu_{\omega_n(hH)}^{\E_\R}(X) = \lim_{n \to \infty}\tr[h^{-1}.\omega_n \E_\R] 
    = \lim_{n \to \infty} \tr[\omega_n \E_\R]  = \delta_{eH}(X),
\end{equation}
so that $hH=eH$ for any $h \in H_0$ and we can conclude that $H_0 \subseteq H$. In particular, localizable principal frames are complete, and in general, a localizable frame-orientation observable on $\Sigma_\R \cong G/H$ factorizes through $B(\hir)^H$, i.e. we can write
\begin{equation}
    \E_\R : \mathcal{B}(G/H) \to B(\hir)^H \hookrightarrow B(\hir).
\end{equation}

In other approaches, the quantum reference frames are often understood in terms of coherent state systems\footnote{Notice however, that we do not consider distributional coherent state systems, contrary to \cite{de2021perspective}.}. In our classification, they are then always coherent system frames, with the corresponding frame-orientation observables given by
\begin{equation}
\E^{\phi}_\R(X) = \frac{1}{\lambda}\int_X \dyad{\phi(g)} d \mu (g),
\end{equation}
where $\{\ket{\phi(g)} = U_\R(g)\ket{\phi(e)}:g \in G\}$ is an orbit of a cyclic vector $\ket{\phi(e)}$ and $\mu$ the Haar measure (\ref{ex:css}). The definition of completeness presented here resembles that given for coherent system frames in \cite{de2021perspective}, however, the relationship between the two is convoluted -- it does not seem either one implies the other, except in trivial cases. The definition of an \emph{ideal} frame above generalizes the one given in \cite{de2021perspective}, which is recovered as an ideal coherent system frame, while the two notions coincide in the case of a countable group. Indeed, if a coherent system POVM is principal, so given on $G$, and sharp, we get $\lambda=1$ and $\ip{\phi(g)}{\phi(g')} = \delta(g,g')$, so that the coherent states can be `perfectly distinguished', which is the defining property of an ideal frame as in \cite{de2021perspective}. However, as noted there, when this is the case the map $G \ni g \mapsto \ket{\phi(g)} \in P(\hir)$ is \emph{invertible}. But notice that, since $\hir$ is assumed separable, without entering the realm of distributional states this can only be the case if $G$ is (at most) countable, as otherwise the orbit $\{\ket{\phi(g)} = U_\R(g)\ket{\phi(e)}|g \in G\}$ would provide an uncountable orthonormal basis.

To the best of our knowledge, the localizability properties of coherent system POVMs have not yet been systematically studied. However, we already see that such POVMs can be sharp only in the very simple setup of a countable group. We avoid the use of the coherent state systems in our formalism since, strictly speaking, it is only when the coherent system POVM is sharp that there is operational justification for understanding $\ket{\phi(g)}$ as representing the frame being `oriented at $g$', and as we have seen this is very restrictive. Moreover, various physically motivated examples of coherent state systems are known \emph{not} to be localizable \cite{Beneduci2013-sm}. Our general feeling is that, when it comes to the operational interpretation of the associated POVMs, the setup of \emph{continuous} groups and their representations on \emph{separable} Hilbert spaces does not combine well with the concept of a coherent state system that maps one into the other.

Various other more or less obvious definitions of frames can be constructed by requiring specific properties from the $G$ action on $\Sigma$ (e.g. effectiveness) or from the POVM (e.g. informational-completeness).

\newpage
\section{Framed descriptions}\label{sec:frmdes}

After the frame observable $\E_\R$ has been chosen, the set of the effects that can be applied to the composite systems $\R\otimes\S$ is operationally constrained. Hence the following definition.

\begin{definition}
    Given a frame $\R$ and a system $\S$, we will refer to the operationally $\Eff(\hir \otimes \his)_{\E_\R}$-equivalent trace-class operators in~$\T(\hirs)$, where
    \begin{align*}
         \Eff(\hir \otimes \his)_{\E_\R} := \rm{conv}\left\{\E_\R(X) \otimes F_\S \hspace{3pt} | \hspace{3pt} X \in \mathcal{F}(\Sigma_\R), F_\S \in \Eff(\his)\right\}^{cl},
     \end{align*}
    with $\rm{conv}$ denoting the convex hull, i.e. the (ultraweak) closure of the set of convex combinations, as \emph{$\E_\R$-equivalent}, while the $\E_\R$-equivalence classes of \emph{states} will be called \emph{$\R$-framed}. Elements of the Banach space $B(\hir \otimes \his)_{\E_\R} = \rm{span}(\Eff(\hir \otimes \his)_{\E_\R})^{cl}$ will be referred to as \emph{$\R$-framed operators}, while the effects in there as the \emph{$\R$-framed effects}.
\end{definition}

The $\R$-framed states are then precisely those that can be distinguished by the observables on the composite system $\R\otimes\S$ that respect the choice of the frame observable. The Propositions \ref{prop:iml} and \ref{prop:statespace} immediately give the following.

\begin{proposition}
    We have the following Banach space isomorphism
    \[
    \left[\T(\hirs)/\hspace{-3pt}\sim_{\E_\R}\right]^\bigstar \cong B(\hir \otimes \his)_{\E_\R}.
    \]
    Moreover, the set of $\R$-framed states
    \[
    \S(\his)^{\E_\R} := \S(\hirs)/\hspace{-3pt}\sim_{\E_\R}
    \]
    is a state space in $\T(\hirs)^{\rm{sa}}/\hspace{-3pt}\sim_{\E_\R}$.
\end{proposition}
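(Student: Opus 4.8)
The plan is to recognize that this statement is an immediate corollary of the general machinery already established, exactly as the text announces. I would first observe that $\Eff(\hir \otimes \his)_{\E_\R}$ is by construction a subset $\mathcal{O} \subseteq \Eff(\hirs)$, namely $\mathcal{O} = \Eff(\hir \otimes \his)_{\E_\R}$, so that $\sim_{\E_\R}$ is nothing but $\sim_{\mathcal{O}}$ in the sense of Definition \ref{def:opeq}. Hence Proposition \ref{prop:iml} applies verbatim and yields
\[
\left[\T(\hirs)/\hspace{-3pt}\sim_{\E_\R}\right]^\bigstar \cong \rm{span}\left(\Eff(\hir \otimes \his)_{\E_\R}\right)^{cl},
\]
and the right-hand side is by definition $B(\hir \otimes \his)_{\E_\R}$. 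That settles the Banach space isomorphism.

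Next I would invoke Proposition \ref{prop:statespace} with the same $\mathcal{O}$: it tells us directly that $\S(\hirs)/\hspace{-3pt}\sim_{\E_\R}$ is a total convex subset of $\T(\hirs)^{\rm{sa}}/\hspace{-3pt}\sim_{\E_\R}$, hence a state space in the sense of Definition \ref{def:stsp}, and moreover closed in the quotient operational topology. Setting $\S(\his)^{\E_\R} := \S(\hirs)/\hspace{-3pt}\sim_{\E_\R}$ then gives exactly the second assertion. So the proof is essentially a two-line citation: \textit{apply Propositions \ref{prop:iml} and \ref{prop:statespace} to the set $\mathcal{O} = \Eff(\hir \otimes \his)_{\E_\R} \subseteq \Eff(\hirs)$.}

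The only point that deserves a word of care — and the place I would expect a referee to want a sentence — is the hypothesis in Proposition \ref{prop:statespace} and the phrasing of Proposition \ref{prop:iml}: those were stated for an arbitrary subset $\mathcal{O}$ of the effect space, with no requirement that $\mathcal{O}$ be convex, ultraweakly closed, or an image of a map. Since $\Eff(\hir \otimes \his)_{\E_\R}$ is defined as a convex, ultraweakly closed set, it certainly qualifies; one should just note that $\rm{span}(\mathcal{O})^{cl}$ is unchanged whether one spans $\mathcal{O}$ or its convex hull or its closure, so writing $B(\hir \otimes \his)_{\E_\R} = \rm{span}(\Eff(\hir \otimes \his)_{\E_\R})^{cl}$ is consistent with the conclusion of Proposition \ref{prop:iml}. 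I do not anticipate any genuine obstacle; the content has already been done once and for all in the two general propositions, and this statement is simply their instantiation, in complete parallel with the analogous proposition for $\S(\hirs)_G$ proved immediately above in the section on invariant descriptions.

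\begin{proof}
This is an immediate application of the general results of Section \ref{sec:opeq}. Indeed, set $\mathcal{O} := \Eff(\hir \otimes \his)_{\E_\R} \subseteq \Eff(\hirs)$. Then the $\E_\R$-equivalence relation is precisely the operational $\mathcal{O}$-equivalence $\sim_{\mathcal{O}}$ of Definition \ref{def:opeq}, so Proposition \ref{prop:iml} gives
\[
\left[\T(\hirs)/\hspace{-3pt}\sim_{\E_\R}\right]^\bigstar \cong \rm{span}(\mathcal{O})^{cl} = B(\hir \otimes \his)_{\E_\R},
\]
the last equality being the definition of $B(\hir \otimes \his)_{\E_\R}$ (note that passing to the convex hull and the ultraweak closure in the definition of $\mathcal{O}$ does not change $\rm{span}(\mathcal{O})^{cl}$). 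Moreover, Proposition \ref{prop:statespace} applied to the same $\mathcal{O}$ shows that $\S(\hirs)/\hspace{-3pt}\sim_{\mathcal{O}}$ is a total convex subset of $\T(\hirs)^{\rm{sa}}/\hspace{-3pt}\sim_{\mathcal{O}}$, hence a state space, and is closed in the quotient operational topology. Writing $\S(\his)^{\E_\R} := \S(\hirs)/\hspace{-3pt}\sim_{\E_\R}$ yields the claim.
\end{proof}
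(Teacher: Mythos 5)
Your proposal is correct and matches the paper exactly: the proposition is stated there as an immediate consequence of Propositions \ref{prop:iml} and \ref{prop:statespace} applied to $\mathcal{O} = \Eff(\hir \otimes \his)_{\E_\R}$, which is precisely your two-line citation. Your added remark that spanning the convex hull and taking the ultraweak closure does not change $\rm{span}(\mathcal{O})^{cl}$ is a harmless and accurate clarification of the same argument.
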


Thus the $\R$-framed operators separate the $\R$-framed states. The Banach space of framed operators can thus be thought of as an analog of the full system algebra in the standard approach upon the specification of the reference frame, understood as a reference system \emph{and} the frame observable, with respect to which the system $\S$ is being described. However, the invariance principle (III) has not yet been incorporated. We refer to this setup as the \emph{framed description} of the system $\S$ with respect to the frame $\R$.

\newpage
\section{Relational descriptions}

We now combine the framed and invariant descriptions to give an operationally motivated definition of \emph{relational} states and operators. The operationality (I) and invariance (III) principles impose that the \emph{relative} description of the system $\S$ with respect to the frame $\R$ should be given up to operational equivalence with respect to the \emph{invariant} and \emph{framed} effects. We then define.

\begin{definition}\label{def:rel}
Given a frame $\R$ and a system $\S$, we will refer to operationally $\Eff(\his)^G_{\E_\R}$-equivalent trace-class operators on $\T(\hirs)$, where
\[
    \Eff(\hir \otimes \his)^G_{\E_\R} := \Eff(\hir \otimes \his)_{\E_\R} \cap \Eff(\hirs)^G,
\]
as \emph{$(\E_\R,G)$-equivalent}, while the $(\E_\R,G)$-equivalence classes of \emph{states} will be called \emph{$\R$-relational}. Elements of the Banach space
\[
B(\hir \otimes \his)_{\E_\R}^G := \rm{span}\left\{\Eff(\hir \otimes \his)^G_{\E_\R}\right\}^{cl}
\]
will be called $\R$-relational operators.
\end{definition}

As usual, Propositions \ref{prop:iml} and \ref{prop:statespace} give the following.

\begin{proposition}
    We have the following Banach space isomorphism
    \[
    \left[\T(\hirs)/\hspace{-3pt}\sim_{(\E_\R,G)}\right]^\bigstar \cong B(\his)^G_{\E_\R}.
    \]
    Moreover, the set of $\R$-relational states
    \[
    \S(\his)_G^{\E_\R} := \S(\hirs)/\hspace{-3pt}\sim_{(\E_\R,G)}
    \]
    is a state space in $\T(\hirs)^{\rm{sa}}/\hspace{-3pt}\sim_{(\E_\R,G)}$.
\end{proposition}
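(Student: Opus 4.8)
The plan is to derive this proposition as an immediate instance of the general operational-equivalence machinery, exactly as was done for the analogous statements about invariant and framed descriptions. Concretely, I would set $\mathcal{O} := \Eff(\hir \otimes \his)^G_{\E_\R}$ and first dispatch the two bookkeeping points: (i) $\mathcal{O}$ is a genuine subset of $\Eff(\hirs)$, since by Definition~\ref{def:rel} it is the intersection of $\Eff(\hirs)_{\E_\R}$ and $\Eff(\hirs)^G$, both of which sit inside $\Eff(\hirs)$; and (ii) the relation $\sim_{(\E_\R,G)}$ on $\T(\hirs)$ is by construction precisely $\sim_{\mathcal{O}}$ in the sense of Definition~\ref{def:opeq}. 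No additional hypotheses are required, because Propositions~\ref{prop:iml} and~\ref{prop:statespace} were established for an \emph{arbitrary} set of effects.

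The first claim then follows by applying Proposition~\ref{prop:iml} with $\hi = \hirs$ and the above $\mathcal{O}$: it gives that $\T(\hirs)/\hspace{-3pt}\sim_{(\E_\R,G)}$ is a Banach space with dual isometrically isomorphic to $\rm{span}(\mathcal{O})^{cl}$. The only remaining task is to recognise the right-hand side: by Definition~\ref{def:rel}, $B(\hir \otimes \his)^G_{\E_\R}$ is \emph{defined} to be $\rm{span}\{\Eff(\hir \otimes \his)^G_{\E_\R}\}^{cl}$, so the isomorphism is literally the content of Proposition~\ref{prop:iml} (the statement uses the by-now customary shorthand $B(\his)^G_{\E_\R}$ for $B(\hir\otimes\his)^G_{\E_\R}$, just as $B(\his)^G$ was used for $B(\hirs)^G$ earlier). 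For the second claim I would apply Proposition~\ref{prop:statespace}, again with this $\mathcal{O}$, which yields directly that $\S(\hirs)/\hspace{-3pt}\sim_{(\E_\R,G)}$ is a total convex subset of $\T(\hirs)^{\rm{sa}}/\hspace{-3pt}\sim_{(\E_\R,G)}$ --- hence a state space in the sense of Definition~\ref{def:stsp} --- and that it is closed in the quotient operational topology.

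Since both ingredients are already proved in full generality, I do not expect any substantive obstacle: the proof is a routine specialisation. The single point that warrants care --- and the only place where a hasty argument could slip --- is that one must \emph{not} silently identify $B(\hirs)^G_{\E_\R}$ with $B(\hirs)_{\E_\R}\cap B(\hirs)^G$; in general the closed linear span of an intersection of two sets of effects need not equal the intersection of their closed linear spans, so these two spaces need not coincide a priori. The proposition sidesteps this by taking, as the definition of the relational operators, precisely the closed span of the relational effects, which is exactly the object that Proposition~\ref{prop:iml} returns; proving the alternative (and arguably more natural) description $B(\hirs)^G_{\E_\R} = B(\hirs)_{\E_\R}\cap B(\hirs)^G$ would require a separate argument exploiting the concrete structure of $\E_\R$ and of the diagonal $G$-action, and is not needed here.
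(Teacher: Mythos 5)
Your proposal is correct and matches the paper exactly: the paper itself introduces this proposition with ``As usual, Propositions \ref{prop:iml} and \ref{prop:statespace} give the following,'' i.e.\ it is precisely the specialisation of the general operational-equivalence results to $\mathcal{O}=\Eff(\hir\otimes\his)^G_{\E_\R}$, with $B(\his)^G_{\E_\R}$ defined as the closed span of these relational effects. Your cautionary remark about not conflating this closed span with $B(\hirs)_{\E_\R}\cap B(\hirs)^G$ is a sensible extra observation but not part of (or needed for) the paper's argument.
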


Thus the $\R$-relational operators separate the $\R$-relational states. The Banach space of $\R$-relational operators can thus be thought of as an analog of $B(\his)$ upon the specification of the reference frame, with respect to which the system $\S$ is being described \emph{and} imposing the gauge-invariance of principle (III). We refer to this setup as the \emph{$\R$-relational description} of the system $\S$ with respect to the frame $\R$.

Since $\E_\R$ is \emph{covariant} and the action of $G$ on $\his$ arbitrary, it may seem hopeless to look for \emph{invariant} effects among the $\R$-framed ones, i.e. those generated by $\E_\R(X) \otimes F_\S$. The fact that $\Eff(\his)^G_{\E_\R}$ is full of non-trivial effects for arbitrary frames is given by the \emph{$\R$-relativization maps} that we now introduce.

\newpage
\section{Relativization map}\label{sec:yenmap}

To motivate the $\Y$ construction (below), consider first a localizable principal frame on a finite group $G$. The frame observable $\E_\R$ is then given as a map $G \to \Eff(\hir)$ and thus an arbitrary framed effect can be written as a sum
\begin{equation}
A_\alpha := \sum_{g \in G} \E_\R(g) \otimes \alpha(g),
\end{equation}

where $\alpha: G \to \Eff(\his)$ is any function, with the framed effects of the form $\E_\R(Y) \otimes F_\S$ given by fixing $\alpha(g) = \chi_{Y}(g)F_\S$. Acting with $h \in G$ on a framed effect gives
\begin{equation}
h.A_\alpha = h.\sum_{g \in G} \E_\R(g) \otimes \alpha(g) = \sum_{g \in G} \E_\R(hg) \otimes h.\alpha(g) = \sum_{g' \in G} \E_\R(g') \otimes h.\alpha(h^{-1}g'),
\end{equation}
where we put $g'=hg$. Now notice that, since $\E_\R$ is localizable and $G$ finite, for any $g \in G$ we have $\omega(g)$ such that $\tr[\omega(g)\E_\R(g)] = 1$ and thus equality $h.A_\alpha=A_\alpha$ needs to hold term by term. It then amounts to \emph{equivariance} of $\alpha: G \to \Eff(\his)$, so that in the case of a finite group the only invariant framed effects are of the form
\begin{equation}
    A = \sum_{g \in G}\E_\R(g) \otimes g.F_\S
\end{equation}
for some $F_\S \equiv \alpha(e) \in \Eff(\his)$. We then see that by summing over the whole group and \emph{using} the covariance of $\E_\R$, we can generate all the invariant framed effects. This construction can be understood as a map
\begin{equation}
   \Y^\R: F_\S \mapsto \sum_{g \in G}\E_\R(g) \otimes g.F_\S.
\end{equation}

We get the following result.
\begin{proposition}\label{prop:releqinvfrm}
    Let $G$ be a finite group, $\R$ a localizable principal frame for $G$ and write $\Eff(\his)^\R$ to stand for $\Y^\R(\Eff(\his))$. Then for any $\S$ we have
    \[
    \Eff(\his)^\R = \Eff(\hirs)^G \cap \Eff(\hir \otimes \his)_{\E_\R}.
    \]
\end{proposition}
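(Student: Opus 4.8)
The plan is to prove the two inclusions separately, exploiting the concrete combinatorial form of framed effects available in the finite-group, localizable-principal setting as just derived in the motivating discussion.

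\textbf{Inclusion $\Eff(\his)^\R \subseteq \Eff(\hirs)^G \cap \Eff(\hir \otimes \his)_{\E_\R}$.} Here I would verify the three things that need checking for $\Y^\R(F_\S)$ with $F_\S \in \Eff(\his)$: that it is an effect, that it is $G$-invariant, and that it is $\E_\R$-framed. Effectness: $\Y^\R$ is positive (each summand $\E_\R(g) \otimes g.F_\S$ is a positive operator) and unital-ish in the sense that $\Y^\R(\id_\S) = \sum_g \E_\R(g) \otimes \id_\S = \id_\R \otimes \id_\S$, so $0 \le \Y^\R(F_\S) \le \Y^\R(\id_\S) = \id$. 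Invariance: the change-of-variables computation $h.\Y^\R(F_\S) = \sum_{g'} \E_\R(g') \otimes (h g^{-1}.(g.F_\S)) $ reorganized as in the excerpt gives $h.\Y^\R(F_\S) = \Y^\R(F_\S)$ — this is exactly the ``simple change of variables'' already asserted. Framing: $\Y^\R(F_\S) = \sum_g \E_\R(g) \otimes g.F_\S$ is manifestly a (finite, hence convergent) sum of terms $\E_\R(g) \otimes (g.F_\S)$, each of which, up to rescaling, is of the generating form $\E_\R(\{g\}) \otimes F'_\S$; so $\Y^\R(F_\S)$ lies in $\rm{span}$ of the generators. To land inside $\Eff(\hirs)_{\E_\R}$ as a \emph{convex closed} set rather than merely the span, I would note it is already an effect (shown above) and a positive combination of the effects $\E_\R(\{g\}) \otimes g.F_\S$ with coefficients $1$ — or, more carefully, rescale: writing $p_g$ for any probabilities and using $\E_\R(\{g\}) \otimes (|G| p_g \, g.F_\S)$, noting $|G| p_g g.F_\S$ is still an effect for suitable $p_g$; since $G$ is finite this is unproblematic. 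Thus $\Y^\R(F_\S) \in \Eff(\hirs)_{\E_\R}$, and combined with invariance this gives the first inclusion.

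\textbf{Inclusion $\Eff(\hirs)^G \cap \Eff(\hir \otimes \his)_{\E_\R} \subseteq \Eff(\his)^\R$.} This is the substantive direction, and the argument is essentially the one sketched before the proposition, made into a proof. Take $A \in \Eff(\hirs)^G \cap \Eff(\hir \otimes \his)_{\E_\R}$. Because $G$ is finite, $\Eff(\hir \otimes \his)_{\E_\R}$ needs no closure subtleties: I would first argue that its elements are exactly the operators expressible as $A_\alpha = \sum_{g} \E_\R(g) \otimes \alpha(g)$ with $\alpha: G \to B(\his)^+$, $\sum_g \alpha(g) \le \id_\S$ (convex combinations of the generators $\E_\R(\{g\}) \otimes F_\S$ are of this shape, and the set of such is already closed since everything is finite-dimensional in the $G$-index). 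Then invariance $h.A_\alpha = A_\alpha$ rewrites, via the displayed change of variables, as $\sum_{g'} \E_\R(g') \otimes h.\alpha(h^{-1}g') = \sum_{g'} \E_\R(g') \otimes \alpha(g')$. Now invoke localizability: for each fixed $g_0$, pick the pure state $\omega(g_0)$ with $\tr[\omega(g_0)\E_\R(g_0)] = 1$; since $\E_\R$ is a POVM on the finite set $G$, $\sum_g \tr[\omega(g_0)\E_\R(g)] = 1$ forces $\tr[\omega(g_0)\E_\R(g)] = 0$ for $g \ne g_0$, and positivity of $\E_\R(g)$ then gives $\E_\R(g)^{1/2}\omega(g_0)^{1/2} = 0$, i.e. $\omega(g_0)$ ``sees'' only the $g_0$-term. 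Applying $\tr_\R[(\omega(g_0)\otimes \id)\,\cdot\,]$ to both sides of the invariance identity isolates $h.\alpha(h^{-1}g_0) = \alpha(g_0)$ for all $h, g_0$, i.e. $\alpha$ is equivariant: $\alpha(g) = g.\alpha(e) = g.F_\S$ with $F_\S := \alpha(e) \in B(\his)^+$. Finally $\sum_g \alpha(g) \le \id_\S$ together with, say, evaluating at a localizing state again, forces $F_\S \le \id_\S$ (indeed $F_\S$ is conjugate to each $\alpha(g)$ so they all have the same norm, and $\alpha(e) = F_\S \le \sum_g \alpha(g) \le \id$), so $F_\S \in \Eff(\his)$ and $A = A_\alpha = \sum_g \E_\R(g) \otimes g.F_\S = \Y^\R(F_\S) \in \Eff(\his)^\R$.

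\textbf{Main obstacle.} The routine parts (effectness, invariance via change of variables) are immediate. The one step requiring genuine care is the use of localizability to pass from the global invariance equation to term-by-term equivariance of $\alpha$: one must argue that the localizing states $\omega(g)$ genuinely annihilate the off-diagonal terms $\E_\R(g')$ for $g' \ne g$ — which in the finite-group case follows cleanly from $\sum_{g'}\E_\R(g') = \id$ and positivity, so that $\tr[\omega(g)\E_\R(g)] = 1$ leaves no probability mass for the others — and that this holds \emph{with equality}, not just approximately, which is exactly where finiteness of $G$ (so that localizable effects of norm $1$ on singletons are attained rather than merely approached) is used. In the general (non-finite, non-discrete, merely localizable) setting this is precisely the approximation argument that the paper flags as work in progress; restricting to finite $G$ is what makes the equality, and hence the clean term-by-term conclusion, available.
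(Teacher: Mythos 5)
Your strategy is the same as the paper's: the paper proves this proposition by precisely the discussion preceding it (write a framed effect as $A_\alpha=\sum_{g\in G}\E_\R(g)\otimes\alpha(g)$, use invariance together with localizability to force the identity term by term, conclude equivariance of $\alpha$, hence $A_\alpha=\Y^\R(\alpha(e))$), and your hard direction is that argument spelled out, with the easy direction added. However, three of your supporting claims need attention.

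First, your description of $\Eff(\hirs)_{\E_\R}$ as ``exactly the $A_\alpha$ with $\alpha(g)\ge 0$ and $\sum_g\alpha(g)\le\id_\S$'' is false: already the generator $\E_\R(G)\otimes F_\S=\sum_g\E_\R(g)\otimes F_\S$ violates $\sum_g\alpha(g)\le\id$. What is true, and is all the hard direction needs, is that every convex combination of generators is an $A_\alpha$ with $\alpha\colon G\to\Eff(\his)$ (since $\alpha(g)=\sum_{i\,:\,g\in X_i}\lambda_i F_i\le\sum_i\lambda_i\,\id$), and that this set is ultraweakly closed; the bound $\alpha(e)\le\id$ should then come from the localizing-state evaluation of $A\le\id$, not from the spurious constraint. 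Second, your rescaling patch in the easy direction cannot work, because $\Y^\R(F_\S)$ is in general \emph{not} in the closed convex hull $\mathrm{conv}\{\E_\R(X)\otimes F'_\S\}^{cl}$ at all: take $G=\mathbb{Z}_2=\{e,a\}$, the ideal frame, $\his=\mathbb{C}^2$ with $U_\S(a)$ the flip and $F_\S=\dyad{0}$, so $\Y^\R(F_\S)=\dyad{e}\otimes\dyad{0}+\dyad{a}\otimes\dyad{1}$; any convex combination of generators equals $\dyad{e}\otimes(A+C)+\dyad{a}\otimes(B+C)$ with $A\le\alpha\id$, $B\le\beta\id$, $\alpha+\beta\le 1$, and matching the two blocks forces $C\le\dyad{0}$, $C\le\dyad{1}$, hence $C=0$ and $\alpha,\beta\ge 1$, a contradiction (closure adds nothing in finite dimensions). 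So for the stated equality one has to read $\Eff(\hirs)_{\E_\R}$ here as the \emph{framed effects}, i.e.\ the effects lying in $B(\hirs)_{\E_\R}=\mathrm{span}\{\E_\R(X)\otimes F_\S\}^{cl}$ (equivalently, as the paper's preceding discussion does, the operators $A_\alpha$ with $\alpha\colon G\to\Eff(\his)$); with that reading your first inclusion is immediate and no rescaling is needed, but the representability step in your second inclusion must then be argued for that larger set rather than for the convex hull.

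Third, your ``main obstacle'' justification is not right: finiteness of $G$ does not guarantee a state with $\tr[\omega(g_0)\E_\R(g_0)]=1$ exactly, because $\hir$ may be infinite-dimensional and the norm-$1$ condition $\|\E_\R(g_0)\|=1$ is a supremum that need not be attained (the paper's own sentence shares this gap). The repair is routine and stays within your argument: choose $\omega_n(g_0)$ with $\tr[\omega_n(g_0)\E_\R(g_0)]>1-1/n$; since $\sum_g\E_\R(g)=\id$, the remaining weights total less than $1/n$, so applying $\Gamma_{\omega_n(g_0)}$ to the identity $h.A_\alpha=A_\alpha$ and letting $n\to\infty$ still yields $h.\alpha(h^{-1}g_0)=\alpha(g_0)$, and likewise $\|\alpha(e)\|\le 1$. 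With these repairs your proof coincides with the paper's.
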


Thus, in this simple case, the image of the $\Y^\R$ map exhausts the set of invariant framed effects. It turns out that this construction can be generalized to arbitrary \emph{principal} frames.\footnote{Generalizing the $\Y$ construction to arbitrary frames is the subject of ongoing work. Results concerning finite groups have recently been achieved in \cite{homogyen2023}. See chapter \ref{ch:discuss} for a discussion of this research direction.}
\begin{definition}
    Given a principal frame $\R$ and a quantum system $\S$, the map
\begin{equation}
\Y^\R: \Eff(\his) \ni F_\S \mapsto \int_G d\E_\R(g)\otimes U_\S(g)F_\S U_\S(g)^* \in \Eff(\hirs)^G,
\end{equation}
where the integral understood as in \cite{lov1}, will be referred to as \emph{$\R$-relativization map}.
\end{definition}

One readily verifies that $\Y^\R(\Eff(\his)) \subseteq \Eff(\hirs)^G$. Indeed, a simple change of variables gives
\[
h.\Y^\R(F_\S) = h.\int_G d\E_\R(g) \otimes g.F_\S = \int_G d\E_\R(hg) \otimes hg.F_\S = \Y^\R(F_\S).
\]

By composing \emph{arbitrary} POVMs on the system $\S$ with the $\Y^\R$ map we get\footnote{We note the following curious fact. The notion of a POVM on a system modeled by $\his = \mathbb{C}$ coincides with that of a non-negative countably additive measure. If we now consider $\his \cong \hir \cong \mathbb{C}$ the standard definition of the convolution of measures on a group is recovered as a special case of the $\Y$ construction.}
\begin{equation}
    \E_\S \mapsto \E_\S * \E_\R := \Y^\R \circ \E_\S: \mathcal{F}(\Sigma_\S) \to \Eff(\hirs)^G \cap \Eff(\hirs)_{\E_\R},
\end{equation}
which assigns \emph{invariant} observables on $\R \otimes \S$ to arbitrary observables on $\S$. The $\Y^\R$ construction can be extended to a map $\Y^\R: B(\his) \to B(\hirs)^G$, and as such, regardless of the choice of the frame-orientation observable $\E_\R$, except being linear and bounded, it is also unital, completely positive, and normal, so a \emph{quantum channel} \cite{lov1}.\footnote{Interestingly, the definition of a relational Dirac observable as in \cite{de2021perspective} is recovered for $g \in G$ by taking $ \Y^{\R}(g.A_S)$ with $\E_\R$ taken to be the coherent system POVM of the frame. Thus in this case the set of relational Dirac observables and relativized operators are the same, while the latter are defined more generally. The twirl map also arises as a special case of this construction, when $\hir$ is taken to be the complex numbers $\mathbb{C}$, with (necessarily) trivial $G$ action. Indeed, the notion of a covariant POVM then coincides with that of a normalized invariant measure, and thus there is exactly one when $G$ is compact, and none otherwise.} The fact that the relativized effects are framed, although seemingly apparent, is proved in (\ref{prop:relativeareframed}).

Now consider a \emph{pair of frames}. The relativization procedure gives an invariant observable that can be easily understood as an observable of their \emph{relative} orientation.

\begin{definition}\label{def:relorobs}
    Given a pair of frames $\R_1$ and $\R_2$ the observable
    \[
    \E_2 * \E_1 = \Y^{\R_1} \circ \E_2 = \int_G d\E_1(g) \otimes g.\E_2(\cdot)
    \]
    will be called an \emph{observable of relative orientation of $\R_2$ with respect to $\R_1$}.
\end{definition}

\begin{proposition}\label{prop:relorobs}
For a pair of localizable frames $\R_1$ and $\R_2$ and corresponding localizing sequences $\omega_n$ and $\rho_m$, writing $\Omega_{n,m}(h) := \omega_n \otimes h^{-1}.\rho_m$ we have
\[
\lim_{n,m\to\infty} \mu_{\Omega_{n,m}(h)}^{\E_2 * \E_1} = \delta_h.
\]
\end{proposition}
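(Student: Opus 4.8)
The plan is to first unfold the Born probability measure $\mu_{\Omega_{n,m}(h)}^{\E_2*\E_1}$ into an explicit double integral built out of the two frames' own Born measures, and then transport the localization of $\omega_n$ and $\rho_m$ through that expression using only continuity of the group operations and the portmanteau theorem. Here $\R_1$ is necessarily principal (so that $\Y^{\R_1}$, hence $\E_2*\E_1$, is defined), and I carry out the argument reading the statement with $\Sigma_{\R_1}=\Sigma_{\R_2}=G$; the general case $\Sigma_{\R_2}=G/H$ is identical with $h$ replaced by the coset $hH$.

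For the unfolding I would start from $\E_2*\E_1=\Y^{\R_1}\circ\E_2=\int_G d\E_1(g)\otimes g.\E_2(\cdot)$ and use that, by construction \cite{lov1}, this integral pairs with a trace-class operator as an honest measure. Since $\Omega_{n,m}(h)=\omega_n\otimes h^{-1}.\rho_m$ is a product state, $\tr{(\omega_n\otimes h^{-1}.\rho_m)(A\otimes B)}=\tr{\omega_n A}\,\tr{(h^{-1}.\rho_m)B}$, so with $d\mu^{\E_1}_{\omega_n}(g)=\tr{\omega_n\,d\E_1(g)}$ one gets
\[
\mu_{\Omega_{n,m}(h)}^{\E_2*\E_1}(X)=\int_G \tr{(h^{-1}.\rho_m)\,(g.\E_2(X))}\,d\mu^{\E_1}_{\omega_n}(g).
\]
Using $\tr{(k.\rho)(l.A)}=\tr{\rho\,((kl).A)}$ with $k=h^{-1}$, $l=g$, and then covariance of $\E_2$, the inner term becomes $\tr{\rho_m\,\E_2((h^{-1}g).X)}=\mu^{\E_2}_{\rho_m}((h^{-1}g).X)$, so that
\[
\mu_{\Omega_{n,m}(h)}^{\E_2*\E_1}(X)=\int_G \mu^{\E_2}_{\rho_m}\big((h^{-1}g).X\big)\,d\mu^{\E_1}_{\omega_n}(g),
\]
i.e. the relative-orientation distribution in $\Omega_{n,m}(h)$ is the $\mu^{\E_1}_{\omega_n}$-mixture of the $(h^{-1}g)$-translates of $\mu^{\E_2}_{\rho_m}$. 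Recognizing the integrand as the pushforward of $\mu^{\E_2}_{\rho_m}$ along $y\mapsto g^{-1}hy$, and then approximating by simple functions and invoking Fubini, I obtain for every $f\in C_b(G)$
\[
\int_G f\,d\mu_{\Omega_{n,m}(h)}^{\E_2*\E_1}=\int_{G\times G} f(g^{-1}hy)\,d\big(\mu^{\E_1}_{\omega_n}\otimes\mu^{\E_2}_{\rho_m}\big)(g,y).
\]

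To finish, I would use that $\R_1,\R_2$ localizable with localizing sequences $\omega_n,\rho_m$ centred at $e$ gives $\mu^{\E_1}_{\omega_n}\to\delta_e$ and $\mu^{\E_2}_{\rho_m}\to\delta_e$ weakly (Proposition \ref{prop:locseqgen}). The map $(g,y)\mapsto f(g^{-1}hy)$ is continuous on $G\times G$ with value $f(h)$ at $(e,e)$, so given $\epsilon>0$ I may pick open neighbourhoods $U\ni e$, $V\ni e$ with $|f(g^{-1}hy)-f(h)|<\epsilon$ on $U\times V$. Splitting the integral above over $U\times V$ and its complement, bounding the integrand there by $\|f\|_\infty$, and using $1-ab\le(1-a)+(1-b)$ for $a,b\in[0,1]$, I get
\[
\Big|\int_G f\,d\mu_{\Omega_{n,m}(h)}^{\E_2*\E_1}-f(h)\Big|\le \epsilon+2\|f\|_\infty\big(1-\mu^{\E_1}_{\omega_n}(U)\big)+2\|f\|_\infty\big(1-\mu^{\E_2}_{\rho_m}(V)\big).
\]
Since $\mu^{\E_1}_{\omega_n}(U)\to1$ and $\mu^{\E_2}_{\rho_m}(V)\to1$ (portmanteau theorem applied to the open neighbourhoods of $e$), this gives $\limsup_{n,m\to\infty}|\int_G f\,d\mu_{\Omega_{n,m}(h)}^{\E_2*\E_1}-f(h)|\le\epsilon$, hence the limit is $0$; as $f\in C_b(G)$ was arbitrary, $\mu_{\Omega_{n,m}(h)}^{\E_2*\E_1}\to\delta_h$ weakly.

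The first half is bookkeeping, once the product-state trace identity, the action/covariance identities for $\E_2$, and the defining property of the relativization integral are in hand. The one point that needs care is the \emph{joint} limit in $(n,m)$ combined with the fact that the $\mu^{\E_i}$ converge only weakly, not setwise: this is exactly why the argument must pass to test functions and use a single continuity rectangle $U\times V$ rather than trying to evaluate a limiting measure on Borel sets. Alternatively one could quote that the formation of product measures is jointly weak-continuous on separable metric spaces ($G$ being metrizable), but the self-contained estimate above avoids even that.
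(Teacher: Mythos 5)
Your proposal is correct and follows essentially the same route as the paper's proof: factorize the Born measure of $\E_2 * \E_1$ over the product state $\omega_n \otimes h^{-1}.\rho_m$, use covariance of $\E_2$ to move $h^{-1}$ onto the argument, and then apply the localization of both frames via Proposition \ref{prop:locseqgen}. The only difference is presentational -- the paper disposes of the two limits in two lines by applying \ref{prop:locseqgen} twice and evaluating on sets, while you make the weak-convergence and joint-limit bookkeeping explicit with test functions and a product-measure estimate, which is a legitimate (indeed more careful) rendering of the same argument.
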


\begin{proof}
    We calculate
    \begin{align*}
    \lim_{n,m\to\infty} \tr[(\omega_n \otimes h^{-1}.\rho_m) \int_G d\E_1(g) \otimes g.\E_2(X)] =
    & \lim_{m\to\infty} \tr[h^{-1}.\rho_m \E_2(X)]\\
    = \lim_{m\to\infty} \tr[\rho_m \E_2(h^{-1}.X)]
    &= \delta_e(h^{-1}.X) = \delta_h(X),
    \end{align*}
    where we have used \ref{prop:locseqgen} twice.
\end{proof}

Thus when one frame is localized at $e \in G$, and the other at $h \in G$, the relative orientation observable will give probability distribution localized at $h$. Note that since $\E_2 * \E_1$ is invariant, we could just as well evaluate it on $(h.\omega_n \otimes \rho_m)$, with the same result. Notice also, that if we instead relativized $\E_1$ with respect to $\E_2$ (by taking $\E_1 * \E_2$), the probability distribution of the relative orientation observable evaluated on $\omega_n \otimes h^{-1}.\rho_m$ would be localized at $h^{-1}$, as we would then be measuring the orientation of $\R_1$ with respect to $\R_2$. Indeed, a simple calculation gives
\begin{equation}
\E_2 * \E_1 (X) = SWAP_{1,2} \circ \E_1 * \E_2 (X^{-1}),
\end{equation}
where $SWAP_{1,2}$ takes care of switching the tensor product factors as in \cite{de2020quantum}, i.e. for $A_1 \otimes A_2 \in B(\hio \otimes \hit)$ we have
\[
SWAP_{1,2}(A_1 \otimes A_2) = A_2 \otimes A_1.
\]

\newpage
\section{Relative descriptions}

We now provide a definition of \emph{relative} states and operators given with respect to the \emph{relativized} effects.

\begin{definition}\label{def:Yrel}
Given a frame $\R$ and a system $\S$, we will refer to the operationally $\Eff(\his)^\R$-equivalent trace-class operators on $\T(\hirs)$, where
\[
    \Eff(\his)^\R := \Y^\R(\Eff(\his)) \subset \Eff(\hirs)^G,
\]
as \emph{$\R$-equivalent}, while the $\R$-equivalence classes of \emph{states} will be called \emph{$\R$-relative}.
\end{definition} 

The $\R$-relative states are then precisely classes of states that can be distinguished by the $\R$-relativized effects.

\begin{definition}\label{def:Yrelobs}
Given a frame $\R$ and a system $\S$, elements of the Banach space
\[
B(\his)^\R :=  \rm{span}\{\Eff(\his)^\R\}^{cl} = \Y^\R(B(\his))^{cl} \subseteq B(\hirs)^G
\]
will be called $\R$-relative operators.
\end{definition}

The relative operators are then invariant, as expected. The Propositions \ref{prop:iml}, \ref{prop:statespace} and \ref{generalst} (for $F=\Y^\R$) then give the following.

\begin{proposition}
    We have the following Banach space isomorphism
    \[
    \left[\T(\hirs)/\hspace{-3pt}\sim_{\R}\right]^\bigstar \cong B(\his)^{\R}.
    \]
    Moreover, the set of $\R$-relative states
    \[
    \S(\his)^{\R} := \S(\hirs)/\hspace{-3pt}\sim_{\R}
    \]
    is a state space in $\T(\hirs)^{\rm{sa}}/\hspace{-3pt}\sim_{\R}$ and we have a state space isomorphism
    \[
        \S(\his)^\R \cong \Y^\R_*(\S(\hirs)_G).
    \]
\end{proposition}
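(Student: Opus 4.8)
The plan is to obtain all three assertions by specializing the general machinery of Section~\ref{sec:opeq} to the effect set $\mathcal{O}=\Eff(\his)^\R=\Y^\R(\Eff(\his))$. The key preliminary observation, which I would establish first, is that the ultraweakly closed span of this set is exactly the Banach space of $\R$-relative operators: using linearity and normality of $\Y^\R$ together with $\rm{span}(\Eff(\his))^{cl}=B(\his)$, one gets $\rm{span}(\Eff(\his)^\R)^{cl}=\Y^\R(B(\his))^{cl}=B(\his)^\R$ by Definition~\ref{def:Yrelobs}. With this identity in hand, Proposition~\ref{prop:iml} applied to $\mathcal{O}=\Eff(\his)^\R$ yields the Banach space isomorphism $\left[\T(\hirs)/\hspace{-3pt}\sim_\R\right]^\bigstar\cong B(\his)^\R$ directly, and Proposition~\ref{prop:statespace} applied to the same $\mathcal{O}$ shows that $\S(\his)^\R=\S(\hirs)/\hspace{-3pt}\sim_\R$ is a total convex subset of $\T(\hirs)^{\rm{sa}}/\hspace{-3pt}\sim_\R$, hence a state space, and that it is operationally closed.

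For the state space isomorphism $\S(\his)^\R\cong\Y^\R_*(\S(\hirs)_G)$ I would invoke Proposition~\ref{generalst} with $F=\Y^\R:B(\his)\to B(\hirs)$. This requires that $\Y^\R$ be normal, positive and unital, which has already been recorded (it is in fact a quantum channel, cf.~\cite{lov1}). Proposition~\ref{generalst} then provides a state space isomorphism $\S(\hirs)/\hspace{-3pt}\sim_{\Im\Y^\R}\cong\Y^\R_*(\S(\hirs))$. To match this with $\S(\his)^\R$ one notes that the operational equivalence relation depends only on the ultraweakly closed span of the chosen effect set, since by the proof of Proposition~\ref{prop:iml} it is governed by the pre-annihilator ${}^\perp\mathcal{O}={}^\perp\rm{span}(\mathcal{O})$; as $\rm{span}(\Eff(\his)^\R)^{cl}=B(\his)^\R=\rm{span}(\Im\Y^\R)^{cl}$ from the previous paragraph, we get $\sim_\R=\sim_{\Im\Y^\R}$, and therefore $\S(\his)^\R\cong\Y^\R_*(\S(\hirs))$.

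It then remains to replace $\S(\hirs)$ by $\S(\hirs)_G$ in the image, i.e.\ to make sense of $\Y^\R_*(\S(\hirs)_G)$ and show it coincides with $\Y^\R_*(\S(\hirs))$. Here I would use that the $\R$-relative operators are invariant, $\Y^\R(B(\his))\subseteq B(\hirs)^G$ (a consequence of $\Y^\R(\Eff(\his))\subseteq\Eff(\hirs)^G$, linearity of $\Y^\R$, and $\rm{span}(\Eff(\his))^{cl}=B(\his)$); dually this means $\Y^\R_*$ annihilates ${}^\perp\Eff(\hirs)^G$, so it is constant on $G$-equivalence classes and descends to a well-defined map on the quotient $\S(\hirs)_G=\S(\hirs)/\hspace{-3pt}\sim_G$. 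Since passing to a quotient of the domain does not change the image set, $\Y^\R_*(\S(\hirs)_G)=\Y^\R_*(\S(\hirs))$, and combining with the previous paragraph finishes the argument.

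The argument is largely a bookkeeping assembly of Propositions~\ref{prop:iml}, \ref{prop:statespace} and~\ref{generalst}, so the only genuinely non-mechanical points are the two compatibility checks just described: first, that the effect set $\Eff(\his)^\R$ and the operator space $B(\his)^\R$ generate the same operational equivalence, so that Proposition~\ref{generalst} may legitimately be applied to $\sim_\R$; and second, that $\Y^\R_*$ factors through the $G$-quotient, so that the right-hand side $\Y^\R_*(\S(\hirs)_G)$ is even meaningful. Both reduce to the single fact $B(\his)^\R=\rm{span}(\Eff(\his)^\R)^{cl}\subseteq B(\hirs)^G$. One should also keep in mind the caveat attached to Proposition~\ref{generalst}: the ambient Banach spaces need not be isomorphic, as the image $\Y^\R_*(\T(\hirs))$ need not be closed, which is precisely why only a state space isomorphism, and not a Banach space isomorphism, is asserted in the last display.
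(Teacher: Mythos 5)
Your proposal is correct and follows essentially the same route as the paper, which likewise obtains all three claims by applying Propositions~\ref{prop:iml}, \ref{prop:statespace} and \ref{generalst} with $F=\Y^\R$ and noting that $\Y^\R(B(\his))\subseteq B(\hirs)^G$ so that $\Y^\R_*$ is defined on the $G$-equivalence classes. Your explicit compatibility checks (that $\sim_\R$ coincides with $\sim_{\Im\Y^\R}$ and that $\Y^\R_*$ descends to the $G$-quotient) simply spell out details the paper leaves implicit.
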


In the last claim, we have used the fact that since the image of $\Y^\R$ is in $B(\hirs)^G$, the domain of the predual map $\Y^\R_*$ map is $\T(\hirs)_G$. This characterization will be used in the sequel, so it is worth introducing appropriate notation. We will write $\Omega^\R \in \S(\his)^\R$, and  $[\Omega]_\R$ for the corresponding equivalence class of states on the composite system. We then identify $[\Omega]_\R \simeq \Y^{\R}_*[\Omega]_G \equiv \Omega^\R$, for any $\Omega \in \T(\hirs)$.

Recalling again the Banach space isomorphism $\T(\hi)^\bigstar \cong B(\hi)$ of the standard quantum mechanical setup, the Banach space of relative operators can be thought of as the invariant analog of the full system algebra in the standard approach arising upon the specification of the reference frame, understood as a reference system \emph{and} the frame observable, with respect to which the system $\S$ is described via the \emph{relativization} procedure. We refer to this setup as the \emph{relative description} of the system $\S$ with respect to the frame $\R$, and carefully examine its relation to the standard setup in the next chapter \ref{ch:FrmCond}.

\newpage
\section{Relational and relative}

Crucially for our operational interpretation, the $\R$-relative operators are not only invariant but also $\R$-framed, so relational.

\begin{proposition}\label{prop:relativeareframed}
Given a frame $\R$ and a system $\S$, we have the following inclusion
\[
B(\his)^\R \subseteq B(\hir \otimes \his)_{\E_\R}.
\]
\end{proposition}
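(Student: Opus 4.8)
The statement to prove is that every $\R$-relative operator is $\R$-framed, i.e. $B(\his)^\R \subseteq B(\hir \otimes \his)_{\E_\R}$. Recall that $B(\his)^\R = \Y^\R(B(\his))^{cl}$ and $B(\hir \otimes \his)_{\E_\R} = \rm{span}(\Eff(\hir \otimes \his)_{\E_\R})^{cl}$, where the right-hand side is the ultraweak closure of the span of the convex hull of effects of the form $\E_\R(X) \otimes F_\S$. Since both sides are ultraweakly closed, it suffices to show that $\Y^\R(A_\S)$ lies in the span of $\{\E_\R(X) \otimes F_\S\}$-type elements (or at least in its ultraweak closure) for every $A_\S \in B(\his)$; by linearity and the fact that every bounded operator is a linear combination of effects, it even suffices to treat $A_\S = F_\S \in \Eff(\his)$. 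The plan is therefore to show $\Y^\R(F_\S) \in B(\hir\otimes\his)_{\E_\R}$ by exhibiting it as an ultraweak limit of finite linear combinations of elements $\E_\R(X)\otimes (g.F_\S)$.

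**Main steps.** First I would write out $\Y^\R(F_\S) = \int_G d\E_\R(g) \otimes U_\S(g) F_\S U_\S(g)^*$ and observe that this is an operator-valued integral against the POVM $\E_\R$. The key idea is to approximate the integral by Riemann-type sums: partition $G$ (or really the sample space $\Sigma_\R$, which since the frame is principal is $G$ itself) into a measurable partition $\{X_i\}$, pick representatives $g_i \in X_i$, and form $\sum_i \E_\R(X_i) \otimes (g_i.F_\S)$. Each summand is of the form $\E_\R(X_i) \otimes F_\S^{(i)}$ with $F_\S^{(i)} = g_i.F_\S \in \Eff(\his)$, hence lies in the set over which we take the span; so each such finite sum is genuinely an element of $\rm{span}\{\E_\R(X)\otimes F_\S\}$. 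Then I would argue that, as the partition is refined, these sums converge ultraweakly to $\Y^\R(F_\S)$. This convergence follows from the strong continuity of $U_\S$: for each fixed $\xi, \eta \in \hir\otimes\his$, the function $g \mapsto \langle \xi | (\dots) \otimes (g.F_\S) | \eta \rangle$ varies continuously, so matrix elements of the Riemann sums converge to the matrix element of the integral, which is exactly ultraweak convergence once we also control uniform boundedness (all the sums have norm bounded by $\|F_\S\| \le 1$ times a constant, because $\E_\R$ is a POVM). Thus $\Y^\R(F_\S)$ lies in the ultraweak closure of that span, which is precisely $B(\hir\otimes\his)_{\E_\R}$.

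**Expected obstacle.** The routine part is the definition-chasing; the genuinely delicate point is making the Riemann-sum approximation rigorous in the generality of a non-compact second-countable locally compact $G$ and an arbitrary covariant POVM $\E_\R$ — i.e. justifying that the operator-valued integral defining $\Y^\R$ (as constructed in \cite{lov1}) really is an ultraweak limit of such finite sums. This requires either appealing directly to the construction of the integral in \cite{lov1} (where it is presumably defined precisely as such a limit of simple-function approximations against the POVM) or invoking a dominated-convergence/uniform-boundedness argument together with strong continuity of $U_\S$ and tightness of the measures $\mu^{\E_\R}_\omega$. A clean shortcut, if available, is to note that the integrand $g \mapsto U_\S(g)F_\S U_\S(g)^*$ is continuous and bounded, so it is a uniform (hence ultraweak) limit of simple functions on compact sets, and that the POVM mass outside large compacts is controlled; feeding simple-function integrands through $\Y^\R$ manifestly produces elements of the span, and passing to the limit stays inside the ultraweak closure. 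I would structure the write-up around this observation, relegating the measure-theoretic bookkeeping to a reference to the integral's construction.
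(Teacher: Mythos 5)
Your proposal is correct in outline, but it takes a genuinely different route from the paper. You work on the primal (operator) side: you reduce to effects, write $\Y^\R(F_\S)=\int_G d\E_\R(g)\otimes g.F_\S$, and exhibit it as an ultraweak limit of Riemann-type sums $\sum_i \E_\R(X_i)\otimes g_i.F_\S$, each manifestly in $\mathrm{span}\{\E_\R(X)\otimes F_\S\}$; your uniform bound $\sum_i \E_\R(X_i)\otimes g_i.F_\S\leq \|F_\S\|\,\mathbb{1}$ plus convergence on product trace-class operators (tightness of $\mu^{\E_\R}_\omega$, continuity of $g\mapsto\tr{\rho\, g.F_\S}$) and density of their span in $\T(\hirs)$ is exactly the right way to close the gap you flag, provided you phrase membership in the ultraweak closure via finitely many trace-class functionals at a time rather than a single universal sequence. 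The paper instead argues dually: it shows that the $\E_\R$-equivalence of states is finer than $\R$-equivalence, i.e. if $\Omega\sim_{\E_\R}\Omega'$ then $\tr{\Omega\,\Y^\R(\E_\S)}=\tr{\Omega'\,\Y^\R(\E_\S)}$, by rewriting the expectation of a relativized effect as an integral over a joint measure on $G\times\Sigma$ built from $\E_\R$ and the spectral measure of $\E_\S$, and then obtains the inclusion $B(\his)^\R\subseteq B(\hirs)_{\E_\R}$ from the annihilator/duality machinery already established in Proposition \ref{prop:iml}. Both arguments ultimately rest on the same fact — that $\Y^\R(F_\S)$ is a weak integral of framed product effects — but your version is more self-contained and makes the approximation explicit at the cost of the partition/mesh bookkeeping, whereas the paper's is shorter because it offloads the analytic content to the integration theory of \cite{lov1} and to the state-space duality, at the cost of leaving the dependence of the joint measure only on framed-effect expectations somewhat implicit.
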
 

\begin{proof}
We will show that $\Eff(\his)^\R \subseteq \Eff(\hir \otimes \his)_{\E_\R}$, i.e. that the $\E_\R$-equivalence is \emph{finer} than $\R$-equivalence, i.e. that for $\Omega, \Omega' \in \T(\hirs)$ we have
    \[
    \Omega \sim_{\E_\R} \Omega' \Rightarrow \Omega \sim_\R \Omega',
    \]
from which the inclusion of the dual Banach spaces $B(\his)^\R \subseteq B(\hir \otimes \his)_{\E_\R}$ follows. Writing $P^{\E_\S}: \mathcal{B}(\Sigma) \to B(\his)$ for the PVM associated to an arbitrary $\E_\S \in \Eff(\his)$ via the spectral theorem we calculate
    \begin{align*}
    \tr[\Omega \int_G d\E_\R(g) \otimes g.\E_\S] &=
    \int_{G\times \Sigma} d\mu^{\E_\R \otimes g.P^{\E_\S}}_{\Omega}(g,x)\\
    &= \int_{G\times \Sigma} d\mu^{\E_\R \otimes g.P^{\E_\S}}_{\Omega'}(g,x)
     = \tr[\Omega' \int_G d\E_\R(g) \otimes g.\E_\S],
    \end{align*}
    where the $\E_\R$-equivalence of $\Omega$ and $\Omega'$ was invoked in the second equality.
\end{proof}

Given the invariance of the image of $\Y^\R$ can conclude the following.

\begin{proposition}
    For any principal frame $\R$ and an arbitrary system $\S$ we have
\[
    \B(\his)^\R \subseteq B(\his)^G_{\E_\R}.
\]
\end{proposition}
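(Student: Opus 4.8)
The plan is to obtain the inclusion as an immediate corollary of Proposition \ref{prop:relativeareframed} together with the already-noted invariance of the relativized effects, arguing at the level of \emph{effects} before passing to spans and closures. The key observation is that $\Eff(\his)^\R$ sits inside \emph{both} $\Eff(\hirs)^G$ and $\Eff(\hir \otimes \his)_{\E_\R}$, hence inside their intersection $\Eff(\hir \otimes \his)^G_{\E_\R}$ as defined in Definition \ref{def:rel}.

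First I would recall that $\Eff(\his)^\R = \Y^\R(\Eff(\his))$, and that $\Y^\R$ is positive and unital, so it maps effects to effects; combined with the change-of-variables identity $h.\Y^\R(F_\S) = \Y^\R(F_\S)$ for all $h \in G$, this gives $\Eff(\his)^\R \subseteq \Eff(\hirs)^G$, as already recorded in Definition \ref{def:Yrel}. Next, Proposition \ref{prop:relativeareframed} supplies the other inclusion $\Eff(\his)^\R \subseteq \Eff(\hir \otimes \his)_{\E_\R}$. Intersecting the two yields $\Eff(\his)^\R \subseteq \Eff(\hir \otimes \his)_{\E_\R} \cap \Eff(\hirs)^G = \Eff(\hir \otimes \his)^G_{\E_\R}$.

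To conclude I would apply the two monotone operations of taking the linear span and then the ultraweak closure to both sides. Using $B(\his)^\R = \rm{span}\{\Eff(\his)^\R\}^{cl}$ from Definition \ref{def:Yrelobs} and $B(\his)^G_{\E_\R} = \rm{span}\{\Eff(\hir \otimes \his)^G_{\E_\R}\}^{cl}$ from Definition \ref{def:rel}, this gives $B(\his)^\R \subseteq B(\his)^G_{\E_\R}$, as desired.

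I do not expect any genuine obstacle here: all the substance lies in Proposition \ref{prop:relativeareframed}, and the present statement is essentially bookkeeping with the definitions. The one point worth making explicit is that the argument must be run at the level of effects — the span of an intersection is in general strictly smaller than the intersection of spans, so it would \emph{not} suffice to form $B(\hir \otimes \his)_{\E_\R} \cap B(\hirs)^G$ and invoke the separately-known inclusions of $B(\his)^\R$ into each factor.
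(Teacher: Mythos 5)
Your argument is correct and follows the same route as the paper, which presents this inclusion as an immediate consequence of Proposition \ref{prop:relativeareframed} together with the $G$-invariance of the image of $\Y^\R$: the relativized effects lie in $\Eff(\hirs)^G \cap \Eff(\hir\otimes\his)_{\E_\R} = \Eff(\hir\otimes\his)^G_{\E_\R}$, and taking spans and ultraweak closures gives the claim. Your explicit remark that one must work at the level of effects, since the closed span of an intersection can be strictly smaller than the intersection of the closed spans, is exactly the right bookkeeping point and merely makes precise what the paper leaves implicit.
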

Knowing that those sets are equal in the case of a finite group $G$ and a localizable frame, we leave the following as a \emph{conjecture}.

\begin{proposition}[conjecture]\label{conj:rel=rel}
    Given a localizable principal frame $\R$ and a system $\S$, we have a Banach space isomorphism
    \[
        B(\his)^\R \cong B(\his)^G_{\E_\R}
    \]
\end{proposition}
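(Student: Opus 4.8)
# Proof Proposal for Proposition \ref{conj:rel=rel}

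The plan is to upgrade the inclusion of the preceding proposition, $B(\his)^\R \subseteq B(\his)^G_{\E_\R}$, to an equality of operator spaces. By Propositions \ref{prop:iml} and \ref{generalst} this is exactly what is needed: the operational equivalence $\sim_\mathcal{O}$ depends on $\mathcal{O}$ only through $\rm{span}(\mathcal{O})^{cl}$, so once the two ultraweakly closed spans coincide the quotient Banach spaces, hence the state spaces, agree. Thus it suffices to show that every invariant $\R$-framed effect lies in $B(\his)^\R = \Y^\R(B(\his))^{cl}$, and by linearity and ultraweak density that every $A \in B(\his)^G_{\E_\R}$ does.

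First I would record how invariance of $A$ interacts with the restriction maps. On a product one checks
\[
\Gamma_\omega\big((U_\R(h)\otimes U_\S(h))\,(B_\R\otimes B_\S)\,(U_\R(h)^*\otimes U_\S(h)^*)\big) \;=\; h.\,\Gamma_{h.\omega}(B_\R\otimes B_\S),
\]
where on the right $h.$ acts on $B(\his)$ via $U_\S$ and $h.\omega = U_\R(h)^*\omega U_\R(h)$; extending by linearity and normality and using $h.A = A$ gives the equivariance relation $\Gamma_{h.\omega}(A) = h^{-1}.\Gamma_\omega(A)$ for all $h \in G$. Feeding this into the relativized $\omega$-restriction map $\Gamma^\R_\omega = \Y^\R\circ\Gamma_\omega$ with a principal localizing sequence $\omega_n := \omega_n(e)$, and recalling that $\omega_n(g) = g^{-1}.\omega_n$ localizes at $g$ by covariance, we obtain
\[
\Gamma^\R_{\omega_n}(A) \;=\; \int_G d\E_\R(g)\otimes g.\Gamma_{\omega_n}(A) \;=\; \int_G d\E_\R(g)\otimes \Gamma_{\omega_n(g)}(A).
\]
Since $\Gamma^\R_{\omega_n}(A) \in \Y^\R(B(\his)) \subseteq B(\his)^\R$ and the latter is ultraweakly closed, it is enough to prove the \emph{reconstruction claim} $\Gamma^\R_{\omega_n}(A) \to A$ in the operational topology.

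To prove the reconstruction claim I would generalize the term-by-term argument of Proposition \ref{prop:releqinvfrm}. Write $A$ as an ultraweak limit of convex combinations $\sum_i \E_\R(X_i)\otimes F_i$ and refine to finite measurable partitions $\{X_i\}$ of $G$ into sets of small diameter, possible since $G$ is metrizable. Localizability of $\E_\R$ — the norm-$1$ property together with the weak convergence $\mu^{\E_\R}_{\omega_n(g)}\to\delta_g$ of Proposition \ref{prop:locseqgen} — then forces, in the limit $n\to\infty$, that $\Gamma^\R_{\omega_n}$ returns such a sum up to a partition-dependent error vanishing under refinement, while the equivariance relation makes the operator field $g\mapsto\Gamma_{\omega_n(g)}(A)$ bounded and weakly measurable, so that the defining integral is well posed and the two limiting procedures (refinement of the partition, and $n\to\infty$) may be interchanged. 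This yields $\Gamma^\R_{\omega_n}(A)\to A$, hence $A\in B(\his)^\R$; combined with the preceding proposition we conclude $B(\his)^\R = B(\his)^G_{\E_\R}$, and Proposition \ref{prop:locyen} moreover identifies both with the von Neumann algebra $\Y^\R(B(\his))$.

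The main obstacle is precisely this reconstruction step. In the finite-group case each group element is isolated by a state $\omega(g)$ with $\tr[\omega(g)\E_\R(g)] = 1$, which has no pointwise analogue in the continuum; one must therefore control a partition-refinement limit, a localizing limit, and the measurable $g$-dependence of the conditioned restrictions simultaneously, using only weak convergence of the measures $\mu^{\E_\R}_{\omega_n(g)}$. A clean way to package the argument would be a standalone structure theorem: for a localizable principal frame, the $\R$-framed operators are exactly those of \emph{diagonal} form $\int_G d\E_\R(g)\otimes B(g)$ for a bounded weakly measurable field $B(\cdot)$, with $B(g) = \lim_n \Gamma_{\omega_n(g)}(A)$; invariance of $A$ then forces $B(g) = g.B(e)$, giving $A = \Y^\R(B(e))$ outright. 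Establishing such a non-commutative disintegration against a norm-$1$ POVM is where the substantive work lies, and it is presumably why the statement is left as a conjecture rather than a proposition.
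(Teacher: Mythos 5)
You have written a proposal for a statement the paper itself does not prove: it is labelled ``conjecture'' precisely because no proof is known, with only the finite-group, localizable case established (Proposition \ref{prop:releqinvfrm}) and possible lines of attack deferred to the Discussion chapter. So there is no paper proof to compare against, and the question is only whether your argument closes the conjecture. It does not. The reductions you make are sound and essentially reproduce the finite-group reasoning of Section \ref{sec:yenmap}: by Propositions \ref{prop:iml} and \ref{generalst} it suffices to show the two ultraweakly closed spans coincide; the equivariance identity $\Gamma_{h.\omega}(A)=h^{-1}.\Gamma_\omega(A)$ for invariant $A$ is a correct computation; and the problem is thereby reduced to the reconstruction claim $\Gamma^\R_{\omega_n}(A)\to A$ ultraweakly for every invariant $\R$-framed $A$, which would indeed place $A$ in the ultraweak closure of $\Y^\R(B(\his))$.

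The gap is that this reconstruction claim is exactly the hard content of the conjecture, and your sketch of it would not survive scrutiny. Writing $A$ as an ultraweak limit of convex combinations of $\E_\R(X_i)\otimes F_i$ gives no norm or uniformity control over the approximating net, the maps $\Gamma_{\omega_n(g)}$ are only ultraweakly continuous on bounded parts, and the proposed interchange of the partition-refinement limit with the localizing limit $n\to\infty$ is asserted, not justified; weak convergence of the measures $\mu^{\E_\R}_{\omega_n(g)}$ to $\delta_g$ controls scalar integrals of fixed continuous bounded functions, not this double limit against a net of operators. Likewise, nothing guarantees that the pointwise limits $B(g)=\lim_n\Gamma_{\omega_n(g)}(A)$ exist for all $g$, depend measurably on $g$, or reassemble into a well-defined operator $\int_G d\E_\R(g)\otimes B(g)$ equal to $A$ --- this is the ``non-commutative disintegration against a norm-$1$ POVM'' whose absence is the reason the statement is a conjecture, and it coincides with the integration-theoretic programme the author sketches in the Discussion (Section \ref{sec:furtherpersp}) as future work. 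In effect your proposal reformulates the conjecture as a structure theorem for framed operators rather than proving it, as your own closing paragraph concedes; the finite-group mechanism (a state with $\tr[\omega(g)\E_\R(g)]=1$ isolating each term) has no continuum analogue, and supplying one is the missing idea.
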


We reflect upon this and possible lines of attack in chapter \ref{ch:discuss}.

\newpage
\section{Twofold meaning of invariance}\label{sec:dblinv}

Having defined the $\R$-relative description as given by requiring operational equivalence with respect to the relativized effects $\Y^\R(\Eff(\his))$, we may ask if there are effects on $\S$ that are somehow insensitive to the choice of the external reference. To this end, notice that by relativizing an \emph{invariant} effect $F_\S \in \Eff(\his)^G$ we get
\begin{equation}
    \Y^\R(F_\S) = \int_G d\E_\R(g) \otimes g.F_\S = \int_G d\E_\R(g) \otimes F_\S = \mathbb{1}_\R \otimes F_\S,
\end{equation}
and thus we have
\begin{equation}
    \Y^\R(\Eff(\his)^G) \cong \Eff(\his)^G,
\end{equation}
for \emph{any frame $\R$}. The space of invariant effects on $\his$ is then independent of the choice of the external reference frame. Under this interpretation, the setup of $B(\his)^G$ can be called the \emph{perspective-independent description of $\S$}. In this sense, it can be thought of as `objective' or `consensual'.

An alternative interpretation of the invariant description of a single system is possible. Indeed, instead of assuming the \emph{non}-relational, or \emph{absolute} description of $\S$ to exist on its own and serve as a basis for generating the operational and invariant, relative to an \emph{external} frame, descriptions via the $\Y$ construction, we may think of the frame being chosen \emph{internally}, as \emph{subsystem} of a given bigger system, call it $\T$ (total). Such a choice can only be possible if the representation $U_\T:G \to B(\hi_\T)$ can be decomposed as $U_\R \otimes U_\S$ with the corresponding decomposition of $\T$ in terms of its Hilbert space, i.e. $\hi_\T \cong \hir \otimes \his$. Requiring such tensor product decomposition of the action can be understood as assuring that the subsystems may be considered independent elements of our framework. Depending on the representation $U_\T$, such a decomposition may be highly non-unique or impossible. Crucially, upon any choice of an \emph{internal} reference, the relative description will be phrased in terms of the invariant quantities on $\T$ since we have $B(\his)^\R \subseteq B(\hi_\T)^G$ regardless of the choice of the frame-system decomposition. Recall also that the domain of $\Y^\R_*$ is the space of $G$-equivalent trace-class operators so that a \emph{global} state $[\Omega]_G \in \S(\hi_\T)_G$ is just enough to construct all the internal relative states $\Y^\R_*[\Omega]_G$, whatever the choice of the internal frame $\R$. The invariant algebra $B(\hi_\T)^G$ thus contains, in this sense, all possible such \emph{internally-relative} descriptions. Under this interpretation, the invariant description of a system will be referred to as \emph{global}. We may also consider the relations between the relative descriptions corresponding to different choices of such internal reference frames. We address this in chapter \ref{ch:FrChM} by providing frame-change maps.

Notice here that these two interpretations are not in conflict -- in fact, it seems reasonable for the global description to be `objective' in the sense described above. These matters deserve a separate, more philosophically oriented treatment, which will be pursued elsewhere.

\myemptypage

\chapter{Conditioning descriptions}\label{ch:FrmCond}
In this chapter, we consider situations in which the state preparation of the frame is known, which is realized via a \emph{restriction map} (see below) that maps effects on the composite system $\R \otimes \S$ to that on $\S$ alone. In the case of conditioning relative description, it amounts to what can be seen as \emph{probabilistic gauge fixing} -- the resulting description is the same for any frame's states that give the same probability distributions upon evaluation of the frame-orientation observables, and amount to a weighted averaging of the quantities on $\S$. The twirling procedure is recovered in the case of an invariant frame's state. in the case of a localizable frame, we can consider descriptions conditioned upon highly localized frame states. We then find operational agreement between the non-relational description of the system $\S$ in terms of $\his$ alone. Indeed, the relative states can be approximated arbitrarily well by the relative states conditioned upon highly localized frame states. This perspective is applied to a description of a quantum measurement setup a'la quantum measurement theory aligned with the presented framework, with the role of the pointer observable played by the frame-orientation observable. We find that when the measured observable on $\S$ is covariant, the relative orientation observable satisfied a condition stronger than the probability reproducibility property, respecting the symmetry structure of the setup. Lastly, we consider conditioning the invariant description with respect to an internal frame, which will be crucial for the rigorous understanding of the `attaching a frame state' procedure as part of the definition of frame-change maps that follows.

Since the $\Y$ construction is crucial for our considerations in the sequel, \textbf{in what follows we restrict to principal frames.} From now on the terms \emph{frame}, or \emph{reference}, will thus refer to a \emph{principal quantum reference frame} as given in \ref{def:qrfs}. The sample space of the frames considered below will then always be taken as $G$ itself. We see this restriction as one of the main deficiencies of the presented framework. It has recently been addressed in the context of a finite group \cite{homogyen2023}, and we hope to extend the results achieved there to more general situations soon. We briefly reflect upon this research direction in~chapter \ref{ch:discuss}.

\newpage
\section{Restriction map}

We begin by recalling the restriction map (e.g. \cite{loveridge2017relativity}) that allows conditioning observables on the system plus reference with a specified state of the reference.

\begin{definition}
Let $\omega \in \S(\hir)$ be any state of the reference. Then the \emph{$\omega$-restriction map} is given by the continuous linear extension of the following
\begin{equation}\label{eq:res}
   \Gamma_{\omega}: A_R \otimes A_\S \mapsto \tr[\omega A_R] A_\S.
\end{equation}
\end{definition}

For all $A_\R \in B(\hir)$, $A_\S \in B(\his)$ and $\rho \in \mathcal{S}(\his)$ we then have
\begin{equation}
    \tr[\rho\Gamma_{\omega}(A)] = \tr[(\omega \otimes \rho )A].
\end{equation}

Besides being linear and bounded, it is normal, (completely) positive, and trace-preserving. The predual map $\mathcal{V}_{\omega}:= (\Gamma_\omega)_*$ is the embedding
\begin{equation}
\mathcal{V}_{\omega} : \rho  \mapsto \omega \otimes \rho,
\end{equation}
so that we have $\tr[\rho \Gamma_{\omega}(A)] = 
\tr[\omega \otimes \rho A]$ for all $\rho \in \S(\his)$.

The $\omega$-restriction map is understood as conditioning the description of a composite system $\R \otimes \S$ upon a particular choice of the state of the reference. The description of $\S$ arising as $\omega$-conditioning effects on $\R \otimes \S$ can be understood as \emph{probabilistically gauged-fixed}. We distinguish two different scenarios in which this can be done, and explore them separately.

Firstly, we may be interested in conditioning the \emph{relative description} of $\S$ with respect to a state of the reference $\omega \in \S(\hir)$, in which case we would be restricting the relative operators. The corresponding description of $\S$ will be referred to as \emph{$\omega$-conditioned $\R$-relative}. As we will see, the corresponding states take a particularly simple form -- they can be written as
\begin{equation}
    \int_G U_\S(g)^*\rho U_\S(g) d\mu _{\omega}^{\E_\R}(g) \in \S(\his),
\end{equation}
with $\rho \in \S(\his)$. We interpret them as states of $\S$ ``smeared'' with respect to the probabilistic gauge fixing corresponding to $p^{\E_\R}_\omega(g)$. This setup will be useful for analyzing the relation between the framework described here and the standard non-relational quantum mechanics. We find that when the relative description is conditioned upon the highly localized state of a localizable reference frame, the usual description in terms of the whole $B(\his)$ is recovered to arbitrary precision.

Secondly, we may wish to consider a \emph{global} description in terms of $B(\hi_\T)^G$ being conditioned upon a state of the subsystem $\R$ chosen as an \emph{internal} reference, i.e. we have $\hi_\T \cong \hirs$, with a suitably decomposed $G$-action $U_\T = U_\R \otimes U_\S$. Such restricted effects can then be relativized accordingly to the chosen decomposition and frame-orientation observable $\E_\R$. The corresponding description of $\S$ will be referred to as \emph{$\R$-relativized $\omega$-restricted}. The corresponding states are referred to as $\omega$-\emph{lifted} $\R$-relative states since they take the following simple form
\begin{equation}
    [\omega \otimes \Omega^\R]_G \in \S(\hirs)_G,
\end{equation}
with $\Omega^\R \in \S(\his)_\R$. They can be compared to the states in the image of the inverse reduction map of \cite{de2021perspective}. This setup, and the lifting construction, will be useful for the frame-change maps which we discuss in the next chapter \ref{ch:FrChM}.

\newpage
\section{Conditioned relative descriptions}

We begin by the following definition.
\begin{definition}
The map
\[
\Y^\R_\omega:= \Gamma_\omega \circ \Y^\R: B(\his) \to B(\his),
\]
will be called the \emph{$\omega$-conditioned $\R$-relativization map}.
\end{definition}

As a composition of such maps, the $\omega$-conditioned $\R$-relativization map is unital, normal, and (completely) positive. The image of this map, when applied to the effects $\Eff(\his)$, consists of the \emph{relativized} effects that have then been \emph{conditioned} upon the chosen state of the reference $\omega \in \S(\hir)$. This is a conceptually clear restriction on the set of available effects. It can be understood as a next step in `specifying the reference' -- now we do not only clarify \emph{which} quantum system ($\hir$) and \emph{how} ($\E_\R$) is going to be used for the description of another system $\S$, but also constrain its particular \emph{preparation} $\omega \in \S(\hir)$. As we will see in the sequel, upon restriction with a highly localized state of the reference, the standard description of non-relational quantum physics, as given in terms of $B(\his)$ alone, is recovered up to arbitrary precision \cite{lov1}. We introduce the \emph{$\omega$-conditioned $\R$-relative description} of $\S$ with respect to $\R$ \emph{in the (fixed) state $\omega \in \S(\hir)$} as given with respect to the image of the $\Y^\R_\omega$~map.

\begin{definition}\label{def:Yrelcond}
Given a frame $\R$ and a system $\S$, we will refer to operationally $\Eff(\his)^\R_\omega$-equivalent trace-class operators on $\T(\his)$, where
\[
    \Eff(\his)^\R_\omega := \Y^\R_\omega(\Eff(\his)) \subset \Eff(\his),
\]
as \emph{($\R,\omega$)-equivalent}, while the \emph{($\R,\omega$)}-equivalence classes of \emph{states} in $\S(\his)$ will be called \emph{$\omega$-product $\R$-relative states}.
\end{definition} 

The $\omega$-product $\R$-relative states are then precisely classes of states that can be distinguished by the $\R$-relativized and $\omega$-conditioned operators.

\begin{definition}\label{def:Yrelobscond}
Given a frame $\R$ and a system $\S$, elements of the Banach space
\[
B(\his)^\R_\omega := \rm{span}\{\Eff(\his)^\R_\omega\}^{cl} = \Y^\R_\omega(B(\his))^{cl} \subseteq B(\his)
\]
will be called \emph{$\omega$-conditioned $\R$-relative operators}.
\end{definition}

The operator $\Y^\R_\omega(A_\S)$ can be seen as \emph{weighted} average of the operators on the orbit of $A_\S$  with respect to the probability distribution of the frame-orientation observable as we have
\begin{equation}
\Y^\R_\omega(A_\S) = \int_G d\mu^{\E_\R}_\omega(g) U_\S(g)A_\S U_\S^*(g).
\end{equation}
Thus they are defined up to the operational equivalence of $\omega$ with respect to the effects of $\E_\R$. Indeed, if $\tr[\omega \E(X)] = \tr[\omega' \E(X)]$ for all $X \in \mathcal{B}(G)$, we have $\Y^\R_\omega(A_\S)=\Y^\R_{\omega'}(A_\S)$ for all $A_\S \in B(\his)$. Similar formula holds for $\omega$-product $\R$-relative states, generalizing the $G$-twirling procedures. Indeed, taking $G$-compact for any invariant $\omega \in \S(\hir)$ we get $\Y^\R_\omega=\mathcal{G}$ (see e.g. \cite{lov1}), providing operational justification for the use of the $G$-twirl in the information-theoretic considerations of QRFs \cite{brs,castro2021relative}. The Propositions \ref{prop:iml}, \ref{prop:statespace} and \ref{generalst} (for $F=\Y^\R_\omega$) give the following.

\begin{proposition}
    We have the following Banach space isomorphism
    \[
    \left[\T(\his)/\hspace{-3pt}\sim_{\Eff(\his)^\R_\omega}\right]^\bigstar \cong B(\his)_\omega^{\R},
    \]
    Moreover, the set of $\omega$-product $\R$-relative states as given below is a state space
    \[
    \S(\his)^{\R}_\omega := \S(\his)/\hspace{-3pt}\sim_{\Eff(\his)^\R_\omega} \cong (\Y^\R_\omega)_*(\S(\his)) \subset \T(\his)^{\rm{sa}}/\hspace{-3pt}\sim_{\Eff(\his)^\R_\omega}.
    \]
\end{proposition}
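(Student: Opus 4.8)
The plan is to simply invoke the three Propositions already established, exactly as the analogous earlier statements (in the sections on invariant, framed, relational, and relative descriptions) do. First I would set $\mathcal{O} = \Eff(\his)^\R_\omega \subseteq \Eff(\his)$ and note that $\Y^\R_\omega: B(\his) \to B(\his)$ is normal, positive, and unital, being a composition of the normal, positive, unital maps $\Y^\R$ and $\Gamma_\omega$. Then $\Eff(\his)^\R_\omega = \Y^\R_\omega(\Eff(\his)) = \Im\Y^\R_\omega \cap \Eff(\his)$ in the appropriate sense, so $\rm{span}(\mathcal{O})^{cl} = \Y^\R_\omega(B(\his))^{cl} = B(\his)^\R_\omega$ by Definition \ref{def:Yrelobscond}.

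For the Banach space isomorphism, I would apply Proposition \ref{prop:iml} directly with this choice of $\mathcal{O}$: it yields
\[
\left[\T(\his)/\hspace{-3pt}\sim_{\Eff(\his)^\R_\omega}\right]^\bigstar \cong \rm{span}(\Eff(\his)^\R_\omega)^{cl} = B(\his)^\R_\omega,
\]
which is the first claim. For the second claim, Proposition \ref{prop:statespace} (again with $\mathcal{O} = \Eff(\his)^\R_\omega$) shows that $\S(\his)/\hspace{-3pt}\sim_{\Eff(\his)^\R_\omega}$ is a total convex subset of $\T(\his)^{\rm{sa}}/\hspace{-3pt}\sim_{\Eff(\his)^\R_\omega}$, hence a state space, and is moreover closed in the quotient operational topology. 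Finally, since $\Y^\R_\omega$ is normal, positive, and unital, Proposition \ref{generalst} applied to $F = \Y^\R_\omega$ gives the state space isomorphism $\S(\his)/\hspace{-3pt}\sim_{\Im \Y^\R_\omega} \cong (\Y^\R_\omega)_*(\S(\his))$, which combined with the identification $\sim_{\Eff(\his)^\R_\omega} \,=\, \sim_{\Im\Y^\R_\omega}$ (both being determined by the same span, as $\Eff(\his)^\R_\omega$ generates $B(\his)^\R_\omega$ under closed span) yields $\S(\his)^\R_\omega \cong (\Y^\R_\omega)_*(\S(\his))$.

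The only point requiring a brief remark, rather than a genuine obstacle, is the identification $\sim_{\Eff(\his)^\R_\omega} \,=\, \sim_{\Im\Y^\R_\omega}$: one needs that two trace-class operators agree on all effects in $\Eff(\his)^\R_\omega$ iff they agree on all of $\Y^\R_\omega(B(\his))$. This follows because $\Y^\R_\omega$ is positive and unital, so every element of $B(\his)^\R_\omega = \Y^\R_\omega(B(\his))^{cl}$ is a norm-limit of linear combinations of elements of $\Y^\R_\omega(\Eff(\his))$ (indeed $\Y^\R_\omega(A)$ for self-adjoint $A$ can be written via the spectral decomposition as a real-linear combination of $\Y^\R_\omega$ of effects, and complex $A$ splits into real and imaginary parts), combined with continuity of the trace pairing. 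This is precisely the same bookkeeping used implicitly in the parallel statements for $\Eff(\his)^\R$, $\Eff(\hirs)^G$, and $\Eff(\hir\otimes\his)_{\E_\R}$ earlier, so the proof is essentially a one-line citation of Propositions \ref{prop:iml}, \ref{prop:statespace}, and \ref{generalst}. I do not expect any real difficulty here.
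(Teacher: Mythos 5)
Your proposal is correct and matches the paper's route exactly: the paper derives this proposition by citing Propositions \ref{prop:iml}, \ref{prop:statespace}, and \ref{generalst} with $F=\Y^\R_\omega$, using that $\Y^\R_\omega=\Gamma_\omega\circ\Y^\R$ is normal, positive, and unital. Your extra remark identifying $\sim_{\Eff(\his)^\R_\omega}$ with $\sim_{\Im\Y^\R_\omega}$ is the same bookkeeping the paper leaves implicit, so there is nothing to add.
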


\section{Product-relative states}

We will now examine some special cases of product-relative states. Notice first that they take the following form
\begin{equation}
    \rho ^{(\omega)} := \Y^\R_*(\omega \otimes \rho ) = \int_G d\mu _{\omega}^{\E_\R}(g) g.\rho,
\end{equation}
for some $\omega \in \S(\hir)$, where the measure $\mu_{\omega}^{\E_\R}(X)$ is given as usual by $\mu_{\omega}^{\E_\R}(X)= \tr[\E_\R(X)\omega]$, and we have introduced the notation $\rho ^{(\omega)}$ to indicate the particular frame's state that is used for conditioning. The product-relative states arise from product states on the global system and in some sense generalize the alignable states of \cite{krumm2021quantum}. Notice here, that the state $\rho ^{(\omega)}$ depends only on the $\E_\R$ equivalence class of the state $\omega \in \S(\hir)$. We note the following.

\begin{proposition}
Product-relative states satisfy the following symmetry condition
\begin{equation}
    \rho^{(h.\omega)} = (h^{-1}.\rho)^{(\omega)}
\end{equation}
\end{proposition}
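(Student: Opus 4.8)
The plan is to unfold the definition
\[
\rho^{(\omega)} = \Y^\R_*(\omega\otimes\rho) = \int_G g.\rho\; d\mu_\omega^{\E_\R}(g)
\]
on both sides of the claimed identity and reduce it to a change of variables in the $G$-integral, together with the observation that, under the convention $g.\rho = U_\S(g)^*\rho\,U_\S(g)$, the assignment $g\mapsto g.(\cdot)$ reverses products.

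The first step is to see how the Born measure of the frame transforms under a reorientation of the frame's state. Writing $h.\omega = U_\R(h)^*\omega\,U_\R(h)$, using cyclicity of the trace and covariance of the frame-orientation observable in the form $\E_\R(h.X) = U_\R(h)\E_\R(X)U_\R(h)^*$, I would establish that for every $X \in \mathcal{B}(G)$,
\[
\mu_{h.\omega}^{\E_\R}(X) = \tr[\E_\R(X)\,U_\R(h)^*\omega\,U_\R(h)] = \tr[\E_\R(h.X)\,\omega] = \mu_\omega^{\E_\R}(h.X).
\]
Since for a principal frame $\Sigma_\R = G$ carries the left-translation action, this says that $\mu_{h.\omega}^{\E_\R}$ is the pushforward of $\mu_\omega^{\E_\R}$ along $g\mapsto h^{-1}g$; equivalently, $\int_G \varphi(g)\,d\mu_{h.\omega}^{\E_\R}(g) = \int_G \varphi(h^{-1}g)\,d\mu_\omega^{\E_\R}(g)$ for bounded measurable $\varphi$, first in the scalar case and then, by pairing with an arbitrary trace-class operator, for the operator-valued integrand relevant here.

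The second step substitutes this into the definition and rewrites the integrand using that $U_\S$ is a representation, so that $(h^{-1}g).\rho = U_\S(g)^*\bigl(U_\S(h)\rho\,U_\S(h)^*\bigr)U_\S(g) = g.(h^{-1}.\rho)$. This gives
\[
\rho^{(h.\omega)} = \int_G g.\rho\; d\mu_{h.\omega}^{\E_\R}(g) = \int_G (h^{-1}g).\rho\; d\mu_\omega^{\E_\R}(g) = \int_G g.(h^{-1}.\rho)\; d\mu_\omega^{\E_\R}(g) = (h^{-1}.\rho)^{(\omega)},
\]
which is the assertion.

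I do not anticipate a genuine obstacle; the statement is essentially a bookkeeping identity. The two points requiring care are keeping the direction of the $G$-action straight — it is precisely the anti-homomorphism property of $g\mapsto g.(\cdot)$ under this convention that makes $h^{-1}$, rather than $h$, land on $\rho$ — and checking that the scalar change of variables lifts to the weak operator-valued integral defining $\Y^\R_*$, which follows at once from the scalar case by testing against elements of $\T(\his)$.
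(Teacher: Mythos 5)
Your proof is correct and follows essentially the same route as the paper's: establish $\mu^{\E_\R}_{h.\omega}(X)=\mu^{\E_\R}_{\omega}(h.X)$ from covariance of $\E_\R$ and cyclicity of the trace, perform the change of variables $g\mapsto h^{-1}g$ in the integral, and use $(h^{-1}g).\rho = g.(h^{-1}.\rho)$ to land $h^{-1}$ on the system state. The only addition is your explicit remark that the scalar change of variables lifts to the weak operator-valued integral by testing against trace-class operators, which the paper leaves implicit.
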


\begin{proof}
We calculate:
\begin{align*}
    \rho^{(h.\omega)} &= \int_G d\mu^{\E_\R}_{h.\omega}(g)g.\rho = \int_Gd\mu^{\E_\R}_{\omega}(hg)g.\rho\\
    &= \int_Gd\mu^{\E_\R}_{\omega}(g')(h^{-1}g').\rho = \int_Gd\mu^{\E_\R}_{\omega}(g')g'.(h^{-1}.\rho) = (h^{-1}.\rho)^{(\omega)}.
\end{align*}
We have changed the integration variable $g'=hg$ and used the fact that $\mu^{\E_\R}_{h.\omega}(g) = \mu^{\E_\R}_{\omega}(hg)$ which follows directly from the covariance of $\E_\R$:
\begin{displaymath}
\mu^{\E_\R}_{h.\omega}(X) = tr[\E_\R(X)h.\omega] = tr[h.\E_\R(X)\omega] = tr[\E_\R(h.X)\omega] = \mu^{\E_\R}_{\omega}(h.X).
\end{displaymath}
\end{proof}

Thus `rotating' the frame by a group element $h \in G$ is equivalent in the sense of the product-relative state description to keeping it fixed but `rotating' the system instead with $h^{-1} \in G$. This is again just a consequence of the fact that the image of $\Y^\R$ is invariant with respect to the diagonal action on $B(\hirs)$ and thus for any $A_\S \in B(\his)$ we have
\begin{equation}
[U_\R(h)(\_)U_\R(h)^* \otimes \mathbb{1}_\S]\Y^\R(A_\S) = [\mathbb{1}_\R \otimes U_\S(h^{-1})(\_)U_\S(h^{-1})^*]\Y^\R(A_\S).
\end{equation}

The next proposition demonstrates the intuitively plausible claim that invariant system states are defined without reference to an external frame or, more precisely, independent from the chosen reference. 
\begin{proposition}
Let $\rho \in \S(\his)^G$. Then $\rho ^{(\omega)} = \rho$ for and any choice of frame $\R$,  and any state $\omega \in \S(\his)$.
\end{proposition}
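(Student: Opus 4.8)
The plan is to unwind the explicit form of the product-relative state, namely $\rho^{(\omega)} = \int_G d\mu_{\omega}^{\E_\R}(g)\, g.\rho$, and to observe that the hypothesis $\rho \in \S(\his)^G$ says precisely that $g.\rho = \rho$ for every $g \in G$. Hence the integrand is the constant $\rho$, and the integral collapses to $\rho$ times the total mass $\mu_{\omega}^{\E_\R}(G)$. Since $\E_\R$ is a POVM we have $\E_\R(G) = \mathbb{1}_\R$, so $\mu_{\omega}^{\E_\R}(G) = \tr{\omega\,\E_\R(G)} = \tr{\omega} = 1$, giving $\rho^{(\omega)} = \rho$. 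The point worth stressing is that the frame $\R$, its orientation observable $\E_\R$, and the state $\omega$ enter only through the probability measure $\mu_{\omega}^{\E_\R}$, whose normalization is the same in all cases — this is exactly why the conclusion is independent of those choices.

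To make this precise without manipulating the operator-valued integral defining $\Y^\R$ directly, I would instead test against an arbitrary $A_\S \in B(\his)$ and use $\rho^{(\omega)} = \Y^\R_*(\omega \otimes \rho)$ together with the trace duality with $\Y^\R$:
\[
\tr{\rho^{(\omega)} A_\S} = \tr{(\omega \otimes \rho)\, \Y^\R(A_\S)} = \int_G d\mu_{\omega}^{\E_\R}(g)\, \tr{\rho\, U_\S(g) A_\S U_\S(g)^*}.
\]
Cyclicity of the trace rewrites the integrand as $\tr{(g.\rho)\,A_\S}$, which by the invariance of $\rho$ equals $\tr{\rho A_\S}$; pulling this scalar out of the integral and using $\mu_{\omega}^{\E_\R}(G) = 1$ gives $\tr{\rho^{(\omega)} A_\S} = \tr{\rho A_\S}$. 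Since $B(\his)$ separates $\T(\his)$ — the $\mathcal{O} = \Eff(\his)$ instance of the setup of Section~\ref{sec:opeq} — this forces $\rho^{(\omega)} = \rho$.

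The only step that requires a modicum of care, and the nearest thing to an obstacle, is the interchange of the trace pairing with the integral defining $\Y^\R$; this is built into the weak definition of the relativization integral adopted from \cite{lov1}, so it is available to us, and working pointwise against bounded operators as above keeps the whole computation at the level of ordinary scalar integrals. Beyond that, the argument is a short direct calculation using only the $G$-invariance of $\rho$ and the normalization of the POVM $\E_\R$.
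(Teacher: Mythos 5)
Your proof is correct and follows essentially the same route as the paper's: the $G$-invariance of $\rho$ makes the integrand $g.\rho$ constant, and the normalization $\mu_{\omega}^{\E_\R}(G)=1$ collapses the integral to $\rho$. Your second paragraph merely makes explicit the weak (trace-duality) reading of the relativization integral that the paper's one-line calculation leaves implicit, which is a fine precaution but not a different argument.
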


\begin{proof}
    We calculate:
    \[
        \rho ^{(\omega)} = \int_G d\mu _{\omega}^{\E_\R}(g) U_\S(g)^*\rho  U_\S(g) = \int_G d\mu _{\omega}^{\E_\R}(g) \rho = \rho,
    \]
    where we only needed to use invariance of $\rho$ and normalization of $\E_\R$.
\end{proof}

Thus an invariant system state is a (product) relative state with respect to any frame in any state, irrespective of the choice of the covariant POVM $\E_\R$. This follows directly from our framework, and does not need to be stated as an assumption. Another plausible intuition -- that a reference in an invariant state can only give rise to invariant relative states -- is confirmed by the following proposition.

\begin{proposition}
Let $G$ be compact and $U_\S$ and $U_R$ be any unitary representations of 
$G$ in $\his$ and $\hir$. Suppose $\omega$ is invariant. Then $\rho ^{(\omega)} = \mathcal{G}(\rho )$ for any $\rho $.
\end{proposition}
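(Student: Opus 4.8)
The plan is to show that invariance of $\omega$, combined with covariance of $\E_\R$, forces the probability measure $\mu_\omega^{\E_\R}$ occurring in the definition of $\rho^{(\omega)}$ to be the normalized Haar measure on $G$; the identity $\rho^{(\omega)} = \mathcal{G}(\rho)$ then drops out by comparing integral formulas.

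First I would record the two basic facts about $\mu_\omega^{\E_\R}$. Since $\R$ is a principal frame, its sample space is $G$ itself, with $G$ acting by (left) translation $X \mapsto h.X$; and since $\E_\R$ is a POVM with $\E_\R(G) = \mathbbm{1}_\R$ and $\omega$ is a state, the set function $X \mapsto \mu_\omega^{\E_\R}(X) = \tr[\E_\R(X)\omega]$ is a Borel probability measure on $G$.

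Next comes the key computation. Using the covariance of $\E_\R$ (Definition \ref{def:imp}), cyclicity of the trace, and the invariance of $\omega$ (i.e. $h.\omega = U_\R(h)^*\omega\, U_\R(h) = \omega$ for all $h \in G$), I get, for every $h \in G$ and $X \in \mathcal{B}(G)$,
\[
\mu_\omega^{\E_\R}(h.X) = \tr[\E_\R(h.X)\,\omega] = \tr\!\big[U_\R(h)\E_\R(X)U_\R(h)^*\,\omega\big] = \tr[\E_\R(X)\,h.\omega] = \mu_\omega^{\E_\R}(X).
\]
Hence $\mu_\omega^{\E_\R}$ is a (left-)translation-invariant Borel probability measure on the compact group $G$. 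By uniqueness of the normalized Haar measure among such measures, $\mu_\omega^{\E_\R} = \mu$.

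Finally I substitute this into the formula for the product-relative state from the previous section:
\[
\rho^{(\omega)} = \int_G d\mu_\omega^{\E_\R}(g)\, g.\rho = \int_G d\mu(g)\, U_\S(g)^*\rho\, U_\S(g) = \mathcal{G}_*(\rho) = \mathcal{G}(\rho),
\]
which is precisely the $G$-twirl recalled earlier (here written $\mathcal{G}$ for its action on states). I do not anticipate any real obstacle: the only delicate points are bookkeeping the left-versus-right conventions $g.A = U(g)AU(g)^*$ against $g.\rho = U(g)^*\rho\,U(g)$, and correctly invoking the uniqueness of Haar measure on the transitive $G$-space $\Sigma_\R \cong G$ — neither of which is substantive.
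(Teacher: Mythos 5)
Your proof is correct and follows essentially the same route as the paper: invariance of $\omega$ plus covariance of $\E_\R$ makes $\mu_\omega^{\E_\R}$ a translation-invariant probability measure, hence the normalized Haar measure, after which $\rho^{(\omega)}$ is literally the twirl formula. Your version simply makes the covariance computation and the $\mathcal{G}$ vs.\ $\mathcal{G}_*$ bookkeeping explicit, which the paper leaves implicit.
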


\begin{proof}
    Notice first that if $\omega$ is an invariant state, then $\mu_{\E_\R}^{\omega}$ is an invariant measure, and thus Haar measure, denoted $d\mu$ as before. We then have
    \[
        \rho ^{(\omega)} = \int_G d\mu(g) U_\S(g)^*\rho U_\S(g) = \mathcal{G}(\rho).
    \]
\end{proof}

Thus if the reference is in an invariant state, the only relative states defined with respect to it are also invariant.

As a final example consider $\hir = \mathbb{C}$. Since any unitary representation
on $\mathbb{C}$ will be trivial, the only state is trivially invariant, and a covariant POVM $\E: \mathcal{B}(G) \to B(\mathbb{C}) \simeq \mathbb{C}$ is a (positive localizable) measure $\mu_\E$ on $G$ satisfying $\mu_\E(g.X) = g.\mu_\E(X) = \mu_\E(X)$, and thus it needs to be the normalized Haar measure. So there is a unique such covariant POVM iff $G$ is compact, and in that case, $\Y^\R_*$ is just the twirl as above.

\newpage
\section{Localizing Reference}

In this section, we no longer consider $\omega \in \S(\hir)$ to be fixed, but instead, we are interested in what can be achieved by taking $\omega=\omega_n$ to be elements of the localizing sequence (centered at $e\in G$). The $\omega$-conditioned $\R$-relative operators have been investigated in
e.g. \cite{lov1,lov4}, where the goal was to analyze the extent to which the relative description, as given in terms of $B(\his)^\R$, can be in \emph{agreement} with the non-relative one given on $B(\his)$ upon \emph{conditioning} the relative description. We now generalize Theorem 1. from \cite{lov1}, which lies at the core of such considerations.
\begin{theorem}\label{th:con1}
Let $\R$ be a localizable (principal) frame and $\omega_n$ a localizing sequence centered at $e \in G$. Then for any $A_\S \in B(\his)$ we have
\begin{equation}
    \lim_{n \to \infty}\Y^\R_{\omega_n}(A_\S) = A_\S,
\end{equation}
where the limit is understood in the ultraweak sense. Equivalently, for any $\Omega \in \T(\his)$, in the operational topology we have
\begin{equation}
    \lim_{n \to \infty}(\Y^\R_{\omega_n})_*(\Omega) = \Omega.
\end{equation}
\end{theorem}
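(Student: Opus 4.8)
The plan is to reduce the statement to a dominated/uniform control of the integral defining $\Y^\R_{\omega_n}(A_\S) = \int_G U_\S(g) A_\S U_\S(g)^*\, d\mu^{\E_\R}_{\omega_n}(g)$, exploiting the fact that, by Proposition \ref{prop:locseqgen} (applicable since $G$ is metrizable for a principal frame), the measures $\mu^{\E_\R}_{\omega_n}$ converge weakly to $\delta_e$. Fix $\Omega \in \T(\his)$ and a unit vector decomposition of $\Omega$; it suffices to show $\tr[\Omega\, \Y^\R_{\omega_n}(A_\S)] \to \tr[\Omega A_\S]$, since the ultraweak topology on the bounded set $\{\Y^\R_{\omega_n}(A_\S)\}_n$ (all of norm $\le \|A_\S\|$ by unitality and positivity of $\Y^\R_{\omega_n}$) is controlled by such functionals. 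Writing this out,
\begin{equation}
\tr[\Omega\, \Y^\R_{\omega_n}(A_\S)] = \int_G \tr[\Omega\, U_\S(g) A_\S U_\S(g)^*]\, d\mu^{\E_\R}_{\omega_n}(g) = \int_G f(g)\, d\mu^{\E_\R}_{\omega_n}(g),
\end{equation}
where $f(g) := \tr[U_\S(g)^*\Omega\, U_\S(g)\, A_\S]$. The key point is that $f$ is a \emph{bounded continuous} function on $G$: continuity follows from strong continuity of $U_\S$ together with continuity of $g \mapsto U_\S(g)^*\Omega U_\S(g)$ in trace norm (which holds since $\Omega$ is trace-class and conjugation by a strongly continuous unitary representation is trace-norm continuous on $\T(\his)$), and boundedness by $|f(g)| \le \|\Omega\|_1 \|A_\S\|$.

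Once $f \in C_b(G)$ is established, weak convergence $\mu^{\E_\R}_{\omega_n} \to \delta_e$ gives directly $\int_G f\, d\mu^{\E_\R}_{\omega_n} \to f(e) = \tr[\Omega A_\S]$, which is exactly the operational convergence $(\Y^\R_{\omega_n})_*(\Omega) \to \Omega$. The ultraweak statement $\Y^\R_{\omega_n}(A_\S) \to A_\S$ is then the dual reformulation, valid because the sequence is norm-bounded and we have tested against a norm-dense (indeed all of) $\T(\his)$. One should also note that the two formulations are genuinely equivalent rather than one implying the other only on bounded sets: since $\sup_n\|\Y^\R_{\omega_n}(A_\S)\| \le \|A_\S\| < \infty$, pointwise convergence of expectation values on all of $\T(\his)$ is precisely ultraweak convergence, so no separate argument is needed.

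The main obstacle is the trace-norm continuity of $g \mapsto U_\S(g)^*\Omega U_\S(g)$ — i.e., that strong continuity of $U_\S$ upgrades to trace-norm continuity of the conjugation action on $\T(\his)$. The cleanest route is to prove it first for rank-one $\Omega = \kb{\psi}{\phi}$, where $U_\S(g)^*\kb{\psi}{\phi}U_\S(g) = \kb{U_\S(g)^*\psi}{U_\S(g)^*\phi}$ and trace-norm continuity reduces to norm-continuity of $g \mapsto U_\S(g)^*\psi$, which is exactly strong continuity; then extend by density and the triangle inequality to finite-rank $\Omega$, and finally to all of $\T(\his)$ using that finite-rank operators are trace-norm dense and that conjugation is a trace-norm isometry (so an $\varepsilon/3$ argument closes it uniformly in $g$). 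Everything else — the weak-convergence input from Proposition \ref{prop:locseqgen}, the uniform norm bound from unitality, and the bookkeeping between the ultraweak and operational pictures — is routine. This is the promised generalization of Theorem~1 of \cite{lov1}: the earlier argument used a sharp (PVM) frame so that $\mu^{\E_\R}_{\omega_n}$ could be taken supported in shrinking neighbourhoods of $e$, whereas here localizability alone suffices precisely because we only need weak convergence of the conditioning measures against the continuous function $f$.
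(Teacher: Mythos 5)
Your proposal is correct and follows essentially the same route as the paper's own proof: test $\Y^\R_{\omega_n}(A_\S)$ against trace-class operators, observe that $g \mapsto \tr[\Omega\,(g.A_\S)]$ is bounded and continuous, and invoke the weak convergence $\mu^{\E_\R}_{\omega_n} \to \delta_e$ from Proposition \ref{prop:locseqgen}. The only difference is that you spell out the continuity of that function (via trace-norm continuity of conjugation) and the norm-boundedness bookkeeping, which the paper simply asserts.
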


\begin{proof}
    It is enough to check the agreement of expectation values of both sides. We then take an arbitrary state $\rho \in \S(\his)$ and calculate
    \[
    \tr[\rho(\Gamma_{\omega_n}\circ \Y)(A_\S)] = \tr[\rho\int_G g.A_\S d\mu^{\E_\R}_{\omega_n}(g)] = 
    \int_G \tr[\rho (g.A_\S)] d\mu^{\E_\R}_{\omega_n}(g)
    \]
    
    The function $g\mapsto \tr[\rho (g.A_\S)]$ is continuous and bounded, and by Proposition \ref{prop:locseqgen} we know that the sequence of measures $\mu^{\E_\R}_{\omega_n}$ converges weakly to $\delta_e $, so by the porte-manteau theorem we have: 
    
    \[\lim_{n \to \infty} \int_G \tr[\rho (g.A_\S)] d\mu^{\E_\R}_{\omega_n}(g)= \int_G \tr[\rho (g.A_\S)] \delta_e (g)= \tr[\rho A_\S] , \]
    
    and so the sequence of operators $ (\Gamma_{\omega_n}\circ \Y^\R)(A_\S) $ converges ultraweakly to $A_\S $.
\end{proof}

Thus the usual non-relational kinematics of quantum mechanics is recovered in the operational sense as a limiting procedure of localizing the reference system - any observable on the system $\S$ can be approximated to arbitrary precision by an observable relative to a localizable frame prepared in a highly localized state. Moreover, any state of $\S$ can be similarly approximated by the relative states of the form $\Y^\R_*[\omega_n \otimes \rho]_G$. We state it as a separate Proposition.

\begin{proposition}\label{cor:locrelst}
Given a localizable (principal) frame $\R$ and a system $\S$, the set of relative states $\S(\his)_\R$ is \emph{operationally dense} in $\S(\his)$.
\end{proposition}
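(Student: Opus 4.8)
The plan is to deduce this directly from Theorem \ref{th:con1}. First I would recall what ``operationally dense'' means: a subset $\mathcal{D} \subseteq \S(\his)$ is operationally dense if for every $\rho \in \S(\his)$ there is a net (or, since everything here is metrizable/sequential, a sequence) in $\mathcal{D}$ converging to $\rho$ in the operational topology, i.e. in the pointwise-convergence-of-expectation-values topology. So I need to exhibit, for an arbitrary $\rho \in \S(\his)$, a sequence of genuine $\R$-relative states converging to $\rho$.

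The natural candidates are the $\omega_n$-product $\R$-relative states $\rho^{(\omega_n)} = \Y^\R_*[\omega_n \otimes \rho]_G = (\Y^\R_{\omega_n})_*(\rho)$, where $\omega_n$ is the localizing sequence centered at $e \in G$ guaranteed to exist by localizability of $\R$ together with Proposition \ref{prop:locseqgen} (recall $G$, hence $\Sigma_\R = G$, is metrizable). Each such state lies in $\S(\his)^\R$ because it is the image under $\Y^\R_*$ of the global state $[\omega_n \otimes \rho]_G \in \S(\hirs)_G$, and by the characterization $\S(\his)^\R \cong \Y^\R_*(\S(\hirs)_G)$ established earlier it is a bona fide $\R$-relative state. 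Then Theorem \ref{th:con1}, in its predual form $\lim_{n\to\infty}(\Y^\R_{\omega_n})_*(\Omega) = \Omega$ for all $\Omega \in \T(\his)$, applied to $\Omega = \rho$, gives exactly $\lim_{n\to\infty}\rho^{(\omega_n)} = \rho$ in the operational topology. Hence every $\rho$ is an operational limit of relative states, which is the claim.

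One small point to address carefully: the relative states live a priori in $\S(\his)^\R = \S(\hirs)/\!\sim_\R$, whereas $\rho$ lives in $\S(\his)$, so I should be precise about the sense in which $\S(\his)_\R$ is ``inside'' $\S(\his)$. This is exactly the identification provided by Proposition \ref{generalst} with $F = \Y^\R$: the state space $\S(\his)^\R$ is realized as $\Y^\R_*(\S(\hirs)_G) \subseteq \S(\his)$ (the image of the predual, which is a subset of $\T(\his)^{\mathrm{sa}}$ consisting of density operators since $\Y^\R$ is unital and positive), so the convergence statement makes sense verbatim inside $\S(\his)$ with its operational topology. I do not anticipate a genuine obstacle here; the only mild subtlety is keeping the two descriptions of relative states (as equivalence classes versus as operators in the image of $\Y^\R_*$) aligned, but that bookkeeping has already been set up in the ``Relative descriptions'' section. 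The proof is then essentially a one-line appeal to Theorem \ref{th:con1} after unwinding definitions.
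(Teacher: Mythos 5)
Your argument is correct and matches the paper's own reasoning: the proposition is presented there as an immediate consequence of Theorem \ref{th:con1}, obtained exactly as you do by applying its predual form to $\rho$ and noting that the states $\Y^\R_*(\omega_n \otimes \rho)$ are relative states via the identification $\S(\his)^\R \cong \Y^\R_*(\S(\hirs)_G)$. Your extra care about realizing the relative states inside $\S(\his)$ through Proposition \ref{generalst} is the same bookkeeping the paper relies on implicitly.
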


Notice here that if $G$ is countable and the frame is ideal, no such limiting procedure is needed and the agreement is exact, i.e., there exists a state $\omega_{\R} = \dyad{e}$ and we have $(\Gamma_{\omega_R} \circ \Y^{P}) = \mathbb{1}_\S$. Indeed, for $\E_\R(g) = P(g) = \dyad{g} \in B(L^2(G))$, any state is an $\ket{e}$-conditioned $\R$-relative state since we have
\begin{equation}
\Y^\R_*(\dyad{e} \otimes \rho) = \sum_{g \in G}\tr[P(g)\dyad{e}]g.\rho = \sum_{g \in G} \delta(g,e)U^*_\S(g)\rho U_\S(g) = \rho.
\end{equation}

This is the setting considered in \cite{de2020quantum}, and states of the form $\dyad{e} \otimes \rho$ are called `aligned' \cite{krumm2021quantum}. Since $\Y^\R_*$ is constant on $G$-orbits, it is not sensitive to whether the state is aligned or only \emph{alignable} \cite{krumm2021quantum}.

Keeping the localizability assumption, we can go even further with identifying the relative notions with the standard non-relative ones upon high localization of the reference. This shows how the standard quantum-mechanical framework, developed in the context of macroscopic, classical measuring apparatuses, is realized as a limiting case of the relational framework presented here when the classical-like features of the reference are exploited.

\begin{proposition}\label{prop:locyen}
    Given a localizable (principal) frame $\R$ and a system $\S$ the operator space $B(\his)^\R$ inherits the von Neumann algebra structure from $B(\his)$ making the relativization map
    \[
        \Y^\R: B(\his) \xrightarrow{\cong} B(\his)^\R
    \]
    an isometric *-isomorphism. If further $\R$ is sharp then $\Y^\R$ is multiplicative and we~get
    \[
    B(\his)^\R \subseteq B(\hirs)^G \subseteq B(\hirs)
    \]
    as von Neumann algebras.
\end{proposition}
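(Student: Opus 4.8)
The plan is to use Theorem~\ref{th:con1} as the engine: for a localizable principal frame it produces an asymptotic left inverse $\Gamma_{\omega_n}$ of $\Y^\R$. First I would deduce from it that $\Y^\R$ is an injective isometry with ultraweakly closed range; then I would transport the von Neumann algebra structure of $B(\his)$ along $\Y^\R$; and finally, in the sharp case, I would check that the transported product is nothing but the restriction of the operator product of $B(\hirs)$.

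\emph{Injectivity, isometry, $*$-preservation.} If $\Y^\R(A_\S)=0$, Theorem~\ref{th:con1} gives $A_\S=\lim_n\Gamma_{\omega_n}(\Y^\R(A_\S))=\lim_n\Gamma_{\omega_n}(0)=0$, so $\Y^\R$ is injective. Both $\Y^\R$ and each $\Gamma_{\omega_n}$ are unital and completely positive, hence contractive, so $\|\Gamma_{\omega_n}(\Y^\R(A_\S))\|\le\|\Y^\R(A_\S)\|\le\|A_\S\|$. Since $A\mapsto\|A\|=\sup_{\|\rho\|_1\le1}|\tr[\rho A]|$ is a supremum of ultraweakly (i.e.\ operationally) continuous functionals, it is ultraweakly lower semicontinuous, and Theorem~\ref{th:con1} then yields $\|A_\S\|\le\liminf_n\|\Gamma_{\omega_n}(\Y^\R(A_\S))\|\le\|\Y^\R(A_\S)\|$; combined with the contraction bound this gives $\|\Y^\R(A_\S)\|=\|A_\S\|$. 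Being positive, $\Y^\R$ is automatically $*$-preserving, so the involution of $B(\hirs)$ restricts to $\Y^\R(B(\his))$.

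\emph{Closed range and transport of structure.} $\Y^\R$ is normal, hence ultraweakly continuous, so it sends the ultraweakly compact unit ball of $B(\his)=\T(\his)^*$ onto an ultraweakly compact set, which by the isometry is exactly the unit ball of the range; the Krein--Smulian theorem then makes the range ultraweakly closed, so $B(\his)^\R=\Y^\R(B(\his))^{cl}=\Y^\R(B(\his))$ and $\Y^\R : B(\his)\to B(\his)^\R$ is an isometric normal $*$-preserving bijection. I would then equip $B(\his)^\R$ with the product transported from $B(\his)$ along $\Y^\R$; this product is separately ultraweakly continuous because $\Y^\R$, $(\Y^\R)^{-1}$ and multiplication on $B(\his)$ are, so together with the inherited C$^*$-norm, $B(\his)^\R$ is a W$^*$-algebra isometrically $*$-isomorphic to $B(\his)$ --- which is what is meant by ``inheriting the von Neumann algebra structure from $B(\his)$'', with $\Y^\R$ the asserted isometric $*$-isomorphism.

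\emph{The sharp case.} If $\R$ is sharp, $\E_\R$ is projection-valued, so $\E_\R(X)\E_\R(Y)=\E_\R(X\cap Y)$, and I would show $\Y^\R$ is multiplicative already as a map into $B(\hirs)$:
\[
\Y^\R(A_\S)\Y^\R(B_\S)=\Big(\int d\E_\R(g)\otimes g.A_\S\Big)\Big(\int d\E_\R(h)\otimes h.B_\S\Big)=\int d\E_\R(g)\otimes g.(A_\S B_\S)=\Y^\R(A_\S B_\S),
\]
the middle equality being that $d\E_\R(g)\,d\E_\R(h)$ is concentrated on the diagonal $g=h$. Rigorously this means approximating the strongly continuous bounded maps $g\mapsto g.A_\S$ and $g\mapsto g.B_\S$ by $\E_\R$-measurable simple functions on a common refinement of their partitions, using $\E_\R(X_i)\E_\R(X_j)=\delta_{ij}\E_\R(X_i)$, and passing to the limit via normality of $\Y^\R$ and joint continuity of multiplication on bounded sets --- the operator-valued analogue of $\int f\,dP\cdot\int h\,dP=\int fh\,dP$, already used in \cite{lov1}. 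Multiplicativity, together with $\Y^\R(\mathbb{1}_\S)=\mathbb{1}$, $*$-preservation and the closed range, shows $B(\his)^\R$ is a von Neumann subalgebra of $B(\hirs)$ on which the transported product agrees with the ambient one; and since $\Y^\R(B(\his))\subseteq B(\hirs)^G$ by the invariance of the relativization map, while $B(\hirs)^G$ is a von Neumann algebra (the fixed-point algebra of the unitaries $U_\R(g)\otimes U_\S(g)$), we obtain $B(\his)^\R\subseteq B(\hirs)^G\subseteq B(\hirs)$ as von Neumann algebras. The main obstacle I anticipate is exactly this last multiplicativity computation: making the operator-valued integral $\int d\E_\R(g)\otimes g.A_\S$ and the diagonal-collapse identity rigorous, keeping careful track of strong versus ultraweak convergence of the approximating simple functions; by contrast, the injectivity, isometry and closed-range steps are routine once Theorem~\ref{th:con1} is in hand.
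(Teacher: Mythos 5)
Your proposal is correct and runs on the same engine as the paper's proof: the reverse norm inequality $\|A_\S\|\le\|\Y^\R(A_\S)\|$ is extracted from Theorem \ref{th:con1} (your ultraweak lower-semicontinuity of the norm is just a repackaging of the paper's pointwise computation $|\tr[\rho A_\S]|=\lim_n|\tr[(\omega_n\otimes\rho)\Y^\R(A_\S)]|\le\|\Y^\R(A_\S)\|$), and the algebra structure on $B(\his)^\R$ is then transported along the resulting injective isometry exactly as in the text. Where you genuinely diverge is in how the von Neumann property is secured: the paper lifts the C$^*$-identity and then appeals to the predual duality of Proposition \ref{prop:iml} ($[\T(\his)^\R]^\bigstar\cong B(\his)^\R$), silently identifying $B(\his)^\R=\Y^\R(B(\his))^{cl}$ with the image itself, whereas you prove this identification outright via compactness of the unit ball and Krein--Smulian, and then get the W$^*$-structure from separate ultraweak continuity of the transported product (for which you should note explicitly that $(\Y^\R)^{-1}$ is normal because a continuous bijection from the ultraweakly compact unit ball onto its Hausdorff image is a homeomorphism --- the compactness you already invoked). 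This closed-range step is a worthwhile tightening of the paper's argument. Finally, for the sharp case the paper merely asserts multiplicativity (deferring to \cite{lov1}), while you sketch the diagonal-collapse computation with simple-function approximation; your sketch is the standard and correct route, and your own flag that the operator-valued integral manipulation is the only step needing real care is accurate --- everything else in your argument is complete.
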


\begin{proof}
We will first show that if $\R$ is localizable then $\Y^\R$ is isometric. Given $A_\S\in B(\his)$ it is shown in \cite{lov1} that $||\Y^{\R}(A_\S)||\leq ||A_\S||$. So it just remains to show that $||A_\S|| \leq ||\Y^{\R}(A_\S)||$. By Theorem \ref{th:con1} there is a localizing sequence of states $\omega_n $ such that for all $\rho\in\S(\his) $:
\begin{align*}
|\tr[\rho A_\S ]|
&=\lim_{n\to \infty} |\tr[\rho(\Gamma_{\omega_n } \circ \Y^{\R}) (A_\S)]|\\
&=\lim_{n\to \infty} |\tr[(\omega_{n}\otimes \rho) \Y^{\R} (A_\S)]|\\
&\leq \sup_{\Omega} |\tr[\Omega \Y^{\R} (A_\S)]|=||\Y^{\R} (A_\S)||,
\end{align*}
where $\Omega \in \S(\hirs)$. Since $\rho$ above is arbitrary and we can conclude
\[
||A_\S||= \sup_{\rho}|\tr[\rho A_\S ]| \leq ||\Y^{\R} (A_\S)||,
\]
Thus if $\R$ is localizable then $\Y^{\R}$ is injective which allows to define an algebra structure on $B(\his)^\R $ by $\Y^{\R}(A_\S)\cdot \Y^{\R}(B):=\Y^{\R}(A_\S B_\S)$. Since $\Y^\R$ is isometric and preserves adjoints, the $C^*$-identity is lifted from $B(\his)$ as we have
\begin{align*}  
||\Y^{\R}(A_\S)^* \cdot  \Y^{\R}(A_\S) || &= ||\Y^{\R}(A_\S^*) \cdot \Y^{\R}(A_\S) || =||\Y^{\R}(A_\S^* A_\S )||\\
=||A_\S^* A_\S||&=||A_\S^* || ||A_\S||=||\Y^{\R}(A_\S)^* || ||\Y^{\R}(A_\S)||,
\end{align*}
and we have the predual $[\T(\his)^\R]^\bigstar \cong B(\his)^\R $, it is a von Neumann algebra. The relativization map $\Y^{\R}$ is then an isometric *-isomorphism by definition. If furthermore, $\R$ is sharp, then $\Y^{\R}$ is multiplicative making $B(\his)^\R $ a von Neumann subalgebra of $B(\hirs)$.
\end{proof}

Notice here, that if the frame $\R$ is localizable but \emph{not} sharp, and the system is given as a composite $\hi_\S\otimes \hi_{\S'}$, this tensor product structure will \emph{not} be preserved under the above isometric isomorphism, which suggests a novel approach to `relativity of subsystems' issue \cite{Ali_Ahmad2022-da}.

\newpage
\section{Relationality of measurement}\label{measuringY}

In this section, we show concretely how our principle of measuring only relativized observables interacts with the standard measurement-theoretic setup. We show that the measurement of covariant observables on the system $\S$ can be modeled as a measurement of relative orientation observables upon localization of the reference system under the assumption that the evolution of the composite system commutes with the group action. This direction will be explored further in future work and is here presented as an illustration of the claims made above regarding the compatibility of the presented formalism with the more standard approaches to quantum measurement under symmetries. This section is independent of the main flow of the narrative.

We begin by recalling a description of a measurement setup as in Quantum Measurement Theory (see e.g. \cite{loveridge2012quantum}). The measurement of a POVM $\E_\S: \mathcal{B}(\Sigma_\S) \to B(\his)$ on a system $\S$ can be described by specifying a \emph{measuring apparatus} (frame) $\R$, together with a \emph{pointer (frame-orientation) observable}  $\E_\R: \mathcal{B}(\Sigma_\R) \to B(\hir)$ on a sample space $\Sigma_\R$, a \emph{Borel map} $f: \Sigma_\R \to \Sigma_\S$, and a \emph{pointer state} $\omega_p \in \S(\hir)$. Such a setup can be depicted~as

\begin{center}
    \begin{equation}\label{qmsetup}
    \begin{tikzcd}
        B(\hir)                                   & B(\his)                                                       \\
        \mathcal{B}(\Sigma_\R) \arrow[u, "\E_\R"] & \mathcal{B}(\Sigma_\S) \arrow[l, "f^{-1}"] \arrow[u, "\E_\S"].
    \end{tikzcd}
    \end{equation}
\end{center}

Further, an \emph{interaction} between $\R$ and $\S$, i.e. a short time unitary evolution $U_{\Delta t} = e^{iH\Delta t} \in B(\hir \otimes \his)$, is introduced that entangles an initially separable state $\omega_p \otimes \rho_\S \in \S(\hir \otimes \his)$ in a way that allows recovering the measurement statistics of $\E_\S$ on $\rho_\S$ by measuring the pointer observable $\E_\R$. More precisely, for any $X \in \mathcal{B}(\Sigma_\S)$ we should have the \emph{probability reproducibility condition} (PRC) \cite{loveridge2012quantum}

\begin{equation}\label{probrepcond}
    \tr[U_{\Delta t}^*(\omega_p \otimes \rho_\S)U_{\Delta t}\E_\R(f^{-1}(X)) \otimes \mathbb{1}_\S ] = \tr[\rho_\S \E_\S(X)].
\end{equation}

Thus the interaction between the system and the measuring apparatus allows an experimenter to access the measurement statistics of the observable $\E_\S$ while the Borel function $f$ allows relating the sample space $\Sigma_\S$, which is an abstract space, e.g. the spectrum of a self-adjoint operator, with the space $\Sigma_\R$ to which the experimenter has direct access to, e.g.~a~detection screen. One can also see the equation above as \emph{defining} the observable $\E_\S$ by the measurement setup~($U_{\Delta t}$,~$f$,~$\E_\R$,~$\omega_p$). If we now assume, which is a common practice, both sample spaces to be identical and set $f=\text{Id}_\Sigma$, the PRC can be then rewritten in the following form
\begin{equation}
\tr[(\omega_p \otimes \rho_\S) \E_{\R}(\Delta t)(X)] = \tr[\rho_\S \E_\S(X)],
\end{equation}
where we have introduced the notation for the evolved frame observable
\begin{equation}
\E_{\R}(\Delta t) := U_{\Delta t}[\E_\R(\cdot) \otimes \mathbb{1}_\S]U_{\Delta t}^*.
\end{equation}

In the presence of symmetries, i.e. when we have an action of $G$ on $\Sigma_\R$, we should have a corresponding action of $G$ on $\hir$ making $\E_\R$ covariant. A short calculation shows that if the interaction $U_{\Delta t}$ commute with frame 'rotations', i.e. $[U_{\Delta t}, U_\R(g) \otimes \mathbb{1}_{\S}]$ for all $g \in G$, a stronger condition holds. Indeed, for any $h \in G$ we then have
\begin{equation}\label{RRP}
\tr[(h.\omega_p \otimes \rho_\S) E_{\R}(\Delta t)(h.X)] = \tr[\rho_\S \E_\S(X)].
\end{equation}

Thus when we 'rotate' the sample space of the apparatus by $h \in G$, which corresponds to acting with $h$ on the $\epsilon$-pointer state, to reproduce the statistics of $\E_\S$ by measuring $E_{\R}(\Delta t)$ on the product state we need to first 'rotate' the input accordingly. We will refer to the above as the \emph{relational reproducibility condition} (RRC). If we further assume that both systems are principal quantum reference frames on the same group $G$, so that $\E_S$ is now assumed covariant as well, and that $\E_\R$ is localizable, for given $\E_S$ and $\E_R$ we can generate the relative orientation observable that will satisfy the RRC in the localized limit. Indeed, take $\omega_n$ to be a localizing sequence for $\E_\R$. Then for any $X \in \mathcal{B}(G)$ we have
\begin{align*}
&\lim_{n \to \infty}\tr[(h.\omega_n \otimes \rho_\S) \E_\S*\E_\R (h.X)]\\
&= \lim_{n \to \infty}\int_G d\mu_{h.\omega_n}^{\E_\R}(g)\tr[\rho_\S g.\E_\S(hX)]\\
&= \tr[\rho_\S h^{-1}.\E_\S(hX)] = \tr[\rho_\S \E_\S(X)].
\end{align*}
If we thus localize the frame at $h\in G$, i.e. take the input state to be $(h^{-1}.\omega_n \otimes \rho_\S)$, to reproduce the statistics of $\E_\S$ by measuring $\E_\S*\E_\R$ on the product state we need to first 'rotate' the input set the \emph{opposite} way. This is an instance of an \emph{active} transformation of the frame -- we localize it at a given point in $G$ -- in contrast to the \emph{passive} transformation that we had before when the whole sample space was being 'rotated'. Given a localizable frame, measuring covariant observables can thus be modeled by a measurement of the corresponding relative orientation observables.

\newpage
\section{Relativized restricted descriptions}\label{sec:relresdes}

Consider now a \emph{global} description in terms of $B(\hi_\T)^G$. Within this setup, two systems $\S$ and $\R$ are distinguished, with $\hi_\T \cong \hirs$, and the \emph{invariant} effects may be conditioned upon a particular state of the reference system by applying the restriction map. If further, the factorization into $\hirs$ respects the $G$-action, i.e. we have $U_\T = U_\R \otimes U_\S$, and $\R$ is indeed a frame, so comes equipped with a covariant POVM $\E_\R: \mathcal{B}(G) \to B(\hir)$, we can relativize the restricted observables, arriving at the subset of the relative observables. Thus the following definition.

\begin{definition}\label{def:liftmap}
The map
\[
\Gamma^\R_\omega := \Y^\R \circ \Gamma_\omega: B(\hirs)^G \to B(\his)^\R.
\]
will be called the \emph{$\R$-relativized $\omega$-restriction~map}. We will write $\mathcal{L}^\R_\omega := (\Gamma^\R_\omega)_*$ for the predual map and refer to it as the \emph{$\R$-relative $\omega$-lifting map}.
\end{definition}

As before, since the $\R$-relativized $\omega$-restriction map is a composition of $\Y^\R$ and $\Gamma_\omega$ which are both unital, normal, and (completely) positive, so is $\Gamma^\R_\omega$. The \emph{$\R$-relative $\omega$-lifting map} can be seen as our operational analog of the `disentangler map' of \cite{de2021perspective}. Indeed, we have
\begin{equation}
\mathcal{L}^\R_\omega: \T(\his)_\R \ni \Omega^\R \mapsto [\omega \otimes \Omega^\R]_G \in \T(\hirs)_G,
\end{equation}
so that the image of $\mathcal{L}^\R_\omega$ consists of $G$-equivalence classes of product states. This map allows to \emph{lift} the $\R$-relative states of an internal frame $\R$ up to the global description of $B(\hi_\T)^G$.

\begin{definition}\label{def:condYrel}
Given a frame $\R$ and a system $\S$, we will refer to operationally $\Eff(\his)^\omega_\R$-equivalent trace-class operators on $\T(\his)$, where
\[
    \Eff(\his)^\omega_\R := \Gamma^\R_\omega(\Eff(\hirs)^G) \subset \Eff(\his)^\R,
\]
as \emph{($\omega,\R$)}-equivalent, while the \emph{($\omega,\R$)}-equivalence classes of \emph{states} in $\S(\hirs)$ will be called \emph{$\omega$-lifted ($\R$-)relative states}.
\end{definition} 

\begin{definition}\label{def:condYrelobs}
Given a frame $\R$ and a system $\S$, elements of the Banach space
\[
B(\his)_\R^\omega := \rm{span}\{\Eff(\his)^\omega_\R\}^{cl} = \Gamma^\R_\omega(B(\hirs)^G)^{cl} \subseteq B(\his)^\R
\]
will be called \emph{$\R$-relativized $\omega$-conditioned operators}.
\end{definition}

The Propositions \ref{prop:iml}, \ref{prop:statespace} and \ref{generalst} (for $F=\Gamma^\R_\omega$) then give the following.

\begin{proposition}
    We have the following Banach space isomorphism
    \[
    \left[\T(\his)/\hspace{-3pt}\sim_{\Eff(\his)^\omega_\R}\right]^\bigstar \cong B(\his)^\omega_{\R},
    \]
    Moreover, the set of $\omega$-lifted \emph{($\R$)}-relative states as given below is a state space
    \[
    \S(\his)_{\R}^\omega := \S(\his)/\hspace{-3pt}\sim_{\Eff(\his)^\omega_\R} \cong (\Gamma^\R_\omega)_*(\S(\hirs)_G) \subset \T(\hirs)^{\rm{sa}}/\hspace{-3pt}\sim_{\Eff(\his)^\omega_\R}.
    \]
\end{proposition}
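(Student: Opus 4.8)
The plan is to invoke the three cited propositions in sequence, exactly as the paper has done for all the previous descriptions (invariant, framed, relational, relative, conditioned-relative). The setup is: we have $\mathcal{O} = \Eff(\his)^\omega_\R := \Gamma^\R_\omega(\Eff(\hirs)^G)$, which is by Definition \ref{def:condYrel} the image under the map $\Gamma^\R_\omega = \Y^\R \circ \Gamma_\omega$ of the effects $\Eff(\hirs)^G$. The key structural fact, established just before this proposition, is that $\Gamma^\R_\omega$ is unital, normal, and completely positive as a composition of the unital normal CP maps $\Y^\R$ and $\Gamma_\omega$ — so in particular it is a normal, positive, unital map $B(\hirs)^G \to B(\his)$ (or, extending trivially, with domain $B(\hirs)$ by restriction, but the natural domain is the invariant algebra since that is where $\Y^\R$ takes values in its predual picture).

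First I would apply Proposition \ref{prop:iml} with $\mathcal{O} = \Eff(\his)^\omega_\R$: this gives that $\T(\his)/\hspace{-3pt}\sim_{\Eff(\his)^\omega_\R}$ is a Banach space with dual isometrically isomorphic to $\rm{span}(\Eff(\his)^\omega_\R)^{cl}$, which by Definition \ref{def:condYrelobs} is precisely $B(\his)^\omega_\R$ — this establishes the first displayed isomorphism. Second, I would apply Proposition \ref{prop:statespace} with the same $\mathcal{O}$: this gives that $\S(\his)/\hspace{-3pt}\sim_{\Eff(\his)^\omega_\R}$ is a total convex (hence state space) and operationally closed subset of $\T(\his)^{\rm{sa}}/\hspace{-3pt}\sim_{\Eff(\his)^\omega_\R}$. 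Third, to get the alternative characterization $\S(\his)_\R^\omega \cong (\Gamma^\R_\omega)_*(\S(\hirs)_G)$, I would apply Proposition \ref{generalst} with $F = \Gamma^\R_\omega: B(\hirs)^G \to B(\his)$, which is normal, positive and unital; the proposition then yields a state space isomorphism $\S(\his)/\hspace{-3pt}\sim_{\Im \Gamma^\R_\omega} \cong (\Gamma^\R_\omega)_*(\S(\hirs)_G)$, and since $\Im \Gamma^\R_\omega = \Eff(\his)^\omega_\R$ (on the effect level) this is exactly the claim.

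One small point worth spelling out is the domain bookkeeping for the predual map: since $\Y^\R$ has image in $B(\hirs)^G$, its predual $\Y^\R_*$ naturally has domain $\T(\hirs)_G = \T(\hirs)/\hspace{-3pt}\sim_G$, and likewise $\Gamma^\R_\omega$ is regarded as a map out of $B(\hirs)^G$ so $(\Gamma^\R_\omega)_*$ has domain $\T(\hirs)_G$ and restricts to $\S(\hirs)_G$ — this matches the statement of Proposition \ref{generalst} once we take $\mathcal{K}$ there to be the "system" whose bounded operators are $B(\hirs)^G$ (which is a von Neumann algebra, so $\S(\hirs)_G$ plays the role of $\S(\mathcal{K})$, cf. the remark after Proposition \ref{generalst} on von Neumann subalgebras). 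Concretely $(\Gamma^\R_\omega)_* = (\Y^\R)_* \circ (\Gamma_\omega)_* = (\Y^\R)_* \circ \mathcal{V}_\omega$, sending $\Omega^\R \mapsto \Y^\R_*(\omega \otimes \Omega^\R) = [\omega \otimes \Omega^\R]_G$, recovering the formula displayed just before the proposition.

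Honestly, there is essentially no obstacle here: the proposition is a routine instantiation of the general operational-equivalence machinery (Propositions \ref{prop:iml}, \ref{prop:statespace}, \ref{generalst}) to the specific map $\Gamma^\R_\omega$, identical in form to the five or so analogous propositions proved earlier in the chapter. The only thing requiring any care — and it is the closest thing to a "main point" — is verifying that $\Gamma^\R_\omega$ genuinely is normal, positive and unital so that Proposition \ref{generalst} applies, and that its image as a set of effects coincides with $\Eff(\his)^\omega_\R$ rather than with some strictly larger set; both follow immediately from $\Gamma^\R_\omega = \Y^\R \circ \Gamma_\omega$ being a composition of maps with these properties (established in the text preceding the proposition) together with Definitions \ref{def:condYrel} and \ref{def:condYrelobs}. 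So the proof is three one-line invocations plus the domain remark above.
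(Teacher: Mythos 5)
Your proposal is correct and matches the paper's own treatment: the paper proves this proposition precisely by invoking Propositions \ref{prop:iml}, \ref{prop:statespace} and \ref{generalst} with $F=\Gamma^\R_\omega$, relying on $\Gamma^\R_\omega = \Y^\R\circ\Gamma_\omega$ being unital, normal and positive. Your extra remark on the domain bookkeeping (that $(\Gamma^\R_\omega)_*$ acts on $\T(\hirs)_G$, consistent with $B(\hirs)^G$ being a von Neumann algebra) is a careful spelling-out of what the paper leaves implicit, not a different route.
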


Notice here, that in the case of a \emph{localizable} frame $\R$, using the lifting map $\mathcal{L}^\R_{\omega_n}$ we can lift an arbitrary $\R$-relative state to a global one which gives back initial relative state upon applying $\Y^\R_*$ up to arbitrary precision, thus providing an \emph{approximate left-inverse} to~$\Y^\R_*$. Indeed, the Theorem \ref{th:con1} gives
    \begin{equation}
    \lim_{n \to \infty} \Y^\R_* \circ \mathcal{L}^\R_{\omega_n} (\Omega^\R) = \lim_{n \to \infty} \Y^\R_*[\omega_n \otimes \Omega^\R]_G= \Omega^\R.
    \end{equation}
Thus in the case of a localizable principal frame $\R$, any $\R$-relative state can be approximated to arbitrary precision by $\omega_n$-product relative states.
\myemptypage

\chapter{Frame-change maps}\label{ch:FrChM}
In a relational framework, like the one presented in this work, we may be concerned with different relative descriptions. Being able to consistently translate between them is a very natural requirement for such a setup, which we address in this chapter. In fact, solving the problems concerning such a translation motivated much if not all the new content of the presented framework -- the proposed notion of relative states, the framing procedure, and the lifting map were all developed in order to provide a rigorous, fairly general, and operational frame-change map. We will use all the concepts provided in the previous chapters in what follows.

In the special theory of relativity, on which this framework in some sense is based as explained in the Introduction, the procedure of changing the frame is performed by applying a Poincar{\'e} transformation to the description given in a particular coordinate system chosen for the Minkowski space. This choice of coordinates is now lifted to the choice of a quantum system with a frame-orientation observable, so a frame, possibly also conditioned upon a frame's state. The usual, non-relational approach to the relativistic description of quantum systems, where we simply assume the action of the Poincar{\'e} group on the system's Hilbert space to exist and be unitary, which is motivated by the interpretation of (pure) states as initial and final with respect to the performed transformation, is recovered as \emph{frame rotations upon localizing a principal frame}. Indeed, as easily confirmed, rotating a localized frame is nothing other than acting on the system itself since for $h \in G$ and $\omega_n$ a localizing sequence centered at $e \in \G$, for any $A_\S$ and in the usual ultraweak topology we have
\begin{equation}
    \lim_{n \to \infty}\Y^\R_{h^{-1}.\omega_n} (A_\S) = U_\S(h)A_\S U_\S(h)^*.
\end{equation}
As noted before, in Special Relativity all frames are `the same', so `changing' the frame and `rotating' the frame is precisely the same thing. In quantum reference frames framework like this ones, however, these notions are very different, and alternative conceptions of what `changing frame' can mean are valid. Here we consider internal, state-based frame-change maps for principal frames, discussing some other possible lines of inquiry in the Discussion chapter \ref{ch:discuss}. We are then interested in exploiting the relations between different internal descriptions in terms of states relative to different frames and providing state space maps that translate one such description to another. Assuming the initial frame $\R_1$ to be \emph{localizable}, allows us to use the $\Y$ construction and combine the intuitions explicated in \cite{de2021perspective} -- that to pass from one relative description to another we should go through a `perspective-neutral' one, which in our setup is given by the \emph{global description} in terms of $B(\hi_\T)^G$ -- and those of \cite{de2020quantum} suggesting that this passage from the relative state to the somehow `corresponding' global one is achieved by `attaching the identity state'. While the states $\ket{0}$ used in \cite{de2020quantum} are not available as normal states in the case of continuous groups, we can make this intuition precise in the context of arbitrary locally compact topological groups in terms of the localizing sequences of states. It is not clear under what conditions the maps used in the context of similar construction in the perspective-neutral approach \cite{de2021perspective} -- the Heisenberg reduction maps, and its weak inverse -- exist, leaving a mathematically cautious reader doubtful of the generality of the results that are being provided there. The construction proposed here, although for now slightly less general than that of \cite{de2021perspective} in the sense of allowing only principal frames, is more general in the types of such frames and groups that can be considered, fully operational, and rigorously defined. In the simple setting of finite groups and ideal frames, where \cite{de2020quantum} and \cite{de2021perspective} agree, our construction, differing from those previously proposed on the level of the relative states, gives the same results up to the operational equivalences that we work with. We believe that the claims concerning the behavior of entanglement, superposition, entropy, and so on under the frame-change maps should be treated with much care regarding operationality.

The first section below begins with describing the relative descriptions containing a pair of frames. To this end, we then consider a global description in terms of $B(\hi_\T)^G$, from which the \emph{internal} quantum reference frames are chosen in terms of the decomposition of $\hi_\T\cong \hio \otimes \hit \otimes \his$ respecting the $U_\T$ action and fixing the frame-orientation observables for the frames. The appropriate operational state spaces will then serve as domains and codomains for the frame change maps.

In the next section, the frame-change maps are defined and proof of their invertibility and composability is provided. The frame-change maps are \emph{not} channels in the usual sense since they are not given at the level of operator algebras, but are perfectly good state space maps appropriate in the convex-theoretic setup that we end up working~in.

In the closing section, we compare our setup with the QRF change maps as presented in \cite{de2020quantum} in the finite group and ideal frames case. We find that when evaluated on the basis states $\ket{g}$ (see \ref{ex:covsharpfin}) our construction yields exactly the same output (pure) states as the one in \cite{de2020quantum}. Thus we recover the classical intuition that underlines and indeed motivates the setup as presented there. In the mentioned work the ``principle of coherent change of reference system'' is stated and invoked in order to extend the map by linearity to the Hilbert space level. This approach is based on a conviction about a special role played in the foundations of quantum mechanics by the pure states and linearity at the level of Hilbert spaces, for which we do not see the operational justification. As we discover, when evaluated on more general (pure) states, the two constructions differ but remain \emph{operationally equivalent}.

As a closing remark, we present a simple procedure of constructing a relative state with respect to a different frame when both frames are treated \emph{externally} to the system in question. This promising direction will be explored elsewhere.

\newpage
\section{Framed relative descriptions}

 To consider a pair of `independent', i.e. described by distinct quantum systems\footnote{A more general version of the two-frame setup could also be considered, see chapter \ref{ch:discuss}.}, internal reference frames, we look for the decomposition of the total Hilbert space as $\hi_\T \cong \hio \otimes \hit \otimes \his$ compatible with the representation $U_\T$ of $G$ on $\hi_\T$, i.e. such that $U_\T = U_1 \otimes U_2 \otimes U_\S$, and provide a pair of frame-orientation observables denoted $\E_i: \mathcal{B}(G) \to \Eff(\hi_i)$, with $i=1,2$. The set of effects corresponding to this setup~is
 \begin{equation}
\Eff(\hi_\T)_{\E_1,\E_2}^G := \Eff(\hi_\T)^G \cap \Eff(\hi_\T)_{\E_1} \cap \Eff(\hi_\T)_{\E_2}.
 \end{equation}
 
We will only be concerned with translating \emph{relative} descriptions.\footnote{This can be done without loss of generality if the conjecture \eqref{conj:rel=rel} holds since we will assume the localizability of frames.} Picking $\R_1$ as a first frame, the relative effects respecting the setup, i.e. the choice of the second frame-orientation observable, are given by the $\R_1$-relativized effects generated by those of the form $\E_2(X) \otimes F_\S$. Hence the following.

\begin{definition}\label{def:framedrelst}
    Let the $(\R_1,\E_2)$-equivalence relation on $\T(\hio \otimes \hit \otimes \his)$, denoted $\Omega \sim_{\R_1,\E_2} \Omega'$, be the operational equivalence with respect to the following set
    \begin{align*}
         B(\hi_2 \otimes \his)^{\R_1}_{\E_2} &:= \rm{span}\left\{\Y^{\R_1}(\E_2(X) \otimes F_\S) \hspace{3pt} | \hspace{3pt} X \in \mathcal{F}(G), F_\S \in \Eff(\his)\right\}^{cl},\\
         \Eff(\hi_2 \otimes \his)^{\R_1}_{\E_2} &:= \{F \in B(\hi_2 \otimes \his)^{\R_1}_{\E_2} \hspace{3pt} | \hspace{3pt} \mathbb{0} \leq F \leq \mathbb{1}\}.
     \end{align*}
    Operators in $B(\hi_2 \otimes \his)^{\R_1}_{\E_2}$ will be referred to as $\R_2$-framed $\R_1$-relative, elements of the corresponding operational state space
    \[
\S(\hi_2 \otimes \his)^{\R_1}_{\E_2} := \S(\hio \otimes \hit \otimes \his)/\hspace{-3pt}\sim_{\Eff(\hi_2 \otimes \his)^{\R_1}_{\E_2}}
\]
will be called \emph{$\R_2$-framed $\R_1$-relative states} and denoted as $[\Omega^{\R_1}]_{\E_2}$ or $\Omega^{\R_1}_{\E_2}$. The state space maps projecting onto the framed trace-class operators will be denoted as $\pi_{\E_i}$, e.g.
\[
\pi_{\E_2}: \T(\hit \otimes \his)^{\R_1} \to \T(\hit \otimes \his)^{\R_1}_{\E_2}.
\]
\end{definition}

Notice here that in the absence of the system $\S$, for $\Omega, \Omega' \in \S(\hio \otimes \hit)$ we have
    \[        \S(\hit)^{\R_1}_{\E_2} = \S(\hio \otimes \hit)/\hspace{-3pt}\sim_{\E_1 * \E_2},
    \]
where $\sim_{\E_2 * \E_1}$ denotes operational equivalence with respect to the relative orientation observable $\E_2 * \E_1 = \Y^{\R_1} \circ \E_2$ (\ref{def:relorobs}). Thus in this situation, the $\R_2$-framed $\R_1$-relative states are only fixed up to the probability distributions they give rise to upon evaluation of the relative orientation observable. This resembles the classical way of thinking about the relative orientation of frames, lifted to the probabilistic regime, with the full-fledged classicality of a single group element describing the relative orientation being recovered upon localization of both (principal) frames \ref{prop:relorobs}.

\newpage
\section{Changing Reference}

We now turn to present the main results of this section. As argued above, the quantum reference frame-change map should translate the internal $\R_2$-framed $\R_1$-relative description to the $\R_1$-framed $\R_2$-relative one and thus it should be a state space map between the following state spaces
\begin{equation}
\S(\hit \otimes \his)^{\R_1}_{\E_2} \to \S(\hio \otimes \his)^{\R_2}_{\E_1}.
\end{equation}
Under the assumption of localizability of $\R_1$, we choose the following strategy. First use the lifting map $\mathcal{L}_{\omega_n}^{\R_1}$ to take the $\R_1$-relative input states to the $G$-equivalent states on the global invariant algebra $B(\hio \otimes \hit \otimes \his)^\G$. Then apply the $\Y^{\R_2}_*$ map to get the corresponding $\R_2$-relative states. This procedure looks as follows.
\begin{center}
\begin{tikzcd}
& \mathcal{S}(\hio \otimes \hit \otimes \his)_G  \arrow[rd, "\Y^{\R_2}_*"] &   \\
\mathcal{S}(\hit \otimes \his)^{R_1} \arrow[ru, "\mathcal{L}^{\R_1}_{\omega_n}"] \arrow[rr,"\Y^{\R_2}_* \circ \mathcal{L}^{\R_1}_{\omega_n}"] &          &  \mathcal{S}(\hio \otimes \his)^{R_2}.
\end{tikzcd}
\end{center}

When the framed operational equivalences are taken care of, in the limit of $n \to \infty$, we arrive at the following notion of a frame-change map.
\begin{definition}\label{def:frchm}
    Assume $\R_1$ to be a localizable (principal) frame. The map 
    \[
        \Phi^{loc}_{1 \to 2}:= lim_{n \to \infty} \pi_{\E_1} \circ \Y^{\R_2}_* \circ \mathcal{L}^{\R_1}_{\omega_n}: \S(\hit \otimes \his)_{\E_2}^{\R_1} \to \S(\hio \otimes \his)^{\R_2}_{\E_2},
    \]
    where $\omega_n$ is any localizing sequence for $\E_1$, will be called a (localized) \emph{frame-change map}. On the $\R_2$-framed $\R_1$-relative states the frame-change map then acts as
    \[
        \Phi^{loc}_{1 \to 2}: [\Omega^{\R_1}]_{\E_2} \mapsto \lim_{n \to \infty} [\Y^{\R_2}_* \circ \mathcal{L}^{\R_1}_{\omega_n}(\Omega^{\R_1}_{\E_2})]_{\E_1}.
    \]
\end{definition}

We will now prove that the frame-change maps are well-defined, invertible exactly when both frames are localizable, composable in the setup of three frames, and translate consistently between the different framed relative descriptions, thus deserving the given name.

\begin{theorem}\label{thm:frmchm}
The frame change map $\Phi_{1\to 2}^{loc}$ is a well-defined state space map making the following diagram \emph{commute}

\begin{center}
\begin{tikzcd}\label{diag:fcmconsist}
            & \mathcal{S}(\hio \otimes \hit \otimes \his)_G \arrow[ld, "\pi_{\E_2} \circ \Y^{\R_1}_*"'] \arrow[rd, "\pi_{\E_1} \circ \Y^{\R_2}_*"] &   \\
\mathcal{S}(\hit \otimes \his)_{\E_2}^{R_1}  \arrow[rr,"\Phi_{1 \to 2}^{loc}"] &                                                                             &  \mathcal{S}(\hio \otimes \his)^{R_2}_{\E_1},
\end{tikzcd}
\end{center}
If further $\R_2$ is also localizable the analogously defined map in the opposite direction provides an inverse, i.e. we have
\[
    \Phi_{2 \to 1}^{loc} \circ \Phi_{1 \to 2}^{loc} = \text{Id}_{\S(\hit \otimes \his)^{R_1}_{\E_2}}.
\]
Moreover, given $\hi_{\T'} \cong \hio \otimes \hit \otimes \hith \otimes \his$ and assuming $\R_1$ and $\R_2$ to be localizable (principal) frames, and $\R_3$ and arbitrary (principal) frame, we have
\[
    \pi_{\E_2} \circ \Phi_{1 \to 3}^{loc} = \Phi_{2 \to 3}^{loc} \circ \Phi_{1 \to 2}^{loc}.
\]
\end{theorem}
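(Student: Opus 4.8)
The whole statement hinges on a single group-theoretic identity comparing the two relativization maps attached to the decomposition $\hi_\T\cong\hio\otimes\hit\otimes\his$. First I would prove the \emph{key lemma}: for every Borel $Y\subseteq G$ and $F_\S\in B(\his)$,
\[
\Y^{\R_2}\!\bigl(\E_1(Y)\otimes F_\S\bigr)=\Y^{\R_1}\!\left(\int_{Y^{-1}}d\E_2(h)\otimes h.F_\S\right),
\]
together with the symmetric statement obtained by exchanging $1\leftrightarrow2$. This is obtained by writing both sides as operator-valued integrals over $G$ in the slot order $(\hio,\hit,\his)$, inserting the covariance relations $U_i(g)\E_i(Y)U_i(g)^*=\E_i(g.Y)$, performing the substitution $k=gh$ in the expansion of $\Y^{\R_1}$, applying Fubini, and using the set identity $\{g\in G:k\in gY\}=kY^{-1}$; both sides collapse to $\int_G\E_1(k.Y)\otimes d\E_2(k)\otimes k.F_\S$ (the inversion $Y\mapsto Y^{-1}$ here is the same one appearing in the relation between $\E_2*\E_1$ and $\E_1*\E_2$ recorded after Definition~\ref{def:relorobs}). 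Since $\int_{Y^{-1}}d\E_2(h)\otimes h.F_\S$ is an ultraweak limit of finite sums $\sum_j\E_2(Z_j)\otimes h_j.F_\S$ and $\Y^{\R_1}$ is normal, the lemma gives $\Y^{\R_2}(\E_1(Y)\otimes F_\S)\in B(\hit\otimes\his)^{\R_1}_{\E_2}$ and, by symmetry, the converse, whence $B(\hio\otimes\his)^{\R_2}_{\E_1}=B(\hit\otimes\his)^{\R_1}_{\E_2}\subseteq B(\hi_\T)^G$ as subspaces. Consequently the operational equivalences $\sim_{\R_1,\E_2}$ and $\sim_{\R_2,\E_1}$ on $\T(\hi_\T)$ coincide, the two state spaces in the theorem's diagram are one and the same quotient of $\S(\hi_\T)$, and $\pi_{\E_1}\circ\Y^{\R_2}_*$ descends to a well-defined affine map $\bar p_2:\S(\hit\otimes\his)^{\R_1}_{\E_2}\to\S(\hio\otimes\his)^{\R_2}_{\E_1}$, namely the canonical identification $[\Omega]_{\E_2}\mapsto[\Omega]_{\E_1}$; symmetrically one gets $\bar p_1$ in the other direction.

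Next I would show $\Phi^{loc}_{1\to2}=\bar p_2$. Write the input as $[\Omega^{\R_1}]_{\E_2}$ with $\Omega^{\R_1}=\Y^{\R_1}_*(\Omega)$, $\Omega\in\S(\hi_\T)_G$ (possible since $\Y^{\R_1}_*$ is onto the $\R_1$-relative states); then $\mathcal{L}^{\R_1}_{\omega_n}(\Y^{\R_1}_*\Omega)=[\omega_n\otimes\Y^{\R_1}_*\Omega]_G$, and it suffices to test expectation values against the effects $\Y^{\R_2}(\E_1(Y)\otimes F_\S)$ spanning the codomain. Expanding $\Y^{\R_2}(\E_1(Y)\otimes F_\S)=\int_G\E_1(h.Y)\otimes d\E_2(h)\otimes h.F_\S$, the $\hio$-trace against $\omega_n$ yields the scalar $\mu^{\E_1}_{h.\omega_n}(Y)$; by covariance $h.\omega_n=\omega_n(h^{-1})$ localizes at $h^{-1}$, so Proposition~\ref{prop:locseqgen} gives $\mu^{\E_1}_{h.\omega_n}(Y)\to\delta_{h^{-1}}(Y)=\chi_{Y^{-1}}(h)$, and a dominated-convergence argument (the integrand is bounded by $1$ and the residual complex measure has finite total variation) yields
\[
\lim_{n}\ \mathrm{tr}\!\bigl[(\omega_n\otimes\Y^{\R_1}_*\Omega)\,\Y^{\R_2}(\E_1(Y)\otimes F_\S)\bigr]=\mathrm{tr}\!\left[\Y^{\R_1}_*\Omega\int_{Y^{-1}}d\E_2(h)\otimes h.F_\S\right]=\mathrm{tr}\!\bigl[\Omega\,\Y^{\R_1}\!\bigl(\textstyle\int_{Y^{-1}}d\E_2(h)\otimes h.F_\S\bigr)\bigr],
\]
which by the key lemma equals $\mathrm{tr}\bigl[\Omega\,\Y^{\R_2}(\E_1(Y)\otimes F_\S)\bigr]$. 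Hence $\lim_n[\Y^{\R_2}_*\mathcal{L}^{\R_1}_{\omega_n}(\Y^{\R_1}_*\Omega)]_{\E_1}=[\Y^{\R_2}_*\Omega]_{\E_1}=\bar p_2([\Omega^{\R_1}]_{\E_2})$. Localizability of $\R_1$ enters exactly here, to supply $\omega_n$ and the weak convergence (cf.\ Theorem~\ref{th:con1}); the result is manifestly independent of the localizing sequence and of the chosen representative (two representatives differ by a $\sim_{\R_1,\E_2}$-equivalence, on which $\bar p_2$ is well defined), and affine as a pointwise limit of affine maps, so $\Phi^{loc}_{1\to2}$ is a well-defined state space map equal to $\bar p_2$. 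Commutativity of the triangle is then immediate, since $\bar p_2\circ(\pi_{\E_2}\circ\Y^{\R_1}_*)([\Omega]_G)=[\Omega]_{\E_1}=(\pi_{\E_1}\circ\Y^{\R_2}_*)([\Omega]_G)$ by construction of $\bar p_2$.

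Invertibility follows at once: if $\R_2$ is also localizable, the same argument with $1$ and $2$ exchanged gives $\Phi^{loc}_{2\to1}=\bar p_1$, the canonical identification in the opposite direction, so $\Phi^{loc}_{2\to1}\circ\Phi^{loc}_{1\to2}=\bar p_1\circ\bar p_2=\mathrm{Id}_{\S(\hit\otimes\his)^{\R_1}_{\E_2}}$. For composability in the four-frame setting $\hi_{\T'}\cong\hio\otimes\hit\otimes\hith\otimes\his$ (with $\R_1,\R_2$ localizable and $\R_3$ an arbitrary principal frame) I would first record the three-frame version of the key lemma — proved by relativizing twice, i.e.\ iterating the same change-of-variables argument — which makes the doubly-framed effect systems $B(\,\cdot\,)^{\R_i}_{\E_j,\E_k}$ symmetric in $\{i,j,k\}$. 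Thus each $\Phi^{loc}_{i\to j}$ is, under the canonical identifications, the identity, and each $\pi_{\E_k}$ is a coarsening of the same quotient of $\S(\hi_{\T'})$. Chasing these identifications, both $\pi_{\E_2}\circ\Phi^{loc}_{1\to3}$ and $\Phi^{loc}_{2\to3}\circ\Phi^{loc}_{1\to2}$ become the canonical coarsening onto $\S(\hi_{\T'})/\!\sim_{(\R_1,\E_2,\E_3)}$, hence they agree.

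The one genuinely delicate point is the interchange of $\lim_n$ with the operator-valued integral in the displayed limit: the setwise convergence $\mu^{\E_1}_{\omega_n}(Y)\to\delta(Y)$ only holds for $Y$ whose boundary is null, so I would instead test against the effects $\Y^{\R_2}\!\bigl(\int f\,d\E_1\otimes F_\S\bigr)$ with $f$ bounded continuous — these generate the same operator space and for them plain weak convergence of $\mu^{\E_1}_{\omega_n}$ applies — and otherwise invoke dominated convergence with the uniform bound by $1$. Beyond that, the remaining work is bookkeeping: keeping the framing indices and the various quotient identifications straight, especially in the composability step. There is no conceptual obstacle once the key lemma is in hand; the content of the theorem is essentially that this lemma forces the localized frame-change maps to be the canonical identifications of framed relative state spaces, and that the limit in Definition~\ref{def:frchm} genuinely computes these identifications.
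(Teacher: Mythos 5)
Your proposal is correct, but it is organized around a different pivot than the paper's proof. The paper establishes well-definedness, commutativity, invertibility and composability by four separate trace computations against the generating effects $\Y^{\R_2}(\E_1(X)\otimes F_\S)$ (resp. their three-frame analogues), each time inserting a localizing sequence, exchanging limit and integral, and in the invertibility and composability steps handling an iterated limit $\lim_m\lim_n$. You instead isolate as a stand-alone operator identity what the paper only uses implicitly inside those traces, namely $\Y^{\R_2}\bigl(\E_1(Y)\otimes F_\S\bigr)=\Y^{\R_1}\bigl(\int_{Y^{-1}}d\E_2(h)\otimes h.F_\S\bigr)$ (the paper's change-of-variables in the commutativity step is exactly this, tested against an arbitrary $\Omega$), and you observe that, by normality of the relativization maps and the symmetric identity, it forces $B(\hit\otimes\his)^{\R_1}_{\E_2}=B(\hio\otimes\his)^{\R_2}_{\E_1}$, hence the two framed operational equivalences on $\T(\hi_\T)$ coincide and the two state spaces in the diagram are literally the same quotient; $\Phi^{loc}_{1\to2}$ is then the canonical identification, and invertibility and composability become corollaries rather than fresh double-limit computations. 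This buys two things the paper does not make explicit: localizability of $\R_1$ enters only to show that the defining limit exists and computes the identification (the equality of the framed-relative state spaces needs no localizability at all), and the inverse/composition statements require no new estimates. What still needs to be written out on your route is the three-frame version of the key lemma and the framing bookkeeping for $\pi_{\E_2}\circ\Phi^{loc}_{1\to3}=\Phi^{loc}_{2\to3}\circ\Phi^{loc}_{1\to2}$ (your ``iterate the change of variables'' sketch does go through, e.g. $\Y^{\R_3}(\E_1(X)\otimes\E_2(Y)\otimes F_\S)=\Y^{\R_1}\bigl(\int_{X^{-1}}\E_2(mY)\otimes d\E_3(m)\otimes m.F_\S\bigr)$, but it should be recorded), and your fix for the portmanteau/boundary issue via effects of the form $\Y^{\R_2}(\int f\,d\E_1\otimes F_\S)$ with $f$ bounded continuous needs the small density argument that such effects generate the same ultraweakly closed span. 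To your credit, that interchange-of-limit subtlety is glossed over in the paper's own proof, so flagging it is a genuine improvement rather than a gap.
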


The projection $\pi_{\E_2}$ in the last claim is necessary to acknowledge the choice of the second frame-orientation observable that is needed for the right-hand side and could be omitted on the left-hand side if only the first and third frames were taken into account as such.

\begin{proof}
    Take $\Omega_1, \Omega_2 \in \S(\hio \otimes \hit \otimes \his)/\hspace{-3pt}\sim_G$ and write  $\Omega^{\R_i} = \Y^{\R_i}_*(\Omega)$ as usual. We need to show that for localizable $\E_1$ and any localizing sequence $\omega_n$ whenever $[\Omega_1^{\R_1}]_{\E_2} = [\Omega_2^{\R_1}]_{\E_2} $, i.e. whenever we have
    \[
        \tr[\Omega_1^{\R_1}(\E_2(X) \otimes F_\S)] =\tr[\Omega_2^{\R_1}(\E_2(X) \otimes F_\S)] \text{ for all } X \in \mathcal{B}(G), F_\S \in \Eff(\his)
    \]
    we will also have $\Phi_{1\to2}^{loc}(\Omega_1^{\R_1}) = \Phi_{1\to 2}^{loc}(\Omega_2^{\R_1})$, i.e.
    \begin{align*}   
        \lim_{n \to \infty}\tr[(\omega_n \otimes \Omega_1^{\R_1})\Y^{\R_2}(\E_1(X) \otimes F_\S)] = \lim_{n \to \infty}\tr[(\omega_n \otimes \Omega_2^{\R_1})\Y^{\R_2}(\E_1(X) \otimes F_\S)]
    \end{align*}
    for all $X \in \mathcal{B}(G)$ and $F_\S \in \Eff(\his)$, so that $\Phi_{1\to 2}^{loc}$ is well-defined on the equivalence classes. We then calculate\footnote{See \cite{busch1997operational} for integration theory of measurable functions with respect to POVMs.}
    \begin{align*}
    \tr[\Phi_{1\to2}^{loc}(\Omega_1^{R_1})\E_1(X) \otimes F_\S] &=
       \lim_{n \to \infty} \tr[(\omega_n \otimes \Omega_1^{\R_1})\Y^{\R_2}(\E_1(X) \otimes F_\S)]\\ &=
       \lim_{n \to \infty} \tr[(\omega_n \otimes \Omega_1^{\R_1})\int_G d\E_2(g) \otimes \E_1(g.X) \otimes g.F_\S]\\ &=
        \tr[\Omega_1^{\R_1} \int_G d\E_2(g) (\lim_{n \to \infty} \mu_{\omega_n}^{\E_1}(g.X))\otimes g.F_\S]\\ &=
        \tr[\Omega_1^{\R_1} \int_G d\E_2(g) \delta_e(g.X) \otimes g.F_\S]\\ &=
        \tr[\Omega_1^{\R_1} \int_G d\E_2(g) \chi_{g.X}(e) \otimes g.F_\S]\\ &= 
        \tr[\Omega_1^{\R_1} \int_G d\E_2(g) \chi_{X}(g^{-1}) \otimes g.F_\S],
    \end{align*}
    where we have used that $\lim_{n \to \infty} \mu_{\omega_n}^{\E_1} = \delta_e$ and $\delta_e(g.X) = \chi_{g.X}(e) = \chi_X(g^{-1})$. Now we see that by hypothesis we can replace $\Omega_1^{\R_1}$ by $\Omega_2^{\R_1}$ and get the same number for any $X \in \mathcal{B}(G)$ and $F_\S \in \Eff(\his)$. Running this calculation backward gives the first claim as the calculation does not depend on the choice of the localizing sequence.
    
    To prove the second claim (diagram commutativity), we need to show that for arbitrary $\Omega \in \S(\hio \otimes \hit \otimes \his)/\hspace{-3pt}\sim_G$, $X \in \mathcal{B}(G)$ and $F_\S \in \Eff(\his)$ we have
    \[
    \tr[\Phi_{1\to2}^{loc}(\Omega^{\R_1})\E_1(X) \otimes F_\S] = \tr[\Omega^{\R_2}\E_1(X) \otimes F_\S].
    \]
    Using what we already know we calculate
    \begin{align*}
        \tr[\Phi_{1\to2}^{loc}(\Omega^{R_1})\E_1(X) \otimes F_\S] &=
        \tr[\Omega \int_G d\E_1(h) \otimes h.\int_Gd\E_2(g)\chi_X(g^{-1}) \otimes g.F_\S]\\ &=
        \tr[\Omega \int_G d\E_1(h) \otimes \int_Gd\E_2(hg)\chi_X(g^{-1}) \otimes hg.F_\S] 
    \end{align*}
    Now we perform the change of variables $l := hg$ in the second integral and change the order of integration to write
    \begin{align*}
        \tr[\Phi_{1\to2}^{loc}(\Omega^{R_1})\E_1(X) \otimes F_\S] &=
        \tr[\Omega \int_G d\E_1(h) \otimes \int_Gd\E_2(l)\chi_X(l^{-1}h) \otimes l.F_\S] \\&=
        \tr[\Omega \int_Gd\E_2(l) \otimes \int_G d\E_1(h)\chi_X(l^{-1}h) \otimes l.F_\S].
    \end{align*}
Since the $h$ variable appears only in the second tensor factor the second integral can be evaluated giving
    \[
        \int_G d\E_1(h)\chi_X(l^{-1}h) = \int_G d\E_1(h)\chi_{l.X}(h) = \E_1(l.X) = l.\E_1(X),
    \]
   and finally, we get
    \begin{align*}
        \tr[\Phi_{1\to2}^{loc}(\Omega^{R_1})\E_1(X) \otimes F_\S] &=
        \tr[\Omega \int_Gd\E_2(l) \otimes l.(\E_1(X) \otimes F_\S)] \\&=
        \tr[\Omega \Y^{\E_2}(\E_1(X) \otimes F_\S)] =
        \tr[\Omega^{\R_2}(\E_1(X) \otimes F_\S)].
    \end{align*}

To show the next claim (invertibility), writing $\eta_m$ for a localizing sequence of $\R_2$, we calculate
\begin{align*}
    &\tr[\Phi^{loc}_{2 \to 1} \circ \Phi^{loc}_{1 \to 2}(\Omega^{\R_1})\E_2(X) \otimes F_\S]=\\
    &= \lim_{m \to \infty}\tr[\Y^{\R_1}_* \circ \mathcal{L}^{\R_2}_{\eta_m} \circ \Phi^{loc}_{1 \to 2}(\Omega^{\R_1})\E_2(X)\otimes F_\S]\\
    &= \lim_{m \to \infty}\tr[\mathcal{L}^{\R_2}_{\eta_m} \circ \Phi^{loc}_{1 \to 2}(\Omega^{\R_1}) \int_G d\E_1(g) \otimes g.(\E_2(X) \otimes F_\S)]\\
    &= \lim_{m \to \infty}\tr[\Phi^{loc}_{1 \to 2}(\Omega^{\R_1}) \int_G d\E_1(g) \mu^{\E_2}_{\eta_m}(g.X) \otimes g.F_\S]\\
    &= \lim_{m \to \infty}\lim_{n \to \infty}\tr[\Y^{\R_2}_* \circ \mathcal{L}^{\R_1}_{\omega_n}(\Omega^{\R_1}) \int_G d\E_1(g) \mu^{\E_2}_{\eta_m}(g.X) \otimes g.F_\S]\\
    &= \lim_{m \to \infty}\lim_{n \to \infty}\tr[\mathcal{L}^{\R_1}_{\omega_n}(\Omega^{\R_1}) \int_G d\E_2(h) \otimes h.(\int_G d\E_1(g) \mu^{\E_2}_{\eta_m}(g.X) \otimes g.F_\S)]\\
    &= \lim_{m \to \infty}\lim_{n \to \infty}\tr[\mathcal{L}^{\R_1}_{\omega_n}(\Omega^{\R_1}) \int_G d\E_2(h) \otimes \int_G d\E_1(hg) \mu^{\E_2}_{\eta_m}(g.X) \otimes hg.F_\S]\\
    &= \lim_{m \to \infty}\lim_{n \to \infty}\tr[\Omega^{\R_1} \int_G d\E_2(h) \otimes \int_G d\mu^{\E_1}_{\omega_n}(hg) \mu^{\E_2}_{\eta_m}(g.X) hg.F_\S]\\
    &= \lim_{m \to \infty}\tr[\Omega^{\R_1} \int_G d\E_2(h) \mu^{\E_2}_{\eta_m}(h^{-1}.X) \otimes  F_\S]\\
    &= \tr[\Omega^{\R_1}\int_G d\E_2(h)\chi_X(h) \otimes F_\S] = \tr[\Omega^{\R_1}\E_2(X) \otimes F_\S],
\end{align*}
where we have used $\lim_{n \to \infty} \mu^{\E_1}_{\omega_n}(gh) = \delta_e(gh) = \delta_{g^{-1}}(h)$ and $\lim_{m \to \infty} \mu^{\E_2}_{\eta_m}(h^{-1}.X) = \chi_X(h)$. From commutativity it follows that the map $\Phi_{1 \to 2}^{loc}: \S(\hit \otimes \his)^{\R_1}_{\E_2} \to \S(\hio \otimes \his)^{\R_2}_{\E_1}$ is well-defined in the sense that taking the limit $n \to \infty$ does not take the outcome out of the codomain. Since $\Y^{\R_2}_*$ and $\mathcal{L}^{\R_1}_{\omega_n}$ are all linear, it is a state space map.

We now turn to the proof of composability of the frame-change maps. Keeping the notation unchanged, we calculate
\begin{align*}
    &\tr[\Phi^{loc}_{2 \to 3} \circ \Phi^{loc}_{1 \to 2}(\Omega^{\R_1})
    \E_1(X) \otimes \E_2(Y) \otimes F_\S]=\\
    &\hspace{2cm}= \lim_{m \to \infty} \tr[\Y^{\R_3}_* \circ \mathcal{L}_{\eta_m}^{\R_2} \circ \Phi^{loc}_{1 \to 2}(\Omega^{\R_1})\E_1(X) \otimes \E_2(Y) \otimes F_\S]\\
    &\hspace{2cm}= \lim_{m \to \infty} \tr[\mathcal{L}_{\eta_m}^{\R_2} \circ \Phi^{loc}_{1 \to 2}(\Omega^{\R_1})\int_G d\E_3(g) \otimes g.(\E_1(X) \otimes \E_2(Y) \otimes F_\S)]\\
    &\hspace{2cm}= \lim_{m \to \infty} \tr[\Phi^{loc}_{1 \to 2}(\Omega^{\R_1})\int_G d\E_3(g)\mu_{\eta_m}^{\E_2}(g.Y) \otimes g.(\E_1(X) \otimes F_\S)]\\
    &\hspace{2cm}= \lim_{m \to \infty} \lim_{n \to \infty} \tr[\Y^{\R_2}_* \circ \mathcal{L}_{\omega_n}^{\R_1}(\Omega^{\R_1})\int_G d\E_3(g)\mu_{\eta_m}^{\E_2}(g.Y) \otimes g.(\E_1(X) \otimes F_\S)]\\
    &\hspace{2cm}= \lim_{m \to \infty} \lim_{n \to \infty} \tr[\mathcal{L}_{\omega_n}^{\R_1}(\Omega^{\R_1})\int_G d\E_2(h) \otimes h.(\int_G d\E_3(g)\mu_{\eta_m}^{\E_2}(g.Y) \otimes g.(\E_1(X) \otimes F_\S))]\\
    &\hspace{2cm}= \lim_{m \to \infty} \lim_{n \to \infty} \tr[\mathcal{L}_{\omega_n}^{\R_1}(\Omega^{\R_1})\int_G d\E_2(h) \otimes \int_G d\E_3(hg)\mu_{\eta_m}^{\E_2}(g.Y) \otimes hg.(\E_1(X) \otimes F_\S)].
\end{align*}
If we now change the integration variable in the second integral for $g':=hg$ and change the order of integration we can write the operator above as
\begin{align*}
    &\int_G d\E_2(h) \otimes \int_G d\E_3(hg)\mu_{\eta_m}^{\E_2}(g.Y) \otimes hg.(\E_1(X) \otimes F_\S) \\&= \int_G d\E_3(g') \otimes \int_G d\E_2(h)\mu_{\eta_m}^{\E_2}(h^{-1}g'.Y) \otimes g'.(\E_1(X) \otimes F_\S).
\end{align*}
Exchanging the order of limits and taking $m \to \infty$ the integral in the second tensor factor can then be evaluated giving
\[
    \lim_{m \to \infty}\int_G d\E_2(h)\mu_{\eta_m}^{\E_2}(h^{-1}g'.Y) = \int_G d\E_2(h)\chi_{g'.Y}(h) = \E_2(g'.Y) = g'.\E_2(Y).
\]
We then get
\begin{align*}
    &\tr[\Phi^{loc}_{2 \to 3} \circ \Phi^{loc}_{1 \to 2}(\Omega^{\R_1})
    \E_1(X) \otimes \E_2(Y) \otimes F_\S]=\\
    &= \lim_{m \to \infty} \tr[\mathcal{L}_{\omega_n}^{\R_1}(\Omega^{\R_1})\int_G d\E_3(g') \otimes g'.(\E_1(X) \otimes \E_2(Y) \otimes F_\S)]\\
    &= \lim_{m \to \infty} \tr[\Y^{\R_3}_* \circ \mathcal{L}_{\omega_n}^{\R_1}(\Omega^{\R_1})\E_1(X) \otimes \E_2(Y) \otimes F_\S]\\
    &= \tr[\Phi_{1\to 3}^{loc}(\Omega^{\R_1})\E_1(X) \otimes \E_2(Y) \otimes F_\S].
\end{align*}
Since $X,Y \in \mathcal{B}(G)$ and $F_\S \in \Eff(\his)$ were all arbitrary this finishes the proof.
\end{proof}

We have thus shown that the construction $\Phi_{1 \to 2}^{loc} = \lim_{n \to \infty} \pi_{\E_1} \circ \Y^{\R_2}_* \circ \mathcal{L}_{\omega_n}^{\R_1}$ provides an invertible and composable frame-change map for localizable principal frames. Due to the general notion of operational state spaces and state space maps that we use, this is achieved without requiring the maps to be unitary quantum channels.

\newpage
\section{Comparison with other work}
The approach to frame-change maps as presented in \cite{de2020quantum} is briefly summarized in the Appendix \ref{app:ACT}. Since the procedure is rigorously defined only in the finite $G$ case, we now fix such a group. We perform our comparison in the context of ideal frames on $G$ and adopt the convention of the `left-right' $G$-action on $L^{2}(G)$ given by $U(g)\ket{h} = \ket{hg^{-1}}$. Then $G$ acts on $B(L^2 (G))$ as $g.A = U(g)AU(g)^*$, and dually on the states by $g.\Omega = U(g)^*\Omega U(g)$. The unique (up to the unitary equivalence) covariant PVM for such an action is given by
\[
P(h) = \dyad{h^{-1}}
\]
Indeed, we have
\[
g.P(h) = U(g)\dyad{h^{-1}}U(g)^* = \dyad{h^{-1}g^{-1}} = \dyad{(gh)^{-1}} = P(g.h).
\]

Consider now a collection of $N$ such frames and the total Hilbert space $\hi_\T = \bigotimes_{i=1}^N L^2 (G)_i$ with the factors ordered such that $n=1$ system is the first reference and $n=2$ the second. We will write $\S_i$ for the system modeled on $\hi_{n} := \bigotimes_{i \neq n} L^2 (G)_i$. In this case, the lifting map $\mathcal{L}_{\ket{e}}^{\R_1}$ provides an inverse to $\Y^{\R_1}_*$ since we have 
\begin{equation}
    \Y^{\R_1}_*(\dyad{e} \otimes \Omega^{\R_1}) = \sum_{g \in G} \tr[P(g)\dyad{e}] \otimes g.\Omega^{\R_1} = \sum_{g \in G} \delta(e,g) g.\Omega^{\R_1} = \Omega^{\R_1}
\end{equation}
Thus an arbitrary state $\Omega^{\R_1} \in \S(\hi_{1})$ can be understood as an $\R_1$-relative state without any limiting procedure, just like in \cite{de2020quantum}. The map $\Y^{\R_2}_* \circ \mathcal{L}_{\ket{e}}^{\R_1}$ acts on \emph{product} states as follows
\[
    (\omega_2 \otimes \dots \otimes \omega_N)^{\R_1} \mapsto \dyad{e}_1 \otimes \omega_2 \otimes \dots \otimes \omega_N \mapsto \sum_{g \in G} \tr[P(g)\omega_2] g.(\dyad{e}_1 \otimes \omega_3 \otimes \dots \otimes \omega_N)
\]
Now taking $\omega_i=\dyad{h_i}$ with $h_i \in G$ we get (on a representative the operational $\E_2$-equivalence class)
\begin{align*}
    &\Phi_{1 \to 2}^{loc}\dyad{h_2} \otimes \dots \otimes \dyad{h_N} \\
    &= \sum_{g \in G}\ip{h_2}{P(g)h_2} g.\left(\dyad{e}\otimes \dyad{h_3}\otimes \dots \otimes \dyad{h_N}\right) \\
    &= \sum_{g \in G}\delta(g^{-1},h_2)U^*_{\S_2}(g)\left(\dyad{e}\otimes \dyad{h_3}\otimes \dots \otimes \dyad{h_N}\right)U_{\S_2}(g) \\
     &= \dyad{h_2^{-1}}\otimes\dyad{h_3 h_2^{-1}}\otimes\dots\otimes\dyad{h_N h_2^{-1}},
\end{align*}
where we have used that $U^*(h^{-1})\ket{h'} = U(h)\ket{h'} = \ket{h'h^{-1}}$. Written in the ket notation we then have
\begin{equation}
    \Phi_{1 \to 2}^{loc}: \ket{h_2} \otimes \ket{h_3} \dots \otimes \ket{h_N} \mapsto \ket{h^{-1}_2} \otimes \ket{h_3 h_2^{-1}} \otimes\dots \otimes \ket{h_N h_2^{-1}},
\end{equation}

just like in \cite{de2020quantum} (see eq. (24) on pg.7 and also (\ref{ACTomchannel1}) in Appendix \ref{app:ACT}). There the map above is extended by linearity, i.e. assuming \emph{coherence}, to $U_{1 \to 2}:\hi_{2} \to \hi_{1}$ which reads (eq. (25) on pg.7, and (\ref{ACTomchannel2}) in Appendix \ref{app:ACT})
\begin{equation}
U_{1 \to 2} = \sum_{g \in G}\dyad{g^{-1}}{g}_{2 \to 1} \otimes U_{\S}(g),
\end{equation}
where $\hi_{\S}:=\bigotimes_{i \neq 1,2} L^2(G)_i$ and $\hspace{-2pt}\dyad{g^{-1}}{g}_{2 \to 1}: L^2(G)_2 \to L^2(G)_1$. Let us now
check what \cite{de2020quantum} gives when extended to mixed states. Consider first $\omega_2 \in L^2 (G)_2$ and $\rho \in \S(\his)$. We then get
\begin{equation}
U_{1 \to 2}(\omega \otimes \rho)U^*_{1 \to 2} = \sum_{g,h} \bra{g} \omega \ket{h} \left(\dyad{g^{-1}}{h^{-1}} \otimes U_\S (g) \rho U^*_\S (h)\right).
\end{equation}
Our map on the other hand reads
\begin{equation}
    \Phi_{1 \to 2}^{loc} (\omega \otimes \rho) = \sum_{g} \bra{g} \omega \ket{g} \left( \dyad{g^{-1}} \otimes U_\S (g) \rho U^*_\S (g)\right).
\end{equation}
It is now clear that the difference stems from a different approach to implementing symmetry -- for us the group $G$ acts on the operator algebras $B(\hi)$ and state spaces $\S(\hi)$ by duality, while in \cite{de2020quantum}, and also in \cite{de2021perspective} the primary object is the Hilbert space. Perhaps surprisingly it turns out that the two states resulting from these procedures are \emph{$\E_1$-equivalent}. Indeed, a simple calculation gives the following.

\begin{proposition}
    Consider a finite group $G$, a pair of ideal frames (for $G$) $\R_1$ and $\R_2$ with the left-right $G$-action and a system $\S$. We then have
    \[
        \Phi_{1 \to 2}^{loc} = \pi_{\E_1} \circ U_{1 \to 2}(\_)U^*_{1 \to 2}: \S(\hio \otimes \his)_{\E_2}^{\R_1} \to  \S(\hit \otimes \his)^{\R_2}_{\E_1}.
    \]
\end{proposition}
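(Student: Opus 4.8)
The plan is to verify the claimed identity of state-space maps by checking it on a spanning set of the domain, which is legitimate because $G$ is finite. On the left-hand side, since $\R_1$ is ideal (hence sharp and principal), the localizing sequence in Definition~\ref{def:frchm} may be taken to be the constant sequence $\dyad{e}$, so no limit is needed and $\Phi_{1\to 2}^{loc}=\pi_{\E_1}\circ\Y^{\R_2}_*\circ\mathcal{L}_{\ket{e}}^{\R_1}$ is an honest composition of linear maps on trace-class operators; on the right-hand side, $\Omega\mapsto U_{1\to 2}\,\Omega\,U_{1\to 2}^*$ is linear and is followed by the linear quotient $\pi_{\E_1}$. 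Both maps descend to the operational equivalences at play — $\sim_{\E_2}$ on the domain and $\sim_{\E_1}$ on the codomain — the former being guaranteed for $\Phi_{1\to 2}^{loc}$ by Theorem~\ref{thm:frmchm} and automatic for the other map once equality of representatives is shown. Since everything is finite-dimensional, the pure product density operators $\dyad{v}\otimes\sigma$ (with $v\in L^2(G)_2$, $\sigma\in\S(\his)$) linearly span the trace-class operators on $L^2(G)_2\otimes\his$, so it suffices to prove that the two maps send a generic product state $\omega\otimes\rho$ to the same element of the codomain operational state space.

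For such an input, the two representatives have already been computed in the displays immediately preceding the proposition, namely $U_{1\to 2}(\omega\otimes\rho)U_{1\to 2}^*=\sum_{g,h\in G}\bra{g}\omega\ket{h}\,\dyad{g^{-1}}{h^{-1}}\otimes U_\S(g)\rho\,U_\S(h)^*$ and $\Phi_{1\to 2}^{loc}(\omega\otimes\rho)=\sum_{g\in G}\bra{g}\omega\ket{g}\,\dyad{g^{-1}}\otimes U_\S(g)\rho\,U_\S(g)^*$. To see that these two operators are $\E_1$-equivalent I would pair each of them with an arbitrary framed effect $\E_1(X)\otimes F_\S$ (with $X\in\mathcal{B}(G)$, $F_\S\in\Eff(\his)$), using that the ideal-frame PVM is $\E_1(X)=\sum_{k\in X}\dyad{k^{-1}}$, which is diagonal in the orthonormal basis $\{\ket{k^{-1}}\}_{k\in G}$ of $L^2(G)_1$. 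From $\ip{g^{-1}}{k^{-1}}=\delta_{g,k}$ one gets $\bra{h^{-1}}\E_1(X)\ket{g^{-1}}=\delta_{g,h}\,\chi_X(g)$, so in $\tr[U_{1\to 2}(\omega\otimes\rho)U_{1\to 2}^*\,(\E_1(X)\otimes F_\S)]$ every off-diagonal term $g\neq h$ drops out and the sum collapses to $\sum_{g\in X}\bra{g}\omega\ket{g}\,\tr[U_\S(g)\rho\,U_\S(g)^*F_\S]$; the expectation value $\tr[\Phi_{1\to 2}^{loc}(\omega\otimes\rho)\,(\E_1(X)\otimes F_\S)]$ produces the very same sum directly, with no cross terms to discard.

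Since the framed effects $\E_1(X)\otimes F_\S$ separate the points of the codomain state space by construction, the two outputs coincide there; by linearity and the spanning argument this propagates to all of $\T(L^2(G)_2\otimes\his)$ and hence to the domain operational state space, giving the asserted equality of maps. I expect the only genuine subtlety to be conceptual rather than computational: the unitary conjugation $U_{1\to 2}(\cdot)U_{1\to 2}^*$ is \emph{not} by itself equal to $\Phi_{1\to 2}^{loc}$, since it carries the off-diagonal coherences $\dyad{g^{-1}}{h^{-1}}$ in the $\R_1$-register that $\Phi_{1\to 2}^{loc}$ never produces; what the computation above establishes is precisely that the projection $\pi_{\E_1}$ erases exactly those coherences, and it does so legitimately because $\E_1$ is sharp and diagonal in the $\{\ket{k^{-1}}\}$ basis and therefore cannot detect them. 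In the finite ideal setting there is no analytic difficulty whatsoever; the only care required is in tracking the $g\mapsto g^{-1}$ bookkeeping of the left-right action and in keeping straight which frame's orientation observable controls which operational equivalence.
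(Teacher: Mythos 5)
Your proposal is correct and follows essentially the same route as the paper: both arguments reduce to pairing the two candidate outputs with the $\E_1$-framed effects (diagonal in the $\{\ket{k^{-1}}\}$ basis, since $\E_1$ is the ideal PVM) and observing that the off-diagonal coherences produced by $U_{1\to 2}(\_)U^*_{1\to 2}$ give no contribution, so the expectation values coincide. The only cosmetic difference is that the paper evaluates directly on an arbitrary relative state $\Omega^{\R_1}$ against singleton effects $\dyad{l^{-1}}\otimes F_\S$, whereas you reduce to product states $\omega\otimes\rho$ by linearity and a spanning argument, which is equally valid in the finite-dimensional setting.
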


\begin{proof}
    Since we are in the countable $G$ case with ideal frames it suffices to evaluate the states on the effect of the form $\dyad{l^{-1}}\otimes F_\S$, with $l \in G$ and $F_\S \in \Eff(\his)$ arbitrary. We calculate
\begin{align*}
    &\tr[U_{1 \to 2}\Omega^{\R_1}U^*_{1 \to 2}\dyad{l^{-1}}\otimes F_\S] \\
    &=\tr[\sum_{g,h}\dyad{g^{-1}}{g} \otimes U_\S(g)\Omega^{\R_1}\dyad{h}{h^{-1}} \otimes U^*_\S(h)\dyad{l^{-1}} \otimes F_\S] \\
    &=\tr[\sum_g\dyad{g^{-1}}{g} \otimes U_\S(g)\Omega^{\R_1}\dyad{l}{l^{-1}} \otimes U^*_\S(g)F_\S] \\
    &=\tr[\sum_g\Omega^{\R_1}\dyad{l}{l^{-1}} \otimes U^*_\S(g)F_\S\dyad{g^{-1}}{g} \otimes U_\S(g)] \\
    &=\tr[\Omega^{\R_1}\dyad{l} \otimes U^*_\S(l)F_\S U_\S(l)] =\tr[\Omega^{\R_1}\dyad{l} \otimes l^{-1}.F_\S],
\end{align*}
while the operational frame-change maps gives
\begin{align*}
    &\tr[\Phi^{loc}_{1 \to 2}(\Omega^{\R_1})\dyad{l^{-1}}\otimes F_\S] \\
    &=\tr[\dyad{e} \otimes \Omega^{\R_1} \sum_g \dyad{g^{-1}} \otimes g. \left(\dyad{l^{-1}} \otimes F_\S\right)]\\
     &=\tr[\dyad{e} \otimes \Omega^{\R_1} \sum_g \dyad{g^{-1}} \otimes \dyad{l^{-1}g^{-1}} \otimes g.F_\S]\\
     &=\tr[\Omega^{\R_1} \sum_g \dyad{g^{-1}} \delta(l,g^{-1}) \otimes g.F_\S]=\tr[\Omega^{\R_1} \dyad{l} \otimes l^{-1}.F_\S]
\end{align*}
\end{proof}

We have thus shown that in this simple setup when all the procedures of changing frames are defined, they agree up to operational equivalence.

The ``relational Schr{\"o}dinger picture'' frame change map of the perspective-neutral framework can be written (setting $g_i=g_j=e$) in a very similar form (Thm. 4 on pg. 40 of \cite{de2021perspective})
    \begin{equation}
    V_{1\to 2} = \int_G \dyad{\phi(g)}{\psi(g)} \otimes U_\S(g)d\mu(g).
    \end{equation}
where $\{\ket{\phi(g)}\}_{g \in G} \subset \hi_1$ and $\{\ket{\psi(g)}\}_{g \in G} \subset \hi_2$ are systems of coherent states of the first and second frame, respectively, understood in the sense of \eqref{ex:css}, and $\dyad{\phi(g)}{\psi(g)}$ as a map from $\hi_2$ to $\hi_1$. Within the Hilbert space framework of quantum mechanics, the seed state $\ket{\phi(e)}$ admits the interpretation as `localized at identity' only in the ideal case, and thus with the countable group. Then, with the second coherent state system arbitrary, mimicking the calculation above gives a slightly more general agreement up to operational equivalence.

\begin{proposition}
    Consider a finite group $G$, an ideal frame (for $G$) $\R_1$, and a coherent system frame $\R_2$. When we adopt the convention of left $G$-action on the states rather than operators, i.e. $g.\omega = U(g)\omega U^*(g)$, we get
    \[
        \Phi_{1 \to 2}^{loc} = \pi_{\E_1} \circ V_{1 \to 2}(\_)V^*_{1 \to 2}: \S(\hio \otimes \his)_{\E_2}^{\R_1} \to  \S(\hit \otimes \his)^{\R_2}_{\E_1}.
    \]
\end{proposition}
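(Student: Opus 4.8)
The plan is to reduce this proposition to the previous one by choosing a convenient representative for the operational equivalence class and computing matrix elements against the generating $\R_1$-framed effects $\dyad{l^{-1}} \otimes F_\S$ (with $l \in G$, $F_\S \in \Eff(\his)$), exactly as in the proof of the previous proposition, only now replacing the ideal second frame $\R_2$ by a coherent system frame. Concretely, I would first recall that since $\R_1$ is ideal and $G$ finite, the lifting map $\mathcal{L}^{\R_1}_{\ket{e}}$ is an exact inverse to $\Y^{\R_1}_*$ (so no limiting procedure is needed for the first frame), and that for the second frame the relativization map is $\Y^{\R_2}(A_\S) = \int_G d\E_2^\phi(g) \otimes g.A_\S$ with $\E_2^\phi(X) = \frac{1}{\lambda}\int_X \dyad{\phi(g')}d\mu(g')$ the coherent state POVM. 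Then $\Phi^{loc}_{1\to 2} = \pi_{\E_1} \circ \Y^{\R_2}_* \circ \mathcal{L}^{\R_1}_{\ket{e}}$ on a representative, and I would compute $\tr[\Phi^{loc}_{1\to 2}(\Omega^{\R_1})\, \dyad{l^{-1}} \otimes F_\S]$ by inserting $\mathcal{L}^{\R_1}_{\ket{e}}(\Omega^{\R_1}) = \dyad{e}\otimes\Omega^{\R_1}$ and the explicit form of $\Y^{\R_2}$, summing the $P_1$-part against $\dyad{e}$.

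The parallel computation on the perspective-neutral side is to expand $V_{1\to 2} = \int_G \dyad{\phi(g)}{\psi(g)} \otimes U_\S(g)\,d\mu(g)$ acting on $\Omega^{\R_1}$ (viewed as a state on $\hio \otimes \his$, since $\R_1$ is the first frame here) and evaluate $\tr[V_{1\to 2}\Omega^{\R_1}V^*_{1\to 2}\, \dyad{l^{-1}} \otimes F_\S]$, where now $\dyad{l^{-1}}$ is an effect of the ideal frame $\R_1$ so we are projecting with $\pi_{\E_1}$. The point is that both calculations collapse to the same expression in $\Omega^{\R_1}$, $l$, and $F_\S$ — morally something like $\tr[\Omega^{\R_1}\, (\text{something}) \otimes l^{-1}.F_\S]$ weighted by the coherent-state overlaps $\ip{\psi(g)}{l^{-1}}$ type factors, which match because the extra coherent-state structure of $\R_2$ only enters through the second tensor factor and is integrated/summed symmetrically on both sides. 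I would be careful about the convention change flagged in the statement (left $G$-action on states, $g.\omega = U(g)\omega U^*(g)$, rather than on operators): this flips the placement of $U_\S(g)$ versus $U_\S(g)^*$, and I need to make sure the orientation of $\dyad{\phi(g)}{\psi(g)}$ as a map $\hi_2 \to \hi_1$ is tracked consistently so that the relabelling $g \mapsto g^{-1}$ or a change of integration variable brings the two sides into coincidence.

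The main obstacle is bookkeeping rather than conceptual: keeping the coherent-state seed vectors, the two different $G$-action conventions, and the direction of the various partial isometries $\dyad{\phi(g)}{\psi(g)}$ all consistent so that the two matrix-element computations really land on literally the same scalar. In particular I expect the delicate point to be that on the $\Phi^{loc}$ side the $\R_1$-part is a sharp delta (from the ideal frame $P(g) = \dyad{g^{-1}}$), collapsing an integral/sum to a single term, whereas on the $V_{1\to 2}$ side the $\R_1$-part appears through the sandwich $V_{1\to 2}(\_)V^*_{1\to 2}$; one must check that after inserting the effect $\dyad{l^{-1}}\otimes F_\S$ and using $\ip{l^{-1}}{\phi(g)}$-type factors together with the resolution of identity \eqref{eq:pgco} for the $\R_2$ coherent states, the $\R_2$ degrees of freedom are traced out cleanly leaving exactly $\tr[\Omega^{\R_1}\dyad{l}\otimes l^{-1}.F_\S]$. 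Once both sides are shown to equal this common value for all $l \in G$ and all $F_\S \in \Eff(\his)$, since such effects generate $B(\hit\otimes\his)^{\R_2}_{\E_1}$ in the ideal-frame countable setting, the two states are $\E_1$-equivalent and the claimed equality of maps follows.
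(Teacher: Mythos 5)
Your plan is essentially the paper's own route: the paper gives no separate computation here, it simply says to mimic the ideal-frame calculation of the preceding proposition, which is exactly the matrix-element comparison against the generating effects $\E_1(\cdot)\otimes F_\S$ that you set up (with $\mathcal{L}^{\R_1}_{\ket{e}}$ exact since $\R_1$ is ideal and $G$ finite, and the convention change tracked by bookkeeping). One small correction: since $\R_2$ is only a coherent system frame, the common scalar both sides collapse to is not the ideal-case value $\tr[\Omega^{\R_1}\dyad{l}\otimes l^{-1}.F_\S]$ but its coherent analogue $\tr[\Omega^{\R_1}(\E_2^{\psi}(l^{-1})\otimes l^{-1}.F_\S)]$ (the sharp delta comes only from the ideal $\R_1$ factor, while the $\hi_2$ factor carries $\dyad{\psi(\cdot)}$ up to the normalization $\lambda$), and the overlaps that collapse the sums are $\ip{\phi(g)}{l^{-1}}$ with the frame-$1$ basis, not $\ip{\psi(g)}{l^{-1}}$.
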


The operational quantum reference frame transformations as developed in this work are thus in operational agreement with other constructions presented in the literature whenever the latter can be stated inside the standard Hilbert space setting and respecting the operational interpretation of the frame-orientation observables.

In this light, the claims about the `relativity of superposition and entanglement' that can be found eg. in \cite{de2020quantum} are shown to have little operational meaning from the perspective of the present work. Indeed, consider an ideal setup with the second frame $\R_2$ prepared in a superposed state, e.g. the input state of the frame change~is
\begin{equation}
    \ket{\psi}=\left((\alpha \ket{h_1}_{\R_2}+\beta \ket{h_2}_{\R_2}) \otimes \ket{g}_\S\right)^{\ket{e}_{\R_1}}.
\end{equation}
At an operational level, there is no difference between the state transformed according to the coherent frame-change map of \cite{de2020quantum}
\begin{equation}
     \left( \alpha \ket{h_1^{-1}}_{\R_1} \otimes \ket{gh_1^{-1}}_\S  +\beta \ket{h_2^{-1}}_{\R_1} \otimes \ket{gh_2^{-1}}_\S \right)^{\ket{e}_{\R_2}},
\end{equation}
and the output of the operational frame-change map \eqref{def:frchm} evaluated on a representative of the $\E_2$-equivalence class of $\dyad{\psi}$, which reads
\[
    \left(|\alpha|^2 \dyad{h_1^{-1}}_{\R_1} \otimes \dyad{gh_1^{-1}}_\S 
 +|\beta|^2 \dyad{h_2^{-1}}_{\R_1} \otimes \dyad{gh_2^{-1}}_\S + \right)^{\dyad{e}_{\R_2}},
\]
as they occupy the same $\E_1$-equivalence class in $\S(\hi_1 \otimes \his)^{\R_2}$. The operationally transformed state is the L\"{u}ders mixture corresponding to the one transformed coherently, and is \emph{not} entangled.

Moreover, the input state being superposed is also an operationally questionable claim, not only because it is basis-dependent, but also because we have
\[
    \left(\alpha \ket{h_1}_{\R_2}+\beta \ket{h_2}_{\R_2}\right)  \left(\alpha\bra{h_1}_{\R_2}+\beta \bra{h_2}_{\R_2}\right) \hspace{5pt} \sim_{\E_2} \hspace{5pt} |\alpha|^2\dyad{h_1}_{\R_2} + |\beta|^2\dyad{h_2}_{\R_2},
\]
and thus both states are equally good at representing the probabilistic uncertainty of the second frame's orientation, one being superposed in the sense of \cite{de2020quantum}, and the other one being mixed.

\newpage
\section{Triangular reconstruction of relative states}\label{sec:extfcm}

We finish with some preliminary ideas on how a system $\S$ is described relative to two different quantum reference frames, $\R_1$ and $\R_2$ that are treated \emph{externaly}. We find a simple procedure that mirrors the sort of frame changes often encountered in classical relativistic physics, and allows for the examination of a different range of questions than those available in the internal QRF setting. The full analysis of such a setup will be carried out elsewhere.

The procedure presented here is concerned with relative states and is in a way closest to the classical intuitions: to describe a system from a different perspective, we need to know how the two frames are related. For example in special relativity, a change of a reference frame is specified by the Lorentz transformation that relates the initial and final frames.

Consider then a pair of frames $\E_i: \B(G) \to \Eff(\hi)$. Given also a relative state $\Omega \in \S(\hi_1 \otimes \hi_2)_G$, the quantum analog of the relation of the frames is given by the probability measure corresponding to the relative orientation observable \eqref{def:relorobs}
\[
\E_2 * \E_1 = \Y^{\R_1} \circ \E_2.
\]
We then seek a procedure to determine a relative state $\rho^{\R_2} \in \S(\his)^{\R_2}$ given $\rho^{\R_1} \in \S(\his)^{\R_1}$ \emph{and} $\Omega_G \in \S(\hi_1 \otimes \hi_2)_G$. Let us stress here, that the states $\Omega$ and $\rho^{\R_1}$ are here treated \emph{independently} -- we do not assume any relation between them. By analogy with the classical world, if the frames would be `the same' and the second frame is 'rotated' by $h \in G$ with respect to the first one, the description it would give to the system should be the one from the first perspective but 'rotated' accordingly. In general, i.e. when the relative orientation observable is not localized, we should provide a probabilistic version of this 'rotation', and thus we may want to try

\begin{equation}
\rho^{\R_2} := \int_G d\mu_{\Omega}^{\E_2 * \E_1}(h) h.\rho^{\R_1}.
\end{equation}

Another plausible thing to do, given $\rho^{\R_1} \in \S(\his)^{\R_1}$ and $\Omega^{\R_1} = \Y^{\R_1}_*(\Omega)  \in \S(\hi_{\R_2})^{\R_1}$, would be to simply assign to $\R_2$ the corresponding product relative state

\begin{equation}
\tilde{\rho}^{\R_2} := \Y^{\R_2}_*(\rho^{\R_1} \otimes \Omega^{\R_1}) =  \int_G d\mu^{\E_2}_{\Omega^{\R_1}}(g) g.\rho^{\R_1}
\end{equation}

We readily verify that the two lines of thought bring us to the same formula since

\begin{equation}
\tr[\Omega \E_2 * \E_1(X)] =
\tr[\Omega \Y^{\E_1} \circ \E_2(X)] =
\tr[\Omega^{\R_1} \E_2(X)] =
\mu^{\E_2}_{\Omega^{\R_1}} (X).
\end{equation}

We are now also sure that $\rho^{\R_2}_\S$ can indeed be understood as a state relative to $\R_2$, which was not immediately clear from the definition.
\myemptypage

\chapter{Discussion}\label{ch:discuss}
In this thesis, we formulated an operational approach to quantum reference frames, a fully operational approach to relational quantum kinematics. Building on the results presented elsewhere, we provide a first complete account of this framework by spelling out the physical principles that it is based. Presented as such, it is clear how this formalism combines intuitions from Special Relativity, Quantum Measurement Theory, and Gauge Theory, and can be made explicitly relativistic by the right choice of the underlying symmetry group. The concept of the relational description, crucial for our understanding of the whole setup, has not been introduced before. Also the first steps in the direction of justifying the $\Y$ construction as providing access to \emph{all} relational effects are nowhere to be found except here. Supplemented by some plausible conjectures (see below), we see the framework as an almost complete conceptually and mathematically clear theory of operational relational kinematics of localizable quantum reference frames. The proofs of theorems needed to fill the remaining gaps seem to be within reach, as we outline below \eqref{sec:furtherpersp}.

Crucially for the consistency of the framework with the usual, non-relational approach to quantum physics, the latter is recovered to arbitrary precision under the localization of the reference system. In the relativistic case, the framework then boils down to the usual description of quantum systems as the representations of the Poincar{\'e} group with the action corresponding to Poincar{\'e} transformations of the external (classical) frame of reference. In this light, we understand the presented approach as pointing towards an alternative to Quantum Field Theory operational account of relativistic quantum physics, with \emph{no} background structure. Indeed, the notion of space-time, even in its weakest possible form as a set of events, is absent altogether from formalism. Nevertheless, its \emph{operational} features - namely the observables of relative position, time differences, relative velocities, angles, and so on -- can all be modeled; formalism is designed to make sense of these quantities in the quantum realm. One can imagine the space-time entering only as a \emph{useful abstraction} in the case when the relational spatiotemporal observables of relative orientations give rise to localized probability distributions, and thus \emph{definite} spatiotemporal relations between (quantum) systems. We outline how such an approach to relational space-time emergence could be realized within the framework below; we hope to deepen our understanding of these issues in the future by providing specific realizations of the ideas we are alluding to. Needless to say, a background-less approach to relativistic quantum physics could shed some light on the problems of incorporating the description of gravity into the picture.

\newpage
\section{Summary}

In the course of the presentation, we went through the following stages. After some general technical remarks about the operator-algebraic setup for quantum mechanics under symmetries, topologies that we use, and the notion of a state space relevant to the purpose, we introduced the main tool that we use for imposing operationality -- the \emph{operational equivalence} -- that allows quotienting state spaces in such a way that the equivalence classes represent operationally distinguishable states. We also briefly discussed the notion of \emph{covariance} and provided some motivation and the definition of \emph{localizability} of POVMs.

After such preliminaries, we moved to present the operational approach to relational quantum kinematics. Beginning by introducing the \emph{invariant descriptions} that take care of the gauge-invariance in our setup, we discussed how this approach relates to others. We then provided the definition of a quantum reference frame as a group representation equipped with a covariant POVM, thus encompassing all the definitions previously proposed. We also distinguished important classes of frames, generalizing the existing classification. Further, the operational equivalence was invoked again in order to acknowledge the choice of the frame-orientation observable by the notion of \emph{framing}. Combined with the gauge-invariance it gave rise to the \emph{relational descriptions} that make operational justice to all our symmetry requirements. Finally, the \emph{$\Y$ construction} was invoked as a way to access the observables satisfying such requirements, generating the \emph{relative descriptions}. An argument was provided in the context of a localizable frame on a finite group that the $\Y$ construction in fact provides \emph{all} such observables. At the end of this section, the double meaning of the invariant algebras was discussed -- they were interpreted as \emph{perspective-independent} descriptions of the system in question or as the arena for a \emph{global description} where the frames can be understood as \emph{internal}.

We then moved on to discuss two different ways in which the gauge-invariant descriptions can be conditioned upon the choice of the state of the reference system. To this end, we introduced the restriction maps $\Gamma$ and first applied them to the relative operators, defining \emph{conditioned relative descriptions}. We then presented a series of results confirming that upon localization of the reference system, the usual non-relational quantum kinematics is recovered, also making direct contact with the Quantum Measurement Theory. Further, we applied the restriction maps to the global description, which upon relativization gave rise to the \emph{relativized restricted descriptions}. In this context, the \emph{lifting maps} were also introduced as a way to lift relative states to global ones by `attaching' a chosen state of the reference.

With all the kinematics in hand, we turned to present an approach to frame-change maps aligned with the presented operational setup. We began by defining yet another class of \emph{framed relative descriptions}, corresponding to choosing a pair of internal frames, one being treated as a reference, and the other acknowledged by introducing the relevant framing. Finally, a definition of frame-change maps as a mapping between such framed relative descriptions was given, and proof of consistency, invertibility, and composability was provided, establishing the proposed construction as a viable notion of a frame-change map. Closing the analysis, we compared the proposed construction with others present in the literature in the context of a finite group and ideal frames discovering agreement up to operational equivalence. As a final comment, we proposed an alternative -- \emph{external} -- way of performing frame changes.

\newpage
\section{Further perspectives}\label{sec:furtherpersp}

We end this thesis by briefly presenting some research directions aiming at a better understanding of the operational QRF framework, suggesting its novel applications and further generalizations.\footnote{The reader interested in contributing to this research is most welcome to reach out to the author at glowacki@cft.edu.pl or janmarcinglowacki@gmail.com.}

\subsection{Missing pieces}

Perhaps the biggest deficiency of the framework as developed so far is the lack of operational justification of the $\Y$ construction in the case of continuous groups and its restriction in the continuous $G$ case to the \emph{principal} frames. To this end consider a non-principal localizable frame $\E_\R: \mathcal{B}(G/H) \to B(\hir)$ with $H \subseteq G$ a non-trivial (closed) subgroup. We conjecture that in such a case the relational, i.e. framed and invariant, effects come from relativizing the $H$-invariant ones, i.e. that the $\Y$ construction extends to
\begin{equation}
    \Y^\R: B(\his)^H \ni A_\S \mapsto \int_{G/H} d\E_\R(gH) \otimes gH.A_\S \in B(\hirs)^G,
\end{equation}
and that the image of this map on the $H$-invariant effects exhausts the relational ones, i.e. we have
\begin{equation}
    \Eff(\his)^\R := \Y^\R(\Eff(\his)^H) \cong \Eff(\hirs)^G_\R.
\end{equation}
This strengthens the conjecture \eqref{conj:rel=rel} stating the relational effects are precisely relative ones in the case of localizable \emph{principal} frames. We believe that convergence and the relevant properties of the homogenous $\Y$ formula above should be possible to prove by adapting the original proof as presented in \cite{lov1}. Regarding the \eqref{conj:rel=rel} conjecture and its generalization above, we propose to investigate the following approach to the integration theory of operator-valued functions with respect to (positive) operator-valued measures. Consider a POVM $ \E_\R: \mathcal{F}(\Sigma) \to \Eff(\hir)$ and a function $f: \Sigma \to B(\his)$ such that $\Sigma \ni x \mapsto \tr[\rho f_\S(x)] \in \mathbb{C}$ is \emph{measurable} for any state $\rho \in \S(\his)$. The operator $\int_\Sigma d  \E_\R(x) \otimes f(x) \in B(\hirs)$ is then defined as a continuous functional on $\T(\hirs)$ given by the linear and continuous extension of the map
\begin{equation}
    \omega \otimes \rho \mapsto \int_\Sigma d\mu^{\E_\R}_\omega(x)\tr[\rho f_\S(x)].
\end{equation}
For $\hir = \mathbb{C}$, the map $ \E_\R$ is a probability measure $p$ on $(\Sigma,\mathcal{F})$ and we have
\[
\tr[\rho \int_\Sigma p(x) f_\S(x)] = \int_\Sigma p(x)\tr[\rho f_\S(x)],
\]
while for $\hir=\his=\mathbb{C}$ integration theory of complex-valued~functions is~recovered. Moreover, the tensor product decomposition seems too strong, and could perhaps be replaced with commutativity of $[f(x),\E(X)]$, with $f: \Sigma \to B(\hi), \E: \mathcal{F}(\Sigma) \to \Eff(\hi)$, or even further by compatibility of these effects. Further, we see no reason for this definition, if correct, not to extend to the convex theoretic setup, i.e. replacing $B(\his)$ by the space of bounded affine functionals $A^b(\S)$ \eqref{def:gpteff} on a total convex subset $\S \subset V$ of a real vector space \eqref{def:stsp}.

\subsection{Generalizations and applications}

Given that we always have an action of $G$ on the considered operator spaces $\rm{span}(\mathcal{O})^{cl}$, it is interesting to ask about the possibility of characterizing those order-unit Banach spaces that can support the actions of the continuous groups. Such considerations can possibly lead to some insights in the direction of \emph{reconstruction} of an operational relational quantum formalism like the one presented in this work, or its \emph{generalization} into the realm of General Probabilistic Theories. It is a known fact that the theory of von Neumann algebras is intimately interlinked with the theory of representations of continuous groups, which was one of the reasons for its development. Moreover, the results of Alfsen and Shulz \cite{Alfsen2002-kh} characterizing state spaces of von Neumann algebras among those of Jordan algebras in terms of the dynamical correspondence point to deep links between the von Neumann algebras setup and the ability to support dynamics in general, that we wish to investigate in the convex-theoretic context.

The operational notion of a frame-change map calls for both applications and generalizations. We believe it should be extended besides localizable frames. This may be possible to achieve either by replacing the localizing sequence for the lifting maps by an arbitrary state of the reference or by exchanging the lifting by some other map taking relative states to global ones, e.g. the preimage map $(\Y^{\R_1}_*)^{-1}\{\_\}$. Extending the $\Y$ construction to the non-principal frames should also open the possibility of considering more interesting localizable frame changes than the ones available now.

Another potential generalization of the setup of the frame-change maps would be to drop the tensor product decomposition in favor of a weaker commutativity requirement. Indeed, when we want to consider different internal frames on $\hi_\T$, instead of looking for a decomposition $\hi_\T \cong \hio \otimes \hit \otimes \his$, we could instead consider a pair of frame-orientation observables on $\hi_\T$ and assume commutativity of their effects, i.e. $[\E_1(X_1),\E_2(X_2)]=0$ for all $X_i \in \mathcal{B}(\Sigma_{\R_i})$, $i=1,2$, or, more generally \emph{joint measurability} of $\E_1$ and $\E_2$. We can also imagine a pair of internal frames \emph{not} satisfying any such `independence' requirement, somehow `overlapping'. Providing a frame-change map for such setups would be an interesting generalization of the construction developed so far.

When it comes to applications, we believe that the (sufficiently developed) operational frame-change maps will provide a final resolution of the conceptual difficulties connected to the Wigner's Friend type scenarios as they allow for relating perspectives taking into account the available observables and the underlying symmetry structure. Moreover, a whole series of questions opens up here regarding how different properties of quantum states which can be considered resources behave under the operational frame-change maps in general, providing an operational perspective on the issue of `quantum resource covariance'~\cite{qrcov}.

We believe that most, if not all, of the experimental setups in relativistic quantum physics could, and maybe should, be modeled along the following lines. A starting point is a global description in terms of a strongly continuous unitary representation of the Poincar{\'e} group on $\hi_\T$. Then the observer specifies the reference system for performing the observations, so a quantum reference frame $\R$ with the frame-orientation observable, the quantum system $\S$ that will be observed relative to $\R$, and the rest of the world, the environment $\Eff$, providing a decomposition $\hi_\T \cong \hir \otimes \his \otimes \hi_\Eff$. It can be done operationally all at once by specifying the set of available effects~to~be
\begin{equation}
    \mathcal{O} = \{\E_\R(X) \otimes F_\S \otimes \mathbb{1}_\Eff | X \in \mathcal{B}(G/H), F_\S \in \Eff(\his)\},
\end{equation}
where $H \subset G$ is the isotropy subgroup of the frame $\R$. Then an observable of interest on $\E_\S: \mathcal{F}(\Sigma) \to B(\his)^H$ should be specified and relativized to
\begin{equation}
\Y^\R \circ \E_\S \otimes \mathbb{1}_\Eff: \mathcal{F}(\Sigma) \to B(\hi_\T)^G.
\end{equation}
 Notice here, that in such an approach the problem of the arbitrariness of Heisenberg's cut is no longer present -- all the systems are treated as quantum from the outset. The experimental predictions should then arise by adequate modeling of the preparation procedure, upon localization of the reference frame and understanding the time translations as generating relative time evolution of $\S$ with respect to $\R$ \emph{and} $\Eff$ \cite{loveridge2019relative}. To support this claim, we recall \eqref{measuringY} that a general quantum measurement setup in the presence of symmetries, assuming that the evolution of a composite system $\R \otimes \S$ entangling the initially separable state commutes with the group action, measurement of any covariant POVM on $\S$ can be modeled as a measurement of the relative-orientation observable. We consider providing a relational model for the scattering experiments that are traditionally described by relativistic quantum field theory and delivering consistent predictions as one of the ultimate goals of the presented framework.

\subsection{What about space-time?}

Another pressing research direction is to look for ways to make contact with the algebraic formulation of quantum field theory. In particular, recently Edward Witten et al. \cite{Witten2023-zw},\cite{wittenetal} considered algebras generated by (smeared) paths of `observers' traveling along time-like geodesics, which we could perhaps try to interpret as the frame algebras. Another direction could be to try putting a group of diffeomorphisms as the underlying symmetry structure. This seems challenging but maybe not hopeless, as there are ways to put smooth structure on such groups (see e.g. \cite{Michor}). This direction, however, seems to compromise operationality in the sense of introducing the space-time manifold through the back door.

We propose the following more intrinsic, and operational, perspective on relational spatiotemporality that also suggests some links to theories of gravity. Consider a collection of principal internal quantum reference frames, so a setup given by $\hi_\T \cong \bigotimes_{i=0}^N \hi_i$ and $\E_i: \B(G) \to \Eff(\hi_i)$.\footnote{One could also imagine specifying just one, `global' frame orientation observable, from which the family of covariant POVMs may arise as a result of decomposing the `global' frame's Hilbert space and treating each factor independently.} Fixing the $0$th frame as our reference, we can generate an `$N$-point relative orientation observable'~as
\[
    \B(G^{\times N}) \ni (X_1,X_2,\dots,X_N) \mapsto \int_G d\E_0(g) \otimes g.\left(\E_1(X_1)\otimes E_2(X_2) \otimes \dots \otimes \E_N(X_N)\right).
\]
Provided a global state $\Omega \in \S(\hi_\T)_G$ this gives rise to $N$ probability distributions over $G$, corresponding to the relative orientation of the $0$th frame with respect to all others. If the global state happens to be in a highly localized form according to all frames\footnote{Or perhaps the frames have been chosen such that this is the case.}, i.e. it is a product state of the form
\begin{equation}
    \Omega \sim_G \omega_{n_0}(e) \otimes \omega_{n_1}(h_1)\otimes \dots \otimes \omega_{n_N}(h_N),
\end{equation}
the $N$-point relative orientation observable will with good approximation return the collection of group elements $(h_1,h_2,\dots,h_N)$, describing the configuration of the frames in an invariant manner. The fact that the group elements $h_i$ can be chosen arbitrarily \emph{suggests} picturing the whole collection of frames as placed in an ambient background space. In the case of the restricted Poincar{\'e} group, we make direct contact with the space-time picture along the following lines. Since $\mathcal{P}^+ \cong T^{1,3} \rtimes O^+(1,3)$, where $T^{1,3}$ is the (abelian) group of translations of the Minkowski space and $O^+(1,3)$ the restricted Lorentz group (component containing identity). In the case of such a localizable global state, the relative orientations are approximated by $h_i=(v_i,\Lambda_i)$. Now we can \emph{imagine} the $0$th frame placed at the origin of the Minkowski space, and all other frames placed at other points accordingly to the $v_i$ vectors. The elements $\Lambda_i$ then describe the relative orientation of other frames with respect to the $0$th one that we have chosen to be localized at $e \in \mathcal{P}^+$. Such a choice of a Lorentz frame at each point of the Minkowski space occupied with a frame can be seen as providing a system of {local inertial coordinates} around that point. Under this identification, and the picture we have in such a localized limit resembles one of the ways of describing the \emph{gravitational field} -- indeed, it can be given as a section of a principal fiber bundle of tetrads over the Minkowski space (see e.g. \cite{kijowski}). Investigating these connections further is an exciting research direction that we wish to pursue in the future.

\subsection{What about the observers?}

We finish with some general conceptual considerations concerning the notion of an observable, the role of an observer, and possible mathematical implementation of the presented views. The basic entities of the formalism are quantum reference frames understood as covariant POVMs, so quantum observables on homogeneous spaces respecting the $G$-action. We believe this notion should be revisited on both conceptual and mathematical grounds. To this end, we share the following intuitions. An observable in an operational framework should primarily be seen as a map \emph{from the state space of a system}, that can be given by the positive trace-one operators on the Hilbert space, normalized elements of the predual of a $W^*$-algebra, a base norm space, a convex subset of a real vectors space, and so on, depending on the level of generality that is needed, \emph{to the space of probability distributions}. We find plausible the interpretation of these probability distributions as \emph{representing the observer's statistical state of knowledge about the statements concerning the chosen quantity}. In this light, following Jayne's view on Probability Theory as an extension of logic \cite{Jaynes2003-dr}, probability distributions are best understood as \emph{normalized positive continuous valuations on complete Boolean algebras}. Valuations are simply functionals preserving the underlying lattice structure. More precisely, we have the following definition~\cite{Pavlov22}.
\begin{definition}
    A positive valuation on a distributive lattice $L$ (e.g., a locale or a Boolean algebra) is a
map $\nu: L \to [0,\infty)$ such that $\nu(0)=0, \nu(p \vee q) + \nu(p \wedge q) = \nu(p) + \nu(q)$ and $p \leq q$ implies $\nu(p) \leq \nu(q)$. A positive valuation is continuous if it preserves the existing suprema of directed subsets.
\end{definition}
Complete Boolean algebras that admit `enough' such valuations, meaning that for any proposition $p \neq 0$ there is a continuous valuation such that $\nu(p) \neq 0$, are called \emph{localizable}. We understand this requirement as a natural one -- localizable Boolean algebras are those for which we can reason probabilistically about any non-trivial proposition. Perhaps surprisingly, the notion of a localizable Boolean algebra is \emph{equivalent} (in fact, in the categorical sense, see \cite{Pavlov22}) to that of a \emph{commutative von Neumann algebra}. The latter is always isometrically isomorphic to the algebra of measurable functions on a measurable space $L^\infty(\Sigma,\mu)$, and under this identification, the probability distributions, i.e. normalized positive continuous valuations, correspond to the normalized functions in $L^1(\Sigma,\mu)$, so precisely the normalized elements of the predual of the von Neumann algebra $L^\infty(\Sigma,\mu)$, and the propositions in the Boolean algebra to the measurable subsets of $\Sigma$. We then suggest grounding the notion of a quantum observable in logic via this Gelfand-type duality between \emph{localizable} Boolean algebras and commutative von Neumann algebras by endorsing the following definition.
\begin{definition}
    A \emph{von Neumann observable} is a normal positive unital map
    \[
    \hat{\E}: L^\infty(\Sigma,\mu) \to B(\hi).
    \]
    It is called \emph{sharp} if it is multiplicative, and \emph{localizable} if it is norm-preserving. When $\Sigma$ is a homogeneous space for a locally compact second countable Haussdorf group $G$ and we have a strongly continuous projective unitary representation of $G$ on $B(\hi)$, $\hat{\E}$ is called \emph{covariant} if it is an \emph{equivariant} map.
\end{definition}
This can be generalized even further by replacing $B(\hi)$ with a general $W^*$-algebra, or an order-unit Banach space, and treating $L^\infty(\Sigma,\mu)$ accordingly \cite{Beltrametti}, \cite{kuramochiCompactConvexStructure2020}. The point is that the predual map $\hat{\E}_*$ will map states in $\S(\hi) \subset \T(\hi)$ to probability distributions, which is all and precisely what we want it to be, respecting the mentioned interpretation of the probability distributions. This research direction requires further investigation, which is the subject of ongoing work. For now, we conjecture that the definition above is aligned with the usual usage of the terms \emph{sharp}, \emph{localizable} and \emph{covariant}. As easily confirmed, a POVM is indeed associated with $\hat{\E}$ via
\begin{equation}
    \E(Y) := \hat{\E}(\chi_Y),
\end{equation}
where $\chi_Y$ is the characteristic function of the measurable subset $Y \subseteq \Sigma$. The class of $\mu$-continuous POVM on a compact space with a bounded variation has been characterized precisely as such maps \cite{Roumen2014-za}. We propose to turn the logic around and \emph{require} observables to be such maps, realizing the role we want them to play as assigning normalized positive continuous valuations on localizable Boolean algebras of propositions considered by the observer to the states.

The algebraic perspective on POVMs as presented above provides in our view not only conceptually advantageous, which is perhaps a controversial claim but also brings more structural coherence to the framework by allowing us to speak about all the basic notions of the framework in a unified language, namely inside the category of normal positive unital maps between von Neumann algebras (or more generally the state space morphisms). Indeed, the $\R$-framed effects can now be seen as elements in the image of the $\hat{\E} \otimes \mathbb{1}_\S$ map, the $\Y^\R$ map is also such a normal positive and unital, and thus the relativizing procedure will always take von Neumann observables to the alike. 

The interpretational shift in the direction of logic and the observer's state of knowledge puts the \emph{continuity} of possible outcomes of (localized) spatiotemporal relative-orientation observables on the side of the observer. The space-time, in which we imagine physical objects to live, is then continuous precisely because we think it is.
\myemptypage

\addcontentsline{toc}{chapter}{\protect\numberline{}References}
\printbibliography[title={References}] 
\myemptypage

\chapter*{\LARGE \textbf{Appendices}}
\fancyhf{} 
\renewcommand{\headrulewidth}{0pt} 
\fancyfoot[C]{\thepage} 


\addcontentsline{toc}{chapter}{\protect\numberline{}Appendices:}
\appendix

\section{Notation}\label{app:notation}
For the reader's convenience, we collect below the definitions of spaces and maps which are used in the framework, together with the relevant notation, mentioning also the crucial results.
\begin{itemize}
    \item \emph{Invariant operators/effects}: von Neumann subalgebra of $B(\hi)$/subset of $\Eff(\hi)$ consisting of operators/effects invariant under the given unitary representation of a group $G$
    \begin{align*}
    B(\hi)^G &:= \{A \in B(\hi)\hspace{3pt}|\hspace{3pt}g.A \equiv U(g)AU^*(g)=A\} \subseteq B(\hi), \\
    \Eff(\hi)^G &:= \{F \in \Eff(\hi)\hspace{3pt}|\hspace{3pt}g.F \equiv U(g)FU^*(g)=F\} \subseteq \Eff(\hi) \subset B(\hi).
    \end{align*}
    \item \emph{Invariant states/trace-class operators}: total convex subset/subspace of states/trace-class operators which are invariant under the given unitary representation of $G$
     \begin{align*}
         \T(\hi)^G &:= \{T \in \T(\hi)\hspace{3pt}|\hspace{3pt}g.T \equiv U^*(g)T U(g)=T\} \subseteq \T(\hi), \\
        \S(\hi)^G &:= \{\Omega \in \S(\hi)\hspace{3pt}|\hspace{3pt}g.\Omega \equiv U^*(g)\Omega U(g)=\Omega\} \subset \T(\hi)^G \subseteq \T(\hi).
     \end{align*}
     \item \emph{Global states/trace-class operators}: the quotient space of classes of states/ trace-class operators that can not be distinguished by the invariant effects (or, equivalently, by the invariant operators)
     \begin{align*}
         \T(\hi)_G &:= \T(\hi)/\hspace{-3pt}\sim_G, \\
        \S(\hi)_G &:= \S(\hi)/\hspace{-3pt}\sim_G,
     \end{align*}
     where $\Omega \sim_G \Omega' \hspace{5pt} \Leftrightarrow \hspace{5pt} \tr[\Omega F]=\tr[\Omega' F] \text{ for all } F \in \Eff(\hi)^G$. In the case of compact $G$ we have
     \begin{align*}
         \T(\hi)_G &\cong \T(\hi)^G \text{ (as Banach spaces),} \\
         \S(\hi)_G &\cong \S(\hi)^G \text{ (as state spaces).}
     \end{align*}
     \item \emph{Framed operators/effects}: subspace of $B(\hirs)$/subset of $\Eff(\hirs)$ consisting of operators/effects respecting the choice of the frame-orientation observable $\E_\R: \mathcal{F}(\Sigma) \to \Eff(\hir)$
     \begin{align*}
         \Eff(\hirs)_{\E_\R} &:= \rm{conv}\left\{\E_\R(X) \otimes \F_\S \hspace{3pt} | \hspace{3pt} X \in \mathcal{F}(\Sigma_\R), \F_\S \in \Eff(\his)\right\}^{cl},\\
         B(\hirs)_{\E_\R} &:= \rm{span}(\Eff(\hirs)_{\E_\R})^{cl}.
     \end{align*}
     \item \emph{Relational operators/effects}: subspace of $B(\hirs)$/subset of $\Eff(\hirs)$ consisting of invariant and framed operators/effects
     \begin{align*}
         \Eff(\his)^G_{\E_\R} &:=  \Eff(\his)_{\E_\R} \cap \Eff(\hirs)^G,\\
         B(\his)^G_{\E_\R} &:= {\rm span}\{\Eff(\his)^G_{\E_\R}\}^{cl} \subset B(\hirs).
     \end{align*}
     \item \emph{Relative operators/effects}: subspace of $B(\hirs)$/subset of $\Eff(\hirs)$ consisting of operators/effects relativized with the relativization map
     \[
        \Y^\R (A_\S) := \int_G d\E_\R(g) \otimes g.A_\S,
     \]
     so that we have
     \begin{align*}
         \Eff(\his)^\R &:=  \Y^\R(\Eff(\his)) \subseteq \Eff(\his)^G_{\E_\R},\\
         B(\his)^\R &:= {\rm span}\{\Eff(\his)^\R\}^{cl} = \Y^\R(B(\his)) \subseteq B(\his)^G_{\E_\R}.
     \end{align*}
     When $\R$ is localizable (i.e. $||\E_\R(X)||=1$ for all $X \in \B(G)$), we have
     \[
     B(\his)^\R \cong B(\his)
     \]
     as von Neumann algebras, and when $\E_\R$ is sharp $B(\hir)^\R$ is a subalgebra of $B(\hirs)$. If $G$ is finite we have
     \[
     \Eff(\his)^\R=\Eff(\his)^G_{\E_\R}.
     \]
     \item \emph{Relative states/trace-class operators}: the quotient space of classes of states/trace-class operators that can not be distinguished by the relative effects (or, equivalently, by the invariant operators)
     \begin{align*}
         \T(\his)^\R &:= \T(\hirs)/\hspace{-3pt}\sim_\R, \\
        \S(\hi)^\R &:= \S(\hirs)/\hspace{-3pt}\sim_\R,
     \end{align*}
     where $\Omega \sim_\R \Omega' \hspace{5pt} \Leftrightarrow \hspace{5pt} \tr[\Omega F]=\tr[\Omega' F] \text{ for all } F \in \Eff(\his)^\R$. We have
      \begin{align*}
         \T(\his)^\R \cong \Y^\R_*(\T(\hirs)) &= \Y^\R_*(\T(\hirs)/\hspace{-3pt}\sim_G) \subseteq \T(\his), \\
        \S(\hi)^\R \cong \Y^\R_*(\S(\hirs)) &= \Y^\R_*(\S(\hirs)/\hspace{-3pt}\sim_G) \subseteq \S(\his).
     \end{align*}
      When $\R$ is localizable the inclusions above are dense (in operational topology). We use the following notation for relative states
      \[
        \Omega^\R \equiv \Y^\R_*(\Omega) \cong [\Omega]_\R,
      \]
      where $\Omega$ is a state on the composite system $\Omega \in \S(\hirs)$.
    \item \emph{The $\omega$-conditioned relative observables/effects}: subspace of $B(\his)$/subset of $\Eff(\his)$ consisting of $\omega$-conditioned relativized observables/effects
      \begin{align*}
        \Eff(\his)^\R_\omega &:= \Y^\R_\omega (\Eff(\his)) \subseteq \Eff(\his),\\
         B(\his)^\R_\omega &:= \rm{span}\{\Eff(\his)^\R_\omega\}^{cl} = \Y^\R_\omega (B(\his)) \subseteq B(\his),
      \end{align*}
      where $\Y^\R_\omega := \Gamma_\omega \circ \Y^\R$ and 
      \[
        \Gamma_\omega: B(\hirs) \ni A_\R \otimes A_\S \mapsto \tr[\omega A_\R] A_\S \in B(\his)
      \]
      is extended by linearity and continuity. The $\omega$-conditioned relative effects take the form
     \[
        \Y^\R_\omega(F_\S) = \int_G d\mu^{\E_\R}_\omega(g) U(g)F_\S U^*(g).
     \]
      \item \emph{The $\omega$-product relative states/trace-class operators}: the quotient space of classes of states/trace-class operators that can not be distinguished by the $\omega$-conditioned relative effects (or, equivalently, by the invariant operators)
     \begin{align*}
         \T(\his)_\omega^\R &:= \T(\his)/\hspace{-3pt}\sim_{(\R,\omega)} \subseteq \T(\his), \\
        \S(\hi)_\omega^\R &:= \S(\his)/\hspace{-3pt}\sim_{(\R,\omega)} \subseteq \S(\his),
     \end{align*}
     where $\rho \sim_{(\R,\omega)} \rho' \hspace{5pt} \Leftrightarrow \hspace{5pt} \tr[\rho F]=\tr[\rho' F_\S] \text{ for all } F_\S \in \Eff(\his)_\omega^\R$. We have
      \begin{align*}
         \T(\his)_\omega^\R &\cong (\Y^\R_\omega)_*(\T(\his)) \subseteq \T(\his), \\
        \S(\hi)_\omega^\R &\cong (\Y^\R_\omega)_*(\S(\his)) \subseteq \S(\his).
     \end{align*}
     The $\omega$-conditioned relative states take the form
     \[
        \rho^{(\omega)} := (\Y^\R_\omega)_*(\rho) = \int_G d\mu^{\E_\R}_\omega(g) U(g)^*\rho U(g).
     \]
     \item \emph{The $\omega$-lifted relative states}: global states in $\S(\hirs)_G$ that arise by attaching a frame state $\omega$ to a relative state via the lifting map $\mathcal{L}^\R_\omega$ defined as
     \begin{align*}
        \Gamma^\R_\omega &:= \Y^\R \circ \Gamma_\omega: B(\hirs)^G \to B(\his)^\R, \\
        \mathcal{L}^\R_\omega &:= (\Gamma^\R_\omega)_*: \T(\his)^\R \to \T(\hirs)_G.
     \end{align*}
     The $\omega$-lifting map acts on states as
     \[
        \S(\his)^\R \ni \Omega^\R \mapsto \mathcal{L}^\R_\omega(\Omega^\R) = [\omega \otimes \Omega^\R]_G \in \S(\hirs)_G.
     \]
     \item \emph{Framed relative operators/effects}: subspace of $B(\hi_1 \otimes \hi_2 \otimes \his)^G$/subset of $\Eff(\hi_1 \otimes \hi_2 \otimes \his)^G$ consisting of relativized framed effects, e.g.
     \begin{align*}
        \Eff(\hi_2 \otimes \his)^{\R_1}_{\E_2} &:= \Y^{\R_1}\left(\Eff(\his)_{\E_2}\right) \\
        &= \{\Y^{\R_1}(\E_2(X) \otimes F_\S) \hspace{3pt} | \hspace{3pt} X \in \mathcal{F}(\Sigma_2), F_\S \in \Eff(\his)\}, \\
        B(\hi_2 \otimes \his)^{\R_1}_{\E_2} &:= \rm{span}\left\{\Eff(\hi_2 \otimes \his)^{\R_1}_{\E_2}\right\}^{cl} \subseteq B(\hi_1 \otimes \hi_2 \otimes \his)^G.
     \end{align*}
     
     \item \emph{Framed relative states/trace-class operators}: the quotient space of classes of states/trace-class operators that can not be distinguished by the relativized framed effects, e.g.
     \begin{align*}
         \T(\hi_2 \otimes \his)^{\R_1}_{\E_2} &:= \T(\hi_1 \otimes \hi_2 \otimes \his)/\hspace{-3pt}\sim_{\Eff(\hi_2 \otimes \his)^{\R_1}_{\E_2}}, \\
         &= \T(\hi_2 \otimes \his)^{\R_1}/\hspace{-3pt}\sim_{\E_2},\\
         \S(\hi_2 \otimes \his)^{\R_1}_{\E_2} &:= \S(\hi_1 \otimes \hi_2 \otimes \his)/\hspace{-3pt}\sim_{\Eff(\hi_2 \otimes \his)^{\R_1}_{\E_2}}, \\
         &= \S(\hi_2 \otimes \his)^{\R_1}/\hspace{-3pt}\sim_{\E_2},
     \end{align*}
     where $\E_2$ denotes the equivalence with respect to $\Eff(\his)_{\E_2}$. For the corresponding quotient projections we write e.g.
     \[
        \pi_{\E_2}: \S(\hi_2 \otimes \his) \supseteq \S(\hi_2 \otimes \his)^{\R_1} \to \S(\hi_2 \otimes \his)^{\R_1}/\hspace{-3pt}\sim_{\E_2}.
     \]
\end{itemize}

\section{Quantum mechanics of operators}\label{basics}
Here we provide a semi-didactic introduction to the operator-algebraic perspective on the quantum mechanics of von Neumann. In this setup, a quantum system $\S$ is modeled on a separable Hilbert space $\his$, as in most standard textbook presentations, but the focus is shifted from the Hilbert space vectors to the space $B(\his)$ of bounded operators acting on them. As it turns out, nothing really hinges on those vectors. On the contrary, the whole setup can be phrased in terms of the elements of $B(\his)$, and indeed this perspective seems more appropriate e.g. when we want to view Quantum Theory as an extension of Probability Theory, or when attempting to reconstruct the framework from principles. Since the material presented in this Appendix is standard, we do not cite any particular sources, referring an interested reader to \cite{landsman2017foundations} and references there.

\subsection*{Banach and Hilbert spaces}

The concept of an $n$-dimensional Euclidean space $\mathbb{R}^n$ has been generalized in various ways. One of the least restrictive notions is that of a vector space.

\begin{definition}
    A \emph{vector space} is a set $V$, elements of which will be called vectors, that can be added together and multiplied by numbers. Formally the set $V$ comes equipped with the following operations
    \begin{itemize}
        \item (addition) $+: V \times V \ni (v_1,v_2) \mapsto v_1 + v_2 \in V$,
        \item (scalar multiplication) $\cdot: k \times V \ni (\lambda,v) \mapsto \lambda\cdot v \equiv \lambda v \in V$,
    \end{itemize}
    where $k$ is a field. We will only use the field of real $k=\mathbb{R}$ or complex $k=\mathbb{C}$ numbers.
\end{definition}

This formalizes and generalizes the operations $(x_1,y_1,z_1) + (x_2,y_2,z_2) = (x_1+x_2,y_1+y_2,z_1+z_2)$ and $\lambda \cdot (x,y,z) = (\lambda x,\lambda y.\lambda z)$ in $\mathbb{C}^n$. As a next step, the notion of a \emph{length} of a vector can be introduced on a vector space.

\begin{definition}
A \emph{normed space} is a vector space $V$ equipped with a norm, which is a map $V \ni v \mapsto ||v|| \in [0,\infty)$ satisfying the following conditions
\begin{enumerate}
    \item The norm of a vector is zero only if it is the zero vector, i.e. $||v||=0 \Rightarrow v=0$,
    \item The norm scales with the module under multiplication by numbers, i.e. for any number $\lambda$ and any vector $v$ we have $||\lambda v|| = |\lambda|||v||$, and
    \item The norm satisfies the \emph{triangle inequality}, i.e. for any pair of vectors $v_1,v_2$ we have $||v_1 + v_2|| \leq ||v_1|| + ||v_2||$.
\end{enumerate}
\end{definition}

The notion of a norm  generalizes the map $v = (x,y,z) \mapsto \sqrt{x^2+y^2+z^2}$ on $\mathbb{R}^3$, when the triangle inequality is the one thought in school. A norm naturally induces the notion of a distance $d(v_1,v_2) := ||v_2-v_1||$ between vectors in $V$, which is symmetric and positive and makes $(V,||\_||)$ a \emph{metric space}.

\begin{definition}
    A \emph{metric space} is a set $M$ equipped with a \emph{metric}, or a \emph{distance function} $d:M \times M \to [0,\infty)$ satisfying the following properties for any $p,q,r \in M$
    \begin{itemize}
        \item The distance of a point from itself is always zero, i.e. $d(p,p)=0$,
        \item The distance is symmetric, i.e. $d(p,q) = d(q,p)$, and
        \item the triangle inequality holds, i.e. $d(p,q) \leq d(p,r) + d(r,q)$.
    \end{itemize}
\end{definition}

As is easily seen, the two triangle inequalities are in agreement for normed vector spaces. In a Euclidean space, the distance function is the one that we use for calculating distances using the Pythagorean theorem, e.g. in three dimensions we have $d(v_1,v_2) = ||v_2-v_1|| = \sqrt{(x_2-x_1)^2+(y_2-y_1)^2 + (z_2-z_1)^2}$. In a metric space, we have an important notion of a \emph{limit} of a \emph{sequence} of elements, i.e. of a collection of points labeled by the natural numbers, written $\{p_n\}_{n \in \mathbb{N}}=\{p_0,p_1,p_2,\dots\}$.

\begin{definition}
    A \emph{limit} of a sequence of elements of a metric space is an element with the property that for large enough indices all the elements of the sequence are arbitrarily close to it. Formally we say that $p_\infty$ is a limit of $\{p_0,p_1,p_2,\dots\}$ if for any $\epsilon > 0$ there is $N_\epsilon \in \mathbb{N}$ such that for any element $p_m$ with $m > N_\epsilon$ the distance $d(p_\infty,p_m)$ is smaller than $\epsilon$. The element $p_\infty$ is denoted by $\lim_{n\to \infty}p_n$.
\end{definition}

Clearly, not every sequence will have a limit, as the elements of a sequence can be distributed arbitrarily. Sequences that do have a limit -- if there is one, it needs to be unique -- are called \emph{converging}. There is a class of sequences, called \emph{Cauchy sequences}, that we would expect to converge.

\begin{definition}
    A sequence of elements in a metric space is a \emph{Cauchy sequence} if, for big enough indices, the distance between any pair of elements is arbitrarily small. Formally, we say that $\{p_0,p_1,p_2,\dots\}$ is Cauchy if for any $\epsilon > 0$ there is $N_\epsilon \in \mathbb{N}$ such that for any $n,m > N_\epsilon$ we have $d(p_n,p_m) < \epsilon$.
\end{definition}

Intuitively if we have a Cauchy sequence that does not converge, the metric space can be seen to be lacking this limiting point -- the elements of the sequence get arbitrarily close to one another as the indices rise, yet there is no point in the metric space that would be arbitrarily close to all of the sufficiently highly labeled elements. Hence the following definition.

\begin{definition}
    A metric space $M$ is \emph{complete} if all the Cauchy sequences converge.
\end{definition}

As a simple example to have in mind consider the following. The real numbers, equipped with the absolute-value norm and the associated distance function, are complete metric spaces, while the rational numbers are not. Indeed, the real numbers can be seen as a \emph{completion} of the rational numbers in the sense of adding all the missing limits of the Cauchy sequences. In such a situation, we say that the set of rational numbers is (norm) \emph{dense} in the set of real numbers. As we often would like to consider limiting procedures we will require all the metric spaces to be complete.

Now going back to the vector spaces, we present the two most important, at least in terms of applications, infinite-dimensional generalizations of Euclidean spaces.

\begin{definition}
    A \emph{Banach space} is a complete normed vector space.
\end{definition}

Juts like the notion of a norm generalizes the length of a vector in an Euclidean space, that of \emph{inner product} provides a generalization of the notion of an \emph{angle}.

\begin{definition}
    An \emph{inner product} on a vector space $V$ is a map $\langle,\rangle: V \times V \to k$ such that
    \begin{itemize}
        \item The inner product is conjugate-symmetric, i.e. for any pair of vectors we have $\langle v_1,v_2\rangle = \overline{\langle v_2,v_1\rangle}$,
        \item The inner product is linear in the first argument, i.e. for any pair of vectors we have $\langle \lambda_1 v_1 + \lambda_2 v_2,v_3\rangle = \lambda_1 \langle v_1,v_3\rangle + \lambda_2 \langle v_2,v_3\rangle$, and
        \item The inner-product of a vector with itself is a non-negative (real) number, zero only for the zero vector, i.e. $\langle v,v \rangle \geq 0$, $\langle v,v \rangle = 0 \Rightarrow v=0$
    \end{itemize}
\end{definition}

A few comments are in order. The first requirement -- that of conjugate-symmetricity -- assures that $\langle v,v\rangle$ is a real number. It also follows that the inner product can not be linear in the second argument (the choice of the first or second slot to be linear is conventional), as it needs to be anti-linear to satisfy the first requirement. Such a map is often called a \emph{sesquilinear form}. In a $3$-dimensional real Euclidean space the inner product is given by $\langle v_1,v_2\rangle = x_1x_2 + y_1y_2 + z_1z_2 = ||v_1||||v_2||cos(\theta)$, where $\theta$ is the angle between the two vectors and $||\_||$ the usual norm. The notions of an inner product and the norm are thus related by $\langle v,v \rangle = ||v||^2$. In fact, this is a general feature -- having fixed an inner product on a vector space, this formula \emph{defines} an associated norm. We then naturally arrive at the notion of a Hilbert space.

\begin{definition}
    A Hilbert space is a vector space equipped with an inner product that is complete in the associated norm.
\end{definition}

Thus the Hilbert spaces can be seen as special kinds of Banach spaces -- those complete normed spaces for which the norm comes from an inner product. The simplest non-trivial complex Hilbert space is the space of complex numbers themselves, with the inner product given by $\langle z,w \rangle = z \overline{w}$, which generalizes to arbitrary dimension $\mathbb{C}^n$ by $\langle z,w \rangle = z_1\overline{w_1} + z_2\overline{w_2} + \dots z_n\overline{w_n}$. 

A pair of Hilbert spaces $\hi_1,\hi_2$ can be considered the same if there is a linear map $T:\hi_1~\to~\hi_2$ that is bijective and preserves the inner product, i.e. we have $\langle T\xi,T\eta\rangle_{\hi_2} = \langle \xi,\eta\rangle_{\hi_1}$ for any pair of vectors $\xi,\eta \in \hi_1$. Such a map is called \emph{unitary}, and if there is one, $\hi_1$ and $\hi_2$ are called \emph{unitarily equivalent}. The spaces $\mathbb{C}^n$ are, up to unitary equivalence, the only examples of finite-dimensional complex Hilbert spaces. We distinguish a class of Hilbert spaces that are particularly tractable and traditionally used in quantum mechanics.

\begin{definition}
    A Hilbert space $\his$ is called \emph{separable} if it contains a countable subset of vectors that is dense in $\his$.
\end{definition}

As is a common practice we will only consider separable Hilbert spaces. Assuming the Axiom of Choice\footnote{Following an accurate description from Wikipedia: ``the Axiom of Choice is an axiom of set theory equivalent to the statement that a Cartesian product of a collection of non-empty sets is non-empty. Informally put, the Axiom of Choice says that given any collection of sets, each containing at least one element, it is possible to construct a new set by arbitrarily \emph{choosing} one element from each set, even if the collection of sets is infinite.} one can show that a Hilbert space is separable if and only if it admits a countable \emph{orthonormal basis}, defined as follows.

\begin{definition}
    An \emph{orthonormal basis} of a Hilbert space is a set of orthogonal vectors of norm one that span the Hilbert space. More precisely, it is a subset $F \subset \his$ such that for any pair of vectors $f_1,f_2 \in F$ we have $\langle f,f' \rangle = 0$ unless $f=f'$ in which case we have $\langle f,f' \rangle = ||f||^2=1$, with the property that for any vector $\xi \in \his$ whenever $\langle f, \xi \rangle = 0$ for all $f \in F$ we can conclude that $\xi = 0$.
\end{definition}

This generalizes the cartesian coordinates that are commonly used in Euclidean spaces. Indeed, given a countable orthonormal basis $\{\phi_0,\phi_1,\dots\}$, any Hilbert space vector $\xi \in \his$ can be written as $\sum_{n=0}^\infty \langle \phi_n,\xi\rangle \phi_n$.
Up to unitary equivalence, there is only one complex infinite-dimensional \emph{separable} Hilbert space -- it can be constructed as a space of infinite sequences of complex numbers $\{\lambda_0,\lambda_1,\dots\}$ for which the sum $\sum_{n=1}^\infty |\lambda_n|^2$ is finite, with the inner product defined as $\langle \lambda, \lambda' \rangle = \sum_{n=1}^\infty \lambda_n \overline{\lambda'_n}$, and is denoted~by~$l^2(\mathbb{N})$.

\subsection*{Bounded operators}

We can now turn to the main object of interest in the operator-algebraic approach to quantum mechanics -- the algebras of operators. An linear operator on a Hilbert space $\his$ is a linear map $A: \his \ni \xi \mapsto A\xi \in \his$. The space of linear operators on $\his$, denoted $\mathcal{L}(\his)$, is a \emph{vector space} under the operations $(A_1+A_2) = A_1\xi + A_2\xi$ and $(\lambda A)\xi = \lambda A \xi$. It can be equipped with the norm\footnote{Notice here, that this norm can be put on the space of operators on any Banach space, not necessary a Hilbert space.} defined by $||A|| = sup_{\xi \in \his}\frac{||A\xi||}{||\xi||}$. The operators for which this norm is finite are called \emph{bounded} and form a \emph{Banach space} denoted $B(\his)$. This Banach space is very rich in additional structures that we now briefly summarise.

As maps, operators on $\his$ can be composed, i.e. $AB \equiv A \circ B: \xi \mapsto A(B\xi)$. Further, since the composition of two bounded operators remains bounded, we have an \emph{algebra} structure on $B(\his)$, i.e. a binary operation $B(\his) \times B(\his) \ni (A,B) \mapsto AB \in B(\his)$. Moreover, the inner product in the underlying Hilbert space brings about an operation, called \emph{involution}, that can be performed on a bounded operator returning another one. Namely, it turns out that given $A \in B(\his)$ there is a unique $A^* \in B(\his)$ satisfying $\langle A\xi,\eta \rangle = \langle \xi,A^*\eta \rangle$ for all pairs of vectors $\xi,\eta \in \his$, with $A^*$ referred to as an \emph{adjoint} operator of $A$.

It is useful to distinguish special classes of bounded operators on a Hilbert space.

\begin{definition}\label{def:bdops}
    A bounded operator $A \in B(\his)$ is:
    \begin{enumerate}
        \item \emph{unitary} if it preserves the inner product, i.e. $\langle A\xi,A\eta \rangle = \langle \xi,\eta\rangle$,
        \item \emph{self-adjoint} if it equals its adjoint, i.e. $A\xi=A^*\xi$ for all $\xi \in \his$,
        \item \emph{positive} if there is $B \in B(\his)$ such that $A=B^*B$,
        \item a \emph{projection} if it is self-adjoint and idempotent, i.e. $A=A^*=A^2$,
        \item \emph{trace-class} if its trace, defined as $\tr[A] = \sum_{n=1}^\infty \langle A f_n,f_n \rangle$ for $\{f_n\}$ an orthonormal basis, is finite.
    \end{enumerate}
\end{definition}

As a simple example to have in mind take $\mathbb{C}$ as the simplest non-trivial Hilbert space and $B(\mathbb{C}) \cong \mathbb{C} $ -- complex numbers are precisely the linear bounded maps $\mathbb{C} \to \mathbb{C}$, with the complex conjugation providing the involution, and multiplication given by the usual multiplication of complex numbers. Another simple example is the space $B(\mathbb{C}^n)$ of bounded operators on the $n$-dimensional complex vector space~$\mathbb{C}^n$, so all $n\times n$ complex matrices since all linear operators are matrices and all matrices are bounded as operators on a finite dimensional space. In this case, the hermitian adjoint provides the involution operation, matrix multiplication the algebra operation and entry-wise scaling the scalar multiplication. A few comments are in order. 

\begin{enumerate}
    \item As easily seen, an operator $A$ is unitary if and only if it is invertible and $A^*=A^{-1}$, so that $A^*A=AA^*=\mathbb{1}_{\his}$. This generalizes the complex numbers of the form $e^{i\phi}$. The unitary operators thus form a group, generalizing the circle group $S^1$.
    \item The defining property of a self-adjoint operator is analogous to that of a real number among the complex ones -- $z \in \mathbb{C}$ is real iff $\overline{z}=z$.
    \item Any positive number can be written as $z\overline{z}$, which is generalized in point $3.$ above. Positive operators are then automatically self-adjoint. If $A$ is positive, for any $\eta \in \his$ we have $\langle \eta, A\eta \rangle = \langle B \eta, B \eta \rangle = ||B\eta||^2 \geq 0$, which is in fact equivalent -- $A$ is positive iff $\langle \eta, A\eta \rangle \geq 0$ for any $\eta \in \his$.
    \item One can show that there is a one-to-one correspondence between subspaces of the Hilbert space $\his$, i.e. subsets $\mathcal{K} \subseteq \his$ that are Hilbert spaces themselves, and projections as defined in the point $4.$ above. For any subspace, we have a unique projection that maps onto it and does not change the vectors that are already on that subspace. One easily verifies that such maps are indeed self-adjoint and idempotent.
    \item The definition of the trace in $5.$ generalizes that of a trace of a matrix and can do not depend on the choice of an orthonormal basis in a separable Hilbert space. The space of trace-class operators will be denoted by $\T(\his)$, and for its elements, we will usually use the letter $\Omega \in \T(\his)$.
\end{enumerate}

\subsection*{Elementary quantum mechanics}

In the textbook quantum mechanics (pure) states are defined as equivalence classes of unit vectors in a Hilbert space $\his$, so-called rays, written as kets $\ket{\xi}$. The two unit vectors $\xi,\xi'$ define the same state if they are related by a phase, i.e. $\xi' = e^{i\varphi}\xi$ for some $\phi \in [0,2\pi)$. Such states define functionals on $B(\his)$ via associating the \emph{expectation value} to an operator, i.e. $A \mapsto \langle \xi, A \xi \rangle$, which is well defined by the class $\ket{\xi}$ since $\langle e^{i\varphi}\xi, A  e^{i\varphi}\xi\rangle = \langle \xi, A \xi \rangle$ for all $\phi \in [0,2\pi)$ and $A \in B(\his)$. This can be seen as the first instance \emph{operational equivalence} that we encounter -- in principle, any unit vector in $\his$ defines a state understood as a functional that associates expectation values to bounded operators. We however want to identify those vectors that are not distinguishable as such functional. Those that give the same expectation values on all operators can be called "operationally equivalent". This notion plays a prominent role in the presented framework and is discussed separately in the main part of the thesis -- see \ref{sec:opeq}.

The quantum observables in the textbook setting are defined as self-adjoint operators in $B(\his)^{\rm{sa}}$. This assures that the expectation values they give will always be real numbers. Moreover, it is (often implicitly) assumed, that the relevant observables $A\in B(\his)^{\rm{sa}}$ define an orthonormal basis of \emph{eigenvectors}, i.e. the vectors satisfying $A\phi_n=\lambda_n \phi_n$ with $\lambda_n \in \mathbb{C}$, with the corresponding numbers $\lambda_n$ called \emph{eigenvalues}. This is always true when the dimension of $\his$ is finite but otherwise holds only for compact\footnote{An operator on a Hilbert space is compact iff it is a limit (in the operator norm) of a sequence of operators of finite-rank, i.e. such that have an finite-dimensional image. This is an interesting technical condition that we will not need to use.} self-adjoint operators, as in general, a self-adjoint operator may have no eigenvectors. We will go back to this issue shortly, assuming for now that such a basis is given for the operator in question. Then $A$ can be written as $A = \sum_{n=0}^\infty \lambda_n \P_{\lambda_n}$, where $P_{\lambda_n}$ is a projection operator projecting onto the subspace spanned by the eigenvectors with the eigenvalue $\lambda_n$. The expectation values the eigenvectors give as functionals when evaluated on $A$ are then simply the corresponding eigenvalues, i.e. $\langle \phi_n, A\phi_n \rangle = \lambda_n$. But since $\{\lambda_0,\lambda_1,\dots\}$ form an orthonormal basis, any vector $\xi \in \his$ can be written as $\sum_{n=0}^{\infty}\langle \phi_n, \xi \rangle \phi_n$ and thus the expectation value it gives when evaluated on $A$ reads $\langle \xi,A\xi\rangle = \sum_{n=0}^\infty\langle \phi_n, \xi \rangle \lambda_n$. The normalization condition, i.e. the norm equals one, of the states allows for a probabilistic interpretation of the procedure just described along the following lines. The set of eigenvalues of $A$, i.e. the real numbers $\{\lambda_0,\lambda_1,\dots\}$ is understood as possible outcomes of the measurement of the quantity modeled by the self-adjoint operator $A$. The probability of getting the outcome $\lambda_n$ if the system \emph{is prepared in a state $\ket{\xi}$} is given by the \emph{Born rule}
\[
p_{\ket{\xi}}(\lambda_n) := \langle \xi,P_{\lambda_n}\xi \rangle = |\langle \phi_n,\xi\rangle|^2.
\]
Since the norm of $\xi$ is one, we have\footnote{We need to use convexity of the $|\_|^2: \mathbb{C} \to \mathbb{C}$ function to get the sum inside, which is assured since e.g. the norm is always convex, and so is the $z \mapsto z^2$ function, and hence their compositon.}
\[
\sum_{n=0}^\infty p_{\ket{\xi}}(\lambda_n) = \sum_{n=0}^\infty |\langle \phi_n,\xi \rangle|^2 = ||\xi||^2=1.
\]

The epistemic content of the elementary textbook setup is then that the (pure) states assign probability distributions over the spectra of (compact) self-adjoint operators. As such, we find this highly unsatisfactory as a basis for quantum theory for the following reasons.

\begin{enumerate}
    \item[1)] The pure states do not allow for the description of situations in which the state is a \emph{probabilistic mixture} of states.
    \item[2)] The operators that we would like to use as observables may not be expressible as operators with a spanning set of eigenvalues, not even as bounded ones.
    \item[3)] The mathematical machinery is introduced without any conceptual justification.
\end{enumerate}

Regarding the first point above, imagine a situation in which we would like to assign a state to a system for which there is probability $2/3$ of being prepared in $\ket{\xi}$, and $1/3$ of being prepared in $\ket{\xi'}$. There is no pure state that will reproduce the probability distributions corresponding to this situation -- this is our first worry.

A simple instance of the second worry is the position observable of a particle on the real line that acts on wave functions as $\psi(x) \mapsto x\psi(x)$. There is no representation of this observable as a bounded self-adjoint operator with the continuum of real numbers as eigenvalues.

The third worry is a fact -- in the hundred years after the advent of quantum mechanics we still do not have a complete and compelling justification for the mathematical setup of our arguably most empirically successful theory. Moreover, this setup seems to be in serious trouble when we demand its compatibility with Special Relativity, as approached via Quantum Field Theory -- to date, no interacting QFT in $4$ space-time dimensions has been rigorously constructed.

There are different strategies for avoiding the first two worries, the last one, to our understanding, being an open problem. The perspective we would like to advocate for in this thesis begins with a standard move of generalizing the pure states to density operators, and a slightly less popular but still fairly standard one of generalizing observables to positive operator-valued measures, thus addressing the first two worries above. The third worry should perhaps be addressed by a suitable reconstruction of the framework (or its generalization). We suggest some research directions of this kind in chapter \ref{ch:discuss}.

\subsection*{State spaces}
Here we address the first worry of our list above, introduce the trace-class/bonded operators duality, the related topologies that we will use, and briefly discuss maps between algebras of operators preserving the state spaces.

\subsubsection{Density operators}

The first worry can be phrased mathematically as \emph{``the pure state space is not convex''}, meaning that we do not have a pure state corresponding to a convex combination of pure states like $\{p_i,\ket{\xi}\}_{i=1}^N$ with $p_i > 0$ and $\sum_{i=1}^N p_i = 1$. We can of course add the vectors weighted by their probabilities to get $\ket{\xi} = \ket{\sum_{i=1}^Np_i\xi_i}$ -- this is a \emph{superposed state} -- but the resulting state will not provide the expected probabilities. Indeed, using the notation from the previous section, we write
\[
    p_{\ket{\xi}}(\lambda_n) = |\langle \phi_n,\xi\rangle|^2 = \sum_{i=1}^N p_i^2|\langle \phi_n,\xi_i\rangle|^2 = \sum_{i=1}^Np_i^2p_{\ket{\xi_i}}(\lambda_n) \neq 
    \sum_{i=1}^Np_i p_{\ket{\xi_i}}(\lambda_n).
\]

A way out leads through the following observation. There is $1-1$ correspondence between pure states $\ket{\xi}$ and one one-dimensional projections $P_{\ket{\xi}} = \dyad{\xi}$. The expectation values that $\ket{\xi}$ assigns to operators $A \in B(\his)$ can then be expressed by $\langle \xi,A\xi \rangle = \tr[P_{\ket{\xi}}A]$. If we now instead of taking weighted combinations of vectors in the Hilbert space, do the same with the corresponding projection operators on the algebra $B(\his)$ level, we get the correct probabilities. Indeed, writing $\rho = \sum_{i=1}^N p_i P_{\ket{\xi_i}}$ we have
\[
    \tr[\rho P_{\lambda_n}] = 
    \tr[\sum_{i=1}^N p_i P_{\ket{\xi_i}}P_{\lambda_n}] = \sum_{i=1}^Np_i\tr[P_{\ket{\xi_i}}P_{\lambda_n}] = 
    \sum_{i=1}^N p_i|\langle \xi, \phi_n\rangle|^2 =
    \sum_{i=1}^N p_i p_{\ket{\xi_i}}(\lambda_n).
\]

Further, since the trace of a $1$-dimensional projection equals one, we have
\[
\tr[\rho] = \tr[\sum_{i=1}^N p_i P_{\ket{\xi_i}}] = \sum_{i=1}^N p_i \tr[P_{\ket{\xi_i}}] = \sum_{i=1}^N p_i,
\]
so that the condition $\sum_{i=1}^N p_i = 1$ translates to $\tr[\rho] = 1$. Moreover, positivity of $\rho$ is equivalent to $p_i > 0$ since for any vector $\eta \in \his$ we have
\[
\langle \eta, \rho \eta \rangle = \langle \eta, \sum_{i=1}^N p_i P_{\ket{\xi_i}} \eta \rangle = \sum_{i=1}^N p_i\langle \eta, P_{\ket{\xi_i}} \eta \rangle = \sum_{i=1}^N p_i|\langle \xi_i,\eta|^2\rangle,
\]

which is always non-negative only if all $p_i > 0$. Conversely, since trace-class operators are compact and thus come with an orthonormal basis of eigenvectors, any positive operator of trace one can be diagonalized and written in the form of $\rho = \sum_{n=0}^\infty p_n P_{\ket{\lambda_n}}$, thus representing a convex combination of pure states $\ket{\lambda_n}$. Summarizing, the general notion of a quantum state in the Hilbert space formalism, which allows for convex combinations of pure states, is that of a density operator.

\begin{definition}
    A density operator is a positive operator of trace one. 
\end{definition}

The density operators are then a subset of the trace-class operators. The span of one-dimensional projections is \emph{dense} in the \emph{Banach space} of trace-class operators (see below), which then appear naturally as the Banach space generated by the pure states. Assuming positivity and normalization then allows for the probabilistic interpretation. The set of density operators will be denoted by $\S(\his) \subset \T(\his)$, elements of which will be referred to as quantum states and denoted with small Greek letters.

\subsubsection{States-observables duality}

We will now take a closer look at the space of trace-class operators as it is the home of quantum states. It turns out that, like for positive numbers, we can take a square root of any positive operator, i.e. find an operator that squares to it.\footnote{We refrain from entering the functional calculus as it won't play any role in what follows as we would like to keep the presentation focused.} The space of trace-class operators becomes a Banach space on its own when equipped with the following norm $||A||_{tr} = \tr[\sqrt{A^*A}]$. This does not work for arbitrary operators -- the trace-class property is needed for the trace norm to be finite. Notice here that if $A$ is \emph{positive}, we have $\sqrt{A^*A}=A$ and the trace-norm is just the trace, i.e. $||A||_{tr} = \tr[A]$. Now given an arbitrary operator $A \in B(\his)$ we can define a functional
\[
\tr[\_A]: \T(\his) \ni \Omega \mapsto \tr[\Omega A] \in \mathbb{C}.
\]

This map is easily shown to be linear and bounded in the sense that $\sup_{\rho} \frac{|\tr[\rho A]|}{||\rho||_{tr}} = ||A|| \leq \infty$. The space of all bounded linear functionals on a Banach space $V$ is called the Banach dual space and will be denoted by $V^\bigstar$. It is always a Banach space itself when equipped with the supremum-norm $||\phi||_\infty = \sup_{v \in X} \frac{|\phi(v)|}{||v||}$ for $\phi: V \to \mathbb{C}$ as above. Perhaps surprisingly, one can prove that the map above is the most general kind of such a functional, i.e. that for any bounded linear $\phi: \T(\his) \to \mathbb{C}$ there is $A_\phi \in B(\his)$ such that $\phi(\Omega) = \tr[\Omega A_\phi]$ for any $\Omega \in \T(\his)$. The functional $\phi$ is real-valued iff $A_\phi$ is self-adjoint. The assignment $\phi \mapsto A_\phi$ is linear, bijective, and norm-preserving and thus provides a Banach space isomorphism
\begin{equation}\label{eq:stobsduality}
    \T(\his)^\bigstar \cong B(\his).
\end{equation}

The space $\T(\his)$ is then a \emph{predual} of $B(\his)$. One can show that such predual is unique in the sense that if there was a different Banach space with such property, it needs to be isomorphic to $\T(\his)$. The self-adjoint operators can then be understood as precisely the real-valued bounded linear functionals on $\T(\his)$, perhaps pointing to a special role played by the state spaces in the quantum mechanical formalism. The isomorphism above inspires a definition of convergence of operators that is suited to treating them as functionals on the states.

\begin{definition}
    A sequence of operators $\{A_0,A_1,\dots\}$ is \emph{ultraweakly convergent} if for any trace-class operator $\Omega \in \T(\his)$ the sequence of expectation values $\tr[\Omega A_n]$ converges in $\mathbb{C}$.
\end{definition}

Since $\S(\his)$ spans $\T(\his)$, ultraweak convergence can be equivalently defined with $\rho \in \S(\his)$ instead of $\Omega \in \T(\his)$. We say that a subset $\mathcal{O} \subseteq B(\his)$ is weakly closed if it contains all the limits of weakly converging sequences. This is a stronger requirement than the usual closeness as a metric space. Given $\mathcal{O} \subseteq B(\his)$ we can weakly close it by extending it to contain all the possibly missing limits. The result of this operation will be denoted by $\mathcal{O}^{cl}$. There is a corresponding notion of convergence in the spaces $\T(\his)$ and $\S(\his)$.

\begin{definition}\label{def:opertop}
    A sequence of trace-class operators (states) $\{\Omega_0,\Omega_1,\dots \}$ is \emph{operationally convergent} if for any bounded operator $A \in B(\his)$ the sequence of expectation values $\tr[\Omega_n A]$ converges in $\mathbb{C}$.
\end{definition}

The quantum states form a convex subset $\S(\his) \subseteq \T(\his)$ that is \emph{operationally closed}. The quantum states can also be seen as normalized bounded linear functionals $\varphi: B(\hi) \to \mathbb{C}$, and indeed this is the general definition used in the context of operator algebras. However, not every such $\varphi$ corresponds to a density operator via $\varphi_\omega: B(\hi) \ni A \mapsto \tr[\omega A] \in \mathbb{C}$. Those that do are called \emph{normal states} (on $B(\hi)$), and can be characterized by the following continuity condition:
\begin{definition}
    A state $\varphi: B(\hi) \to \mathbb{C}$ is \emph{normal} if for \emph{any orthogonal} family of projections $\{e_i\}_{i \in I}$ we have
    \[
        \varphi \left(\sum_{i \in I}e_i\right) = \sum_{i \in I} \varphi(e_i).
    \]
\end{definition}
For us, states are thus always \emph{normal} as functionals on $B(\hi)$.

\subsection*{Channels and preduals}

Here we distinguish some properties of bounded linear maps between algebras of operators on Hilbert spaces relevant in the context of state spaces.

\begin{definition}
    A bounded linear map $f: B(\hi_1) \to B(\hi_2)$ is called
    \begin{itemize}
        \item \emph{Unital} if it takes identity to identity, i.e. $f(\mathbb{1}_{\hi_1}) = \mathbb{1}_{\hi_2}$.
        \item \emph{Positive} if it takes positive elements to the alike.
        \item \emph{Completely positive} (CP) if all maps of the form $f \otimes \mathbb{1}_{\mathbb{C}^n}$
        are positive.
        \item \emph{Normal} if for \emph{any orthogonal} family of projections $\{e_i\}_{i \in I}$ in $B(\hi_1)$ we have
             \[
                f \left(\sum_{i \in I}e_i\right) = \sum_{i \in I} f(e_i).
             \]      
        \item \emph{Quantum channel} if it is unital, completely positive, and normal.
        \item \emph{Unitary channel} if it is given by a unitary operator $U:\hi_2 \to \hi_1$ via
        \[
        B(\hi_1) \ni A \mapsto U^* A U \in B(\hi_2).
        \]
    \end{itemize}
\end{definition}
 
As we see, states are exactly normal maps to complex numbers. Normal maps have an important property of admitting a \emph{predual} map: given $f: B(\hi_1) \to B(\hi_2)$ that is \emph{normal} and a state on $B(\hi_2)$ we can use $f$ to construct a state $\Omega \in \T(\hi_1)$ by simply composing
\[
    f_*: \T(\hi_2) \ni \Omega \mapsto \varphi_\Omega \circ f \in \T(\hi_1),
\]
where we have used the identification of density operators and \emph{normal} (bounded, linear) functionals on $B(\hi)$ mentioned above. Equivalently, we may think of the map $f_*$ as being the unique one making the following true
\[
    \tr[\Omega f(A)] = \tr[f_*(\Omega) A]
\]
for all $A \in B(\hi_1)$ and $\Omega \in \T(\hi_2)$. The map $f$ needs to satisfy the normality condition to map (normal) states to (normal) states. If we want $f_*$ to map states to states, it needs to preserve trace and positivity. As easily seen, $f_*$ preserve trace iff $f$ is unital. Indeed, we have
\[
    \tr[f_*(\Omega)] = \tr[f_*(\Omega) \mathbb{1}_{\S'}] = \tr[\Omega f(\mathbb{1}_{\S'})].
\]
Similarly, $f_*$ preserves positivity iff $f$ does, so that $f_*$ maps states to states iff $f$ is normal, unital and positive. It is a quantum channel iff also any map of the form $f\otimes \mathbb{1}_{\mathbb{C}^n}$ maps states to states. This is understood as making sure that if $\S$ is considered as a part of a bigger system (see below), the trivially extended $f$ map will still preserve states. It turns out, that this is a non-trivial condition. As a final comment we note that unitary channels are indeed quantum channels, and as easily confirmed the only \emph{invertible} ones.

\subsection*{Composite systems}

Given a pair of quantum systems $\S_1$ and $\S_2$, modeled on Hilbert spaces $\hi_1$ and $\hi_2$, a composite system, denoted $\S_1 \otimes \S_2$, is modeled on the \emph{tensor product} Hilbert space $\hi_1 \otimes \hi_2$. This is the Hilbert space spanned by the pairs of basis elements of $\hi_1$ and $\hi_2$, i.e. any element $v \in \hi_1 \otimes \hi_2$ can be written as
\[
    v = \sum_{n,m}\lambda_{n,m}f_n \otimes g_m,
\]
where $\{f_n\}_{n \in \mathbb{N}}$ and $\{g_m\}_{m \in \mathbb{N}}$ form basis in $\hi_1$ and $\hi_2$, respectively. This way any pair of (pure) \emph{states} $(\ket{\xi},\ket{\eta})$ can be considered a (pure) state of the composite system, with the dimension of the Hilbert space of $\S_1 \otimes \S_2$ equal the \emph{product} of the dimensions of $\hi_1$ and $\hi_2$. The full state space is generated from those as before by requiring convexity, yielding $\T(\hi_1 \otimes \hi_2) \cong \T(\hi_1 ) \otimes \T(\hi_2)$, where $\T(\hi_i)$ are treated as vector spaces on their own right. The states in $\S(\hi_1\otimes \hi_2) \subseteq \T(\hi_1 \otimes \hi_2)^{\rm{sa}}$ that can be written $\omega \otimes \rho$ (without the sum) are called \emph{product states}. The duality $[\T(\hi)]^\bigstar \cong [B(\hi)]$ gives
\[
    [\T(\hi_1 \otimes \hi_2)]^\bigstar \cong B(\hi_1) \otimes B(\hi_2).
\]
The natural \emph{inclusions} of algebras, e.g. for $B(\hi_1)$ the map
\[
    i_1: B(\hi_1) \ni A \mapsto A \otimes \mathbb{1}_2 \in  B(\hi_1) \otimes B(\hi_2)
\]
are normal, with the predual maps given by the partial trace, e.g. we have
\[
(i_1)_*: \T(\hi_1 \otimes \hi_2) \ni \Omega \mapsto \tr_2[\Omega] \in \T(\hi_1).
\]
Indeed for product states we get $\tr[(i_1)_*(\omega \otimes \rho)A] = \tr[\omega \otimes \rho A \otimes \mathbb{1}_2] = \tr[\omega A]$.

\subsection*{Positive operator-valued measures}

We now address the second worry of the elementary quantum mechanical setup, namely that it is not clear how to represent physically relevant observables as self-adjoint operators with eigenvalues understood as possible measurement outcomes. The solution that we adapt is a relatively simple one, we also find it conceptually compelling. The idea is that an observable on a quantum system should be primarily thought of as a map that takes quantum states to probability distributions, which is perfectly aligned with the epistemic content of the textbook quantum mechanics.

Let us begin by noting the following. One way of phrasing the famous spectral theorem for bounded self-adjoint operators is that for any $A \in B(\his)^{\rm{sa}}$ there is a closed, bounded (and hence compact), non-empty subset of the real line, called the \emph{spectrum of $A$} and denoted by $\sigma(A)$, such that $A$ can be written as
\begin{equation}\label{eq:spthmsa}
    A_\S = \int_{\sigma(A_\S)}\lambda dP^{A_\S}(\lambda),
\end{equation}
where $P^{A_\S}: \mathcal{B}(\sigma({A_\S})) \to B(\his)$ is the projection-valued measure associated with ${A_\S}$ and the integral is interpreted in terms of weak convergence. This means that $P^{A_\S}$ a map that assigns projections on $\his$ to the Borel subsets of $\sigma({A_\S})$ such that for any $\rho \in \S(\his)$ the map $\mathcal{B}(\sigma({A_\S})) \ni X \mapsto \tr[\rho P^{A_\S}(X)]$ is a probability measure on $\sigma({A_\S})$. It follows that $P^{A_\S}$ is normalized in that $P^{A_\S}(\sigma({A_\S}))=\mathbb{1}_{\his}$. Borel subsets are elements of the smallest $\sigma$-algebra containing all the opens. The measurable subsets $X \in \mathcal{F}(\Sigma_\S)$ represent \emph{propositions} about the system $\S$, and the numbers $p^{\E_\S}_\rho(X) \in [0,1]$ probabilities of these propositions being true given that the system $\S$ has been prepared in the state $\rho \in \S$. This extends the standard setting since the spectrum of a compact operator is the (discrete) set of eigenvalues and the integral above becomes the sum $\sum_{n=0}^\infty \lambda_n P_{\lambda_n}$ that we have seen before. The normalization condition of $P^{A_\S}$ is then equivalent to the fact that the eigenvectors span the Hilbert space $\his$. The equation \ref{eq:spthmsa} provides a $1-1$ correspondence between self-adjoint operators and such projection-valued measures. However, this is not enough, since e.g. the position operator that we mentioned is not bounded. The projection-valued measure perspective on self-adjoint operators is readily generalized as follows.

\begin{definition}\label{def:povms}
    A positive operator-valued measure (POVM) ${\E_\S}$ is a map 
    \[
        \mathcal{F}(\Sigma) \ni X \mapsto {\E_\S}(X) \in B(\his),
    \]
    where $\mathcal{F}(\Sigma)$ is a $\sigma$-algebra of subsets of $\Sigma$, such that for any state $\rho \in \S(\his)$ the assignment
    \[
       X \mapsto p^{\E_\S}_\rho(X) := \tr[\rho{\E_\S}(X)]
    \]
    is a (non-negative, countably additive) normalized measure on $\mathcal{F}(\Sigma)$. The operators ${\E_\S}(X)$ are referred to as the \emph{effects of ${\E_\S}$}. If all the effects of ${\E_\S}$ are projections, it is called a projection-valued measure (PVM), also referred to as \emph{sharp} POVM.
\end{definition}    

When probability measure is understood as a non-negative, countably additive normalized measure on a measurable space, this is the most general form of a quantum observable for which the maps $X^{\E_\S}: \S(\his) \ni \rho \mapsto p^{\E_\S}_\rho(X) \in [0,1]$ are \emph{continuous}. It follows that the effects of ${\E_\S}$ are positive operators of the norm at most one, satisfying ${\E_\S}(X\cup Y) = {\E_\S}(X)+{\E_\S}(Y)$ for disjoint subsets $X\cap Y = \emptyset$ and that we have ${\E_\S}(\Sigma) = \mathbb{1}_{\his}$. We refer to such general POVMs as (quantum) \emph{observables}. 

The set of effects on $\his$, i.e. positive operators of the norm at most one, will be denoted by $\mathcal{E}(\his) = \{\F_\S \in B(\his) | 0 \leq \F_\S \leq \mathbb{1}_{\his}\}$, where we introduced the order on $B(\his)$ given by $A \leq B$ iff $0 \leq B-A$ meaning that $B-A$ is positive. If we are interested only in probability distributions that arise by evaluating POVMs on quantum states, the effects in $\mathcal{E}(\his)$ are the only operators that we should really be concerned about. Such a radical attitude does not however change much since they generate the whole algebra of bounded operators as the span of the effects is dense in $B(\his)$.

Notice that sharp POVMs are more general than bounded self-adjoint operators since they can be defined on non-compact measurable spaces. Indeed, our motivating example of the position observable can be modeled as a sharp POVM on the $\sigma$-algebra of Borel subsets of the reals as
\begin{equation}\label{pozobs}
    \E^Q: \mathcal{B}(\mathbb{R}) \ni Y \mapsto M_{\chi_Y} \in B(L^2(\mathbb{R})),
\end{equation}
where $M_{\chi_Y}$ denotes the operator of multiplication by the characteristic function of the Borel subset $Y$, i.e. $M_{\chi_Y}\psi(x) = \chi_Y(x)\psi(x)$. Thus if the state of a particle is given as a wave function $\psi(x)$, the probability of
measuring its position in the subset $Y \subseteq \mathbb{R}$ is $\langle \psi,\chi_Y\psi \rangle = \int_{Y} |\psi(x)|^2 dx$, as desired. 

The setup of POVMs allows for a great variety of observables. In particular, they can be modeled on arbitrary measurable spaces (and not only on $\mathbb{R}$), which is necessary for the framework presented in this work.

\newpage
\section{Coherent frame-change maps}\label{app:ACT}
Here we review the approach to quantum reference frames and frame-change maps as presented in \cite{de2020quantum}. It is developed on the setup of a collection of identical ideal quantum reference frames. The idea is to pick one of these systems and use it as a reference for the description of all the others. Such a reference system is called an \emph{internal} quantum reference frame since it can be seen as a subsystem of the composite system composed by the whole collection. The notion of primary interest is that of a relative pure state, that is given with respect to a chosen reference system. Given such a relative state a specific prescription for assigning a relative state from the perspective of some other reference system from the collection is proposed and dubbed the `coherent quantum reference frame change'.

To stay rigorous, we will present this approach in the context of a \emph{finite} group $G$. We are concerned with a collection of $N$ identical quantum systems $\{\S_0,\S_1,\dots,\S_{N-1}\}$ each with Hilbert space $L^2(G)$. Lets pick a system and the numeration so that our first quantum reference frame is $\S_0$. The relative state from the perspective of $\S_0$ is a ray in $\hi_0 := \bigotimes_{j=1}^{n-1} L^2(G)_j$, and can be written as
\[
\hi_0 \ni \ket{\psi^0} = \sum_{\alpha^0 \in G^{\times(n-1)}} \lambda_{\alpha^0} \ket{\alpha^0},
\]

where $\alpha^0$ an $(n-1)$-tuple of elements of $G$, i.e. $\alpha^0 = (g^0_1,g^0_2,\dots,g^0_{n-1})$ and $\ket{\alpha^0}$ is defined as a product state $\ket{\alpha^0}:=\ket{g^0_1}\ket{g^0_2}\dots\ket{g^0_{n-1}} \in \hi_0$ so that $\ket{g^0_j} \in \hi_j \cong L^2(G)$. If we pick a different system, say one with the label $i$, as a reference, the state relative to it is a vector
\[
\hi_i:= \bigotimes_{i \neq j=0}^{n-1} L^2(G)_j \ni \ket{\psi^i} = \sum_{\alpha^i \in G^{\times(n-1)}} \lambda_{\alpha^i} \ket{\alpha^i}
\]

The `coherent QRF change' is then specified as a linear map $U_{0 \to i}: \hi_0 \to \hi_i$. Notice here, that it can be singled out by specifying an action on a basis, which is in bijection with the $(n-1)$-tuples. It is then enough to have a transformation that will map $\alpha^0 \mapsto \alpha^i$. Such a map is proposed and justified on the ground of classical considerations that we now recall.

\subsection*{Classical intuition}

Consider a pair of classical 3D reference frames in the usual physical understanding, perhaps made by three sticks with marks on them, glued together in a perpendicular way. They should not be considered as 'living in' an auxiliary Euclidean (or perhaps affine linear) space - we see them simply as physical systems, rigid bodies. Upon idealization they can be assumed identical. Such a pair of 3D classical reference frames is then always in a relative orientation, given by an element of the Euclidean group: frame $B$ is in an orientation $g \in E$ with respect to $A$ if $g$ represents the congruence, or operation (combination of translation, rotation and reflection), that needs to be preformed on $A$ to align it with $B$. We then write $A \overset{g}{\to} B$. Then clearly we have $B \overset{g^{-1}}{\to} A$, and $A \overset{e}{\to} A$: to align $B$ to $A$ we need to perform the inverse congruence, and each frame is always aligned, i.e. in an orientation $e \in E$, with respect to itself. Now consider a collection of $n$ such classical reference frames $\{A_0,A_1,\dots,A_{n-1}\}$. We can describe all of their relative orientations by picking one of them, say $A_0$, and providing the relative orientations $A_0 \overset{g^0_j}{\to} A_j$. Indeed, when asked for the relative orientation $A_i \overset{g^i_j}{\to} A_j$ of $A_j$ with respect to $A_i$, we can first align $A_j$ with $A_0$ by $(g^0_j)^{-1}$, and then align it with $A_i$ via $g^0_i$. This operation is given by the composition: $g^i_j = g^0_i \circ (g^0_j)^{-1}$. Classically speaking, the situation from the perspective of $A_0$ is then described by $n$ elements of the Euclidean group 
\[
\alpha^0_\T = (g^0_0 = e, 
g^0_1,g^0_2,\dots,g^0_{n-1}).
\]
Such a configuration, when considered from the perspective of the frame $A_i$, will be described by the following sequence of group elements: 
\[
\alpha^i_\T = (g^i_0,g^i_1,\dots,g^i_{n-1}) = ((g^0_i)^{-1}, g^0_1(g^0_i)^{-1},\dots, g^i_0(g^0_i)^{-1}=e,\dots, g^0_{n-1}(g^0_i)^{-1}).
\]
Classically, the procedure of changing the perspective, or the reference frame, from $A_0$ to $A_i$ is then given by acting on the right with $(g^0_i)^{-1}$ on all the group elements in $\alpha^0_\T$, which can be formally written as
\[
\alpha^i_\T := \alpha^0_\T \cdot (g^0_i)^{-1}.
\]

\subsection*{Coherent QRF change}

We now go back to the quantum realm. An analog of the configuration $\alpha^0_\T$ is the pure product state $\ket{\alpha^0_\T} := \ket{e^0_0}\ket{g^0_1}\ket{g^0_2}\dots\ket{g^0_{n-1}} \in \hi_\T$. It is then postulated that such a state will transform to $\ket{\alpha^i_\T}:=\ket{\alpha^0_\T \cdot (g^0_i)^{-1}}$. Employing this analogy is enough to define the map $U_{0 \to i}$. Indeed, given a state $\ket{\alpha^0} = \ket{g^0_1}\ket{g^0_2}\dots\ket{g^0_{n-1}}$ we simply need to extend it to $\ket{\alpha^0_\T} = \ket{e}^0_0\ket{\alpha^0}$, apply the right regular action $U_R(g^0_i): \ket{h} \mapsto \ket{h(g^0_i)^{-1}}$ to each factor like we would be dealing with a classical configuration, and `forget' the resulting $\ket{e}^i_i$ state that is absent from $\ket{\psi^i} \in \hi_i$. We thus have
\begin{equation}\label{ACTomchannel1}
\ket{\alpha^i} = \ket{(g^0_i)^{-1}} \otimes U_R(g^0_i)^{\otimes (n-2)}\ket{g^0_1}\dots \ket{g^0_{i-1}}\ket{g^0_{i+1}}\dots\ket{g^0_{n-1}}.
\end{equation}

More generally, given a state $\ket{\psi^0} = \sum_{\alpha^0} \lambda_{\alpha^0} \ket{\alpha^0}$ we need to apply this procedure to each term in the sum separately -- this is the `coherence' assumption. The map $U_{0 \to i}:\hi_0 \to \hi_i$ is then given by
\begin{equation}\label{ACTomchannel2}
U_{0 \to i} = \sum_{g \in G} \dyad{g^{-1}}{g} \otimes U_R(g)^{\otimes (n-2)},
\end{equation}

where $U_R(g)^{\otimes (n-2)}$acts on $\hi_S :=\bigotimes_{i \neq j} L^2(G)_j$ and $\dyad{g^{-1}}{g}: L^2(G)_i \to L^2(G)_0$.

\end{document}